\documentclass[aps,prl,10pt,twocolumn,superscriptaddress,floatfix,nofootinbib,showpacs,longbibliography]{revtex4-2}
\usepackage{makecell}
\usepackage{array}
\newcolumntype{C}[1]{>{\centering\arraybackslash}m{#1}}
\usepackage{booktabs,adjustbox}
\usepackage{float}
\usepackage{amsmath}
\usepackage[utf8]{inputenc}  
\usepackage[T1]{fontenc}     

\usepackage[table]{xcolor}
\usepackage{color,soul}
\usepackage[british]{babel}  
\usepackage[normalem]{ulem}

\usepackage[colorlinks=true, citecolor=magenta, urlcolor=blue]{hyperref}  
\usepackage{graphicx} 
\usepackage[babel]{microtype}  
\usepackage{amsmath,amssymb,amsthm,bm,amsfonts,mathrsfs,bbm} 

\usepackage{xspace}  
\usepackage{xcolor}
\usepackage{multirow}
\usepackage{array}
\usepackage{bigstrut}
\usepackage{braket}
\usepackage{color}
\usepackage{natbib}
\usepackage{multirow}
\usepackage{mathtools}
\usepackage{float}
\usepackage{blkarray}
\usepackage[caption = false]{subfig}
\usepackage{xcolor,colortbl}
\usepackage{color}
\usepackage{rotating}

\usepackage{tabularx}
\usepackage{tikz}
\usetikzlibrary{calc}

\newcommand{\Tr}{\operatorname{Tr}}

\newcommand{\be}{\begin{equation}}
\newcommand{\ee}{\end{equation}}
\newcommand{\ba}{\begin{eqnarray}}
\newcommand{\ea}{\end{eqnarray}}

\newcommand{\tr}{\operatorname{Tr}}

\newtheorem{theorem}{Theorem}
\newtheorem{corollary}{Corollary}
\newtheorem{definition}{Definition}
\newtheorem{proposition}{Proposition}
\newtheorem{observation}{Observation}

\newtheorem{construction}{Construction}
\newtheorem*{example*}{Example}
\newtheorem{remark}{Remark}
\newtheorem{lemma}{Lemma}

\newcommand{\Span}{{\mathsf{Span}}}

\def\>{\rangle}
\def\<{\langle}

\newtheorem{fact}{Fact}

\usepackage{fontawesome5}
\usepackage{eso-pic}
\usepackage{charter}

\usepackage{centernot}
\usepackage{subfig}
\usepackage{enumitem}

\newcommand{\ko}[1]{\textcolor{blue}{\uline{#1}}}

\newcommand{\SNBose}{Department of Physics of Complex Systems, S. N. Bose National Centre for Basic Sciences, Block JD, Sector III, Salt Lake, Kolkata 700 106, India.}

\newcommand{\ISI}{Physics and Applied Mathematics Unit,  Indian Statistical Institute Kolkata, 203 B.T. Road, 700108, India.}

\newcommand{\IITJ}{Indian Institute of Technology Jodhpur, Jodhpur 342030, India.}

\begin{document}

\title{Emergence and Recovery of ({\it logical}) Kochen–Specker Contextuality via Hamilton Extension}

\author{Jayashree Karmakar}\affiliation{\SNBose}
\author{Biswadeep Chatterjee}\affiliation{\SNBose}
\author{Rafiuddin Gazi}\affiliation{\SNBose}
\author{Anushko Chattopadhyay}\affiliation{\IITJ}
\author{Ananya Chakraborty}\affiliation{\SNBose}
\author{Snehasish Roy Chowdhury}\affiliation{\ISI}
\author{Amit Mukherjee}\affiliation{\IITJ}
\author{Manik Banik}\affiliation{\SNBose}

\begin{abstract}
\noindent Logical Kochen–Specker (KS) contextuality is widely regarded as an intrinsic property of specially constructed measurement configurations. We show instead that it can emerge from KS-colorable vector sets through a constructive procedure we call the Hamilton extension. Defined for four-dimensional vector sets, the Hamilton extension associates each real vector with a measurement context while inducing additional measurement contexts among Hamilton-extended children of distinct parent vectors. These emergent contexts fundamentally alter the compatibility structure, transforming KS-colorable configurations into KS-uncolorable ones and recovering logical contextuality lost under apex-vertex augmentation. We establish a sharp and optimal threshold: the Hamilton extension of every five-vector parent set remains KS-colorable, whereas suitably chosen six-vector parent sets already generate logical KS contradictions. Thus, six vectors constitute the smallest parent set capable of generating KS contradiction through this mechanism. Our results reveal a new structural route to contextuality, provide a systematic framework for constructing compact KS sets, and have implications for contextuality-based quantum information protocols and graph-theoretic approaches to nonclassicality.
\end{abstract}

\maketitle

\medskip
\noindent The Kochen–Specker (KS) theorem establishes that a quantum system of dimension three or higher defies any deterministic noncontextual hidden-variable explanation \cite{Kochen1967}—a single-system counterpart to Bell's no-go theorem of local realism for spatially separated systems \cite{Bell1964,Bell1966}. These no-go results and their experimental confirmations \cite{Freedman1972,Aspect1982-1,Aspect1982-2,Zukowski1993,Bartosik2009,Kirchmair2009,Amselem2009,Lapkiewicz2011} mark some of the most striking departures of quantum theory from classical worldviews \cite{Mermin1993,Brunner2014,Budroni2022}. KS contextuality manifests in a hierarchy of increasing strength, ranging from state-dependent contextuality (SD-C) through state-independent contextuality (SI-C) to the strongest form—logical KS contextuality—where no predetermined outcome assignment satisfies the \emph{exclusivity} and \emph{completeness} constraints on quantum measurements \cite{Budroni2022}. Beyond its foundational significance, contextuality has emerged as a key resource for diverse quantum-information tasks, including quantum computation, communication complexity, cryptography, randomness certification, self-testing, and zero-error communication \cite{Bechmann2000,Cubitt2010,Abbott2012,Howard2014,Vega2017,Bravyi2018,Bharti2019,Alimuddin2023,Gupta2023,Agarwal2026,Ambuj2026,Chattopadhyay2026}. These operational applications, together with its foundational significance, motivate a deeper investigation into the structural mechanisms governing the emergence, disappearance, and recovery of contextuality.

\noindent The KS proofs begin with collections of quantum measurements whose orthogonality relations already encode a contextual contradiction. In this paradigm, contextuality is revealed rather than generated. A fundamental and largely unexplored question is whether contextuality itself can be engineered: can one start from an entirely KS-colorable configuration and, by systematically extending its measurement compatibility structure, generate logical contextuality? Closely related is the problem of recovering contextuality once it has been eliminated by graph-theoretic operations. For example, if $\mathtt{G}$ is a logical KS graph, adjoining an apex vertex connected to every vertex of $\mathtt{G}$ produces the graph $\mathtt{G}+1$, which is always KS colorable and therefore eliminates the original logical contradiction. These questions address whether logical contextuality is an intrinsic property of a measurement configuration or can instead emerge through systematic modifications of its compatibility structure.

\noindent Here, we answer both questions affirmatively. Inspired by Hamilton's quaternion algebra~\cite{Hamilton1844}, we introduce a constructive framework, termed the \emph{Hamilton extension}, that associates a measurement context with every real vector in a four-dimensional parent set. Crucially, the extension induces additional, previously absent, measurement contexts through orthogonality relations between extensions of different parent vectors. These emergent contexts fundamentally alter the compatibility structure of the parent configuration, transforming previously KS-colorable vector sets into logically Kochen--Specker contextual ones. Our main results are threefold. First, we establish that logical KS contextuality can be generated—not merely revealed—by systematically extending measurement compatibility relations. In particular, we transform several KS-colorable parent sets into logically contextual configurations, thereby obtaining new KS-uncolorable vector sets. Second, we show that the same mechanism provides a systematic route for recovering logical KS contextuality destroyed by apex-vertex augmentation. Third, we establish a sharp threshold theorem: the Hamilton extension of every five-vector parent set remains KS-colorable, whereas a suitably chosen six-vector parent set already generates logical KS contextuality. Thus, six is the minimum parent-set cardinality for which this emergence mechanism can operate. Collectively, these results identify Hamilton extension as a structural mechanism for the emergence and recovery of logical KS contextuality. Beyond advancing our understanding of contextuality, this framework provides a systematic route to constructing compact KS sets and offers new directions for contextuality-based quantum information protocols and graph-theoretic characterizations of nonclassicality.

\begin{figure}[t!]
\centering
\includegraphics[width=1\linewidth]{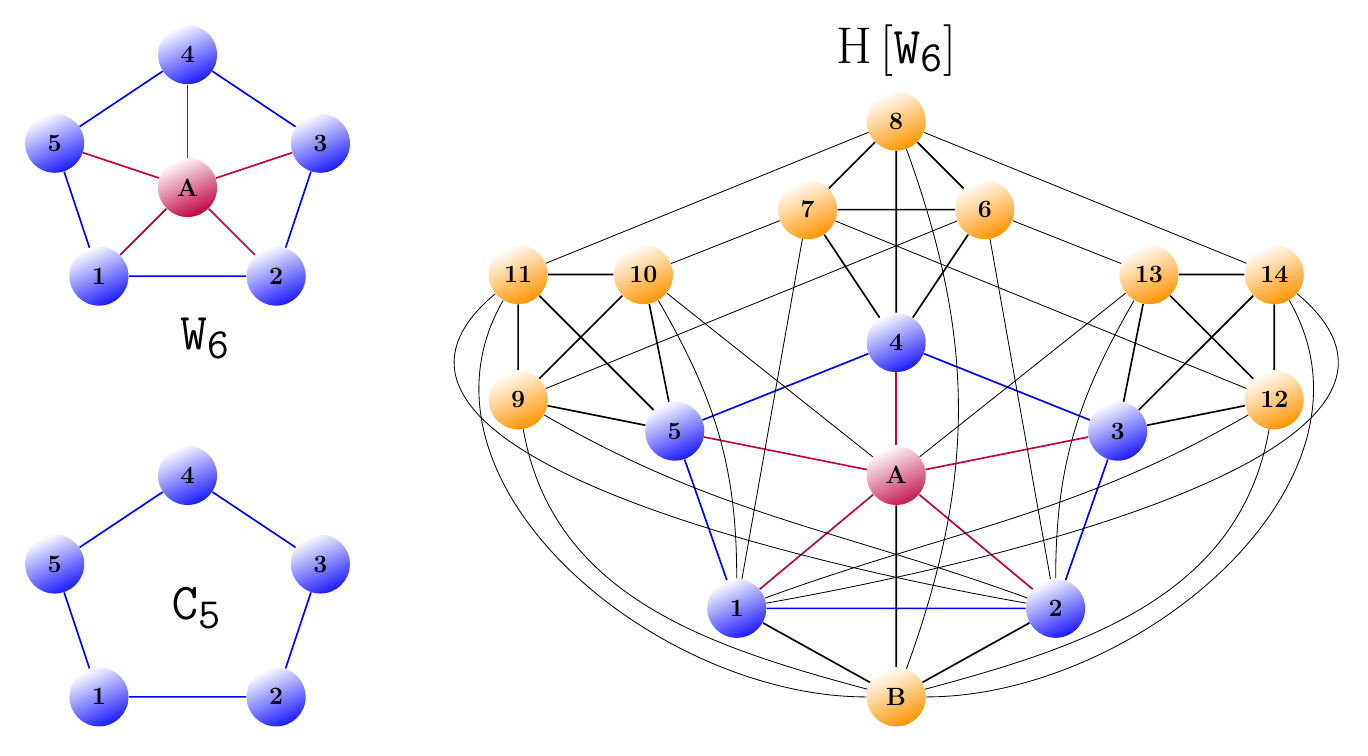}
\hrule\vspace{-.35cm}
\begin{align*}
&\hspace{.5cm}\mathtt{C}_{5}
\scriptstyle \equiv\big\{\tilde{v}_1=(0,1,0);~\tilde{v}_2=(0,0,1);~\tilde{v}_3=(1,-1,0);~\tilde{v}_4=(1,1,1);~\tilde{v}_5=(1,0,-1)\big\};\\ 
&~~~\mathtt{W}_{6}\scriptstyle\equiv
\left\{
\begin{aligned}
&\scriptstyle\Vec{v}_1=\mathtt{E}(\tilde{v}_1)=(0,1,0,0);~ \Vec{v}_2=\mathtt{E}(\tilde{v}_2)=(0,0,1,0);~\Vec{v}_3=\mathtt{E}(\tilde{v}_3)=(1,-1,0,0);\\
&\scriptstyle\Vec{v}_4=\mathtt{E}(\tilde{v}_4)=(1,1,1,0);~~\Vec{v}_5=\mathtt{E}(\tilde{v}_5)=(1,0,-1,0);~~ \Vec{v}_A=(0,0,0,1).
\end{aligned}
\right\};\\
&\mathrm{H}\left[\mathtt{W}_{6}\right]\scriptstyle \equiv\left\{
\begin{aligned}
& \scriptstyle\mathrm{H}[\Vec{v}_{\star}]=\left\{\begin{aligned}&\scriptstyle\text{Computational basis}\\
&\scriptstyle\text{vectors for}~ \star=1,2,A
\end{aligned}\right\},\scriptstyle\mathrm{H}[\Vec{v}_4]=\left\{
\begin{aligned}
&\scriptstyle(1,1,1,0),~~ (-1,1,0,1),\\
&\scriptstyle(1,0,-1,1),(0,-1,1,1)
\end{aligned}
\right\}\\
&\scriptstyle\mathrm{H}[\Vec{v}_3]=\big\{(1,\pm1,0,0),(0,0,1,\pm1)\big\},\mathrm{H}[\Vec{v}_5]=\big\{(1,0,\pm1,0),(0,1,0,\pm1)\big\}
\end{aligned}
\right\}.
\end{align*}
\vspace{-.25cm}\hrule\vspace{-.25cm}
\caption{(Color online) Illustration of apex augmentation and Hamilton extension. The cycle graph $\mathtt{C_{5}}$ (bottom left) admits an orthogonal representation in $\mathbb{R}^{3}$. Adding an apex vertex adjacent to all vertices of $\mathtt{C_{5}}$ yields the wheel graph $\mathtt{W_{6}}$ (top left), which admits an orthogonal representation in $\mathbb{R}^{4}$. The Hamilton extension $\mathrm{H}[\mathtt{W_{6}}]$ is shown on the right. Each parent vertex generates a Hamilton context, while additional edges give rise to emergent contexts, e.g. $\left\{\vec v_B,\vec v_2,\vec v_9,\vec v_{11}\right\}$.}\vspace{-.5cm}
\label{fig1}
\end{figure}

\begin{definition}\label{def1}
(Kochen--Specker set) A KS set is a finite collection of vectors $\mathcal{K}\subset\mathbb{C}^d$, equivalently rank-one projectors $\mathrm{P}_{\mathcal{K}}:=\{{\bf v}\equiv[\vec v]: \vec v\in\mathcal{K}\}$, in $d\ge3$, for which there exists no assignment $\mu:\mathrm{P}_{\mathcal{K}}\to\{0,1\}$ satisfying: (i) \textit{Exclusivity:} $\mu({\bf v})+\mu({\bf w})\le 1$ whenever $\vec v\perp\vec w$; and (ii) \textit{Completeness:} $\sum_{i=1}^{d}\mu({\bf v}_i)=1$ for every orthonormal basis $\{\vec v_i\}_{i=1}^{d}\subset\mathcal{K}$.
\end{definition}

\noindent Existence of such a set demonstrates the impossibility of assigning predetermined noncontextual outcomes to quantum measurements and thereby constitutes a logical proof of the KS theorem. The original construction of Kochen and Specker employed $117$ vectors in $\mathbb{C}^{3}$ \cite{Kochen1967}, and numerous smaller logical KS sets have subsequently been discovered in dimensions three and four \cite{Peres1991,Peres1993book,Bub1996,Penrose2000,Kernaghan1994,Cabello1996} (see also \cite{Budroni2022}). More generally, contextuality may also arise through state-independent or state-dependent contradictions that do not require the absence of KS colorings \cite{Yu2012,Bengtsson2012,Xu2015,Clifton1993,Klyachko2008}. In this work we focus exclusively on logical KS contextuality and introduce a mechanism through which logical KS contradictions can emerge from KS-colorable parent vector configurations.

\medskip
\noindent Our construction is based on a remarkable property of Hamilton's quaternion algebra \cite{Ebbinghaus1991}, which associates to every vector in $\mathbb{R}^{4}$ an orthogonal basis containing that vector. 

\begin{definition}\label{def2}
(Hamilton extension) For a vector $\Vec v=(a,b,c,d)\in\mathbb{R}^{4}$, the Hamilton extension of $\Vec v$, denoted by $\mathrm{H}[\Vec v]$, is the set $\mathrm{H}[\Vec v]=\left\{{\bf v}^{0},{\bf v}^{1},{\bf v}^{2},{\bf v}^{3}\right\}$, where ${\bf v}^{0}:=[(a,b,c,d)]$, ${\bf v}^{1}=[(-b,a,-d,c)]$, ${\bf v}^{2}=[(-c,d,a,-b)]$, ${\bf v}^{3}:=[(-d,-c,b,a)]$ are four mutually orthogonal rays. For a finite set $\mathcal V=\{\Vec v_i\}_{i=1}^{n}\subset\mathbb{R}^{4}$, its Hamilton extension is defined element-wise by $\mathrm{H}[\mathcal V]:=\bigcup_{i=1}^{n}\mathrm{H}[\Vec v_i]$.
\end{definition}

\noindent An explicit example illustrating this construction is shown in Fig.~\ref{fig1}. The following result establishes a key structural property of Hamilton extension (its proof is deferred to the Appendix).

\begin{proposition}\label{prop1}
For every vector set $\mathcal{V}\subseteq\mathbb{R}^{4}$, the Hamilton extension is idempotent, i.e. $\mathrm{H}\!\left[\mathrm{H}[\mathcal{V}]\right]=\mathrm{H}[\mathcal{V}]$.
\end{proposition}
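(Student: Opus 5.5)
The plan is to recast Definition~\ref{def2} in quaternionic language, so that idempotence becomes a consequence of the closure of the quaternion group $Q_8=\{\pm1,\pm i,\pm j,\pm k\}$ under multiplication. Identify $\vec v=(a,b,c,d)\in\mathbb{R}^4$ with the quaternion $q=a+bi+cj+dk$. The first step is to verify, by direct multiplication using $i^2=j^2=k^2=-1$, $ij=k$, $jk=i$, $ki=j$ and their anticommuting counterparts, that
\begin{equation*}
iq=(-b,a,-d,c),\qquad jq=(-c,d,a,-b),\qquad kq=(-d,-c,b,a).
\end{equation*}
Hence ${\bf v}^1=[iq]$, ${\bf v}^2=[jq]$ and ${\bf v}^3=[kq]$. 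Since overall signs are irrelevant for rays, this gives
\begin{equation*}
\mathrm{H}[\vec v]=\{[uq]:u\in Q_8\}.
\end{equation*}
Each of these identities is a routine four-term computation, which I would display once.

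Before going further, I have to make sure that $\mathrm{H}$ is well defined on rays, because $\mathrm{H}[\mathcal V]$ is a set of rays and we then apply $\mathrm{H}$ to it again. For a nonzero real scalar $\lambda$, real scalars are central in the quaternions, so $u(\lambda q)=\lambda(uq)$. It follows that $\mathrm{H}[\lambda\vec v]=\mathrm{H}[\vec v]$ as sets of rays, so $\mathrm{H}[{\bf w}]$ makes sense for a ray ${\bf w}$ by choosing any representative.

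The key step is then a one-line group argument. For any $u\in Q_8$,
\begin{equation*}
\mathrm{H}[uq]=\{[u'uq]:u'\in Q_8\}=\{[u''q]:u''\in Q_8\}=\mathrm{H}[q],
\end{equation*}
because right multiplication by $u$ is a bijection of the group $Q_8$. In words, the Hamilton extension of any child ${\bf v}^\alpha$ reproduces exactly the context $\mathrm{H}[\vec v]$ of its parent. Taking unions gives
\begin{equation*}
\mathrm{H}[\mathrm{H}[\mathcal V]]=\bigcup_{i}\bigcup_{\alpha=0}^{3}\mathrm{H}[{\bf v}_i^{\alpha}]=\bigcup_i\mathrm{H}[\vec v_i]=\mathrm{H}[\mathcal V].
\end{equation*}
The reverse inclusion $\mathrm{H}[\mathcal V]\subseteq\mathrm{H}[\mathrm{H}[\mathcal V]]$ is in any case trivial, since ${\bf w}\in\mathrm{H}[{\bf w}]$.

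I expect no real obstacle. The only points needing care are the sign bookkeeping in identifying the three maps with left multiplication by $i,j,k$ (as opposed to right multiplication, which would give different sign patterns), and the well-definedness on rays noted above. If the left-multiplication identification had failed, the fallback would be to check directly that the four $4\times4$ signed permutation matrices, together with their negatives, form a group isomorphic to $Q_8$.
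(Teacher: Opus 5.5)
Your proof is correct and follows essentially the same route as the paper. Both arguments identify the Hamilton children with left multiplication of $\theta(\vec v)$ by $1,\mathrm e_1,\mathrm e_2,\mathrm e_3$, and use closure of the quaternion units under multiplication up to sign to get $\mathrm H[\mathrm H[\vec v]]=\mathrm H[\vec v]$ for a single vector, then take unions; your bijection step is exactly the permutation argument of the paper's Lemma~\ref{t3l1}, and your explicit check of well-definedness on rays makes precise a point the paper only asserts in passing.
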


\noindent To analyze contextuality generated by Hamilton extension, it is convenient to work with orthogonality graphs. A vector set $\mathcal{V}\subseteq\mathbb{R}^{4}$ determines an orthogonality graph $\mathtt{G}_{\mathcal{V}}=(\mathtt{V},\mathtt{E})$, whose vertices correspond to vectors in $\mathcal{V}$ and where two vertices are adjacent whenever the corresponding vectors are orthogonal. In this representation, logical KS contextuality is equivalent to the nonexistence of a valid vertex-coloring $\mu:\mathtt{V}\to\{0,1\}$ satisfying the exclusivity and completeness constraints of Definition~\ref{def1} (also called KS coloring). The Hamilton extension of $\mathtt{G}_{\mathcal{V}}$, denoted by $\mathtt{G}_{\mathrm{H}[\mathcal{V}]}$, is the orthogonality graph associated with the Hamilton-extended vector set $\mathrm{H}[\mathcal{V}]$.

\medskip
\noindent Two vectors $\Vec u,\Vec v\in\mathbb{R}^{4}$ are said to be Hamilton-distinct if $\mathrm{H}[\Vec u]\neq \mathrm{H}[\Vec v]$. Each Hamilton-distinct vector generates a distinct orthogonal basis under Definition~\ref{def2}, and therefore a distinct measurement context, which we call a \emph{Hamilton context}. Crucially, additional measurement contexts may arise from orthogonality relations among vectors belonging to different Hamilton contexts. We refer to these as \emph{emergent contexts}. They are of two types: (i) type (2\text{-}2), consisting of two extended children from each of two Hamilton-distinct parent vectors, and (ii) type (1\text{-}1\text{-}1\text{-}1), consisting of one extended child from each of four mutually Hamilton-distinct parent vectors. The central theme of this work is that emergent contexts can qualitatively change the contextual structure of a vector set. In particular, they can generate logical KS contradictions in Hamilton extensions of parent sets that themselves admit consistent KS colorings.

\begin{figure*}[t!]
\centering
\begin{tabular}{ccc}
\includegraphics[height=5cm]{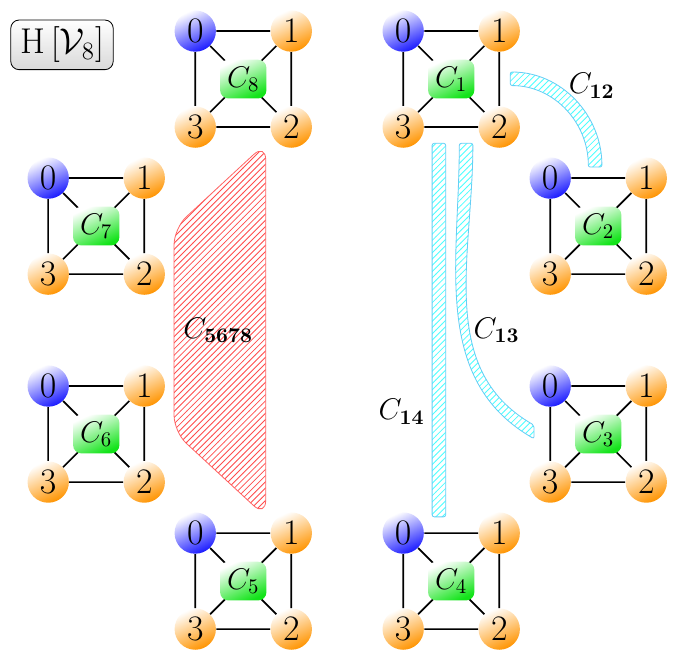}
\hspace{.5cm}
\includegraphics[height=5cm]{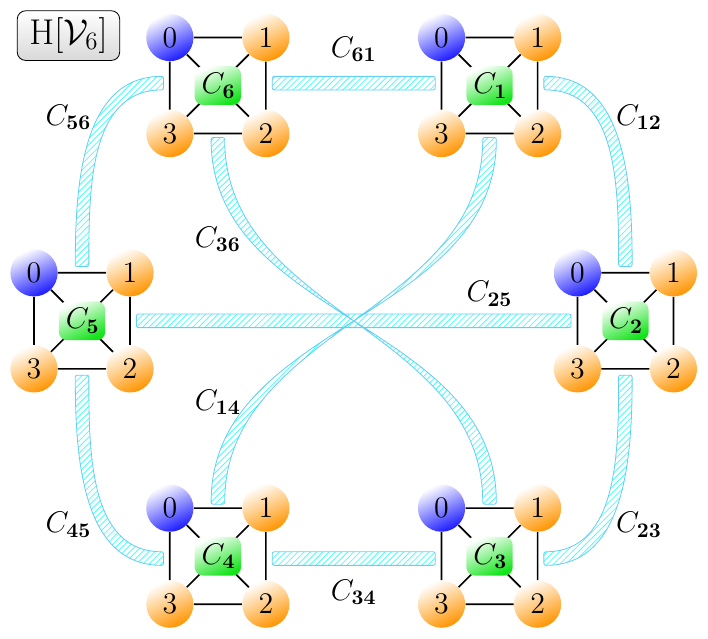}
\hspace{.5cm}
\includegraphics[height=5cm]{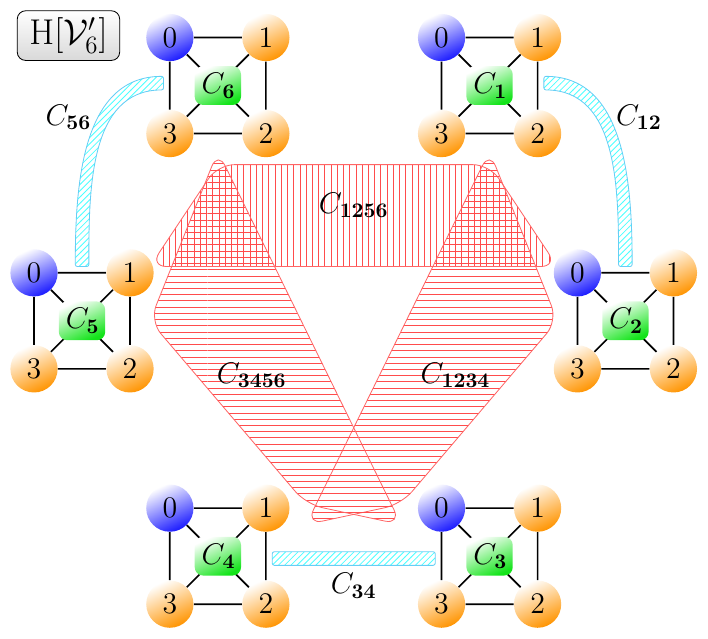}
\end{tabular}
\hrule\vspace{-.35cm}
\begin{align*}
&\hspace{.5cm}\scriptstyle\text{Hamilton contexts:}~C_i\equiv\mathrm{H}[\vec{u}_i]:=\left\{{\bf u}^0_i,{\bf u}^1_i,{\bf u}^2_i,{\bf u}^3_i\right\};~\text{Emergent contexts:}~C_{{\bf p}{\bf q}}^{\alpha\beta|\gamma\delta} :=\left\{{\bf u}_{\bf p}^{\alpha},{\bf u}_{\bf p}^{\beta},{\bf u}_{\bf q}^{\gamma},{\bf u}_{\bf q}^{\delta}\right\}~\big[\text{Type (2-2)}\big]~\text{and}~ C_{{\bf p}{\bf q}{\bf r}{\bf s}}^{\alpha|\beta|\gamma|\delta} :=\left\{{\bf u}_{\bf p}^{\alpha},{\bf u}_{\bf q}^{\beta},{\bf u}_{\bf r}^{\gamma},{\bf u}_{\bf s}^{\delta}\right\}~\big[\text{Type (1-1-1-1)}\big];\\
&\hspace{3cm}\scriptstyle\mathrm{H}\left[\mathcal{V}_8\right]:\left\{C^{01|01}_{\bf 12},~C^{23|23}_{\bf 12},~C^{02|02}_{\bf 13},~C^{13|13}_{\bf 13},~C^{03|03}_{\bf 14},~C^{12|12}_{\bf 14},~C^{1|0|2|3}_{\bf 5678},~C^{0|1|3|2}_{\bf 5678},~C^{2|3|1|0}_{\bf 5678},~C^{3|2|0|1}_{\bf 5678}\right\};\vspace{-.25cm}\\
&\scriptstyle\mathrm{H}\left[\mathcal{V}_6\right]:\left\{C^{03|03}_{\bf 12},~C^{12|12}_{\bf 12},~C^{02|13}_{\bf 23},~C^{13|02}_{\bf 23},~C^{03|03}_{\bf 34},~C^{12|12}_{\bf 34},~C^{02|13}_{\bf 45},~C^{13|02}_{\bf 45},~C^{03|03}_{\bf 56},~C^{12|12}_{\bf 56},~C^{02|13}_{\bf 61},~C^{13|02}_{\bf 61},~C^{01|23}_{\bf 14},~C^{23|01}_{\bf 14},~C^{01|23}_{\bf 25},~C^{23|01}_{\bf 25},~C^{01|23}_{\bf 36},~C^{23|01}_{\bf 36}\right\};\\\vspace{-.25cm}
&\hspace{1cm}\scriptstyle\mathrm{H}\left[\mathcal{V}^\prime_6\right]:\left\{C^{01|01}_{\bf pq},~C^{23|23}_{\bf pq},~C_{\bf 1234}^{\mu|\mu|\mu|\mu},C_{\bf 1256}^{0|1|0|0},C_{\bf 1256}^{1|0|1|1},C_{\bf 1256}^{2|3|2|2},C_{\bf 1256}^{3|2|3|3},C_{\bf 3456}^{0|1|3|2},C_{\bf 3456}^{1|0|2|3},C_{\bf 3456}^{2|3|1|0},C_{\bf 3456}^{3|2|0|1}~|~{\bf pq}={\bf 12},{\bf 34},{\bf 56}~~\&~~\mu=0,1,2,3\right\}.
\end{align*}
~\vspace{-.5cm}
\hrule\vspace{-.25cm}
\caption{(Color online) Hamilton extensions generating logical KS contextuality. Left: the extension $\mathrm{H}[\mathcal{V}_{8}]$ of Theorem~\ref{theo1}, yielding a new $32$-vector logical KS set in $\mathbb{R}^{4}$. Middle: the extension $\mathrm{H}[\mathcal{V}_{6}]$ of Theorem~\ref{theo2}, which coincides with the Peres-$24$ \cite{Peres1991}. Green, cyan, and red indicate Hamilton contexts, emergent contexts of Type $(2\text{-}2)$, and emergent contexts of Type $(1\text{-}1\text{-}1\text{-}1)$, respectively. The set $\mathtt{P}_{24}$ contains as subsets the KS constructions $\mathtt{K_{20}}$ \cite{Kernaghan1994} and $\mathtt{C_{18}}$ \cite{Cabello1996}. Right: 24-vector KS set $\mathrm{H}\left[\mathcal{V}^\prime_6\right]$. It contains 12 Type (1-1-1-1) and 6 Type (2-2) emergent contexts and structurally different than Peres-$24$ construction which contains 18 Type (2-2) emergent contexts; additional structural relations are discussed in the text and Appendix.} \vspace{-.25cm}
\label{fig2}
\end{figure*}

\medskip
\noindent {\it Emergence of logical KS contextuality.--} We begin with the $13$-vector construction $\mathtt{YO_{13}}$ due to Yu and Oh \cite{Yu2012}. Although this construction gives rise to a SI-C proof through an algebraic contradiction, its orthogonality graph nevertheless admits a valid KS coloring. The graph admits an orthogonal realization in $\mathbb{R}^{3}$. By adjoining an apex vertex $v_A$ adjacent to every vertex of $\mathtt{YO_{13}}$ one obtains the graph $\mathtt{YO_{14}}\equiv \mathtt{YO_{13}}+1$, which admits an orthogonal realization in $\mathbb{R}^{4}$ by embedding the vectors of $\mathtt{YO}_{13}$ into $\mathbb{R}^{4}$ and choosing the apex vector $\Vec v_A=(0,0,0,1)$. Remarkably, the apex operation weakens the contextual properties of the graph: as shown in Ref.~\cite{Cabello2015}, $\mathtt{YO_{14}}$ no longer exhibits SI-C and instead admits only a SD-C proof. The following theorem shows that Hamilton extension not only recovers the contextuality lost under apex augmentation but does so in its strongest possible form.

\begin{theorem}\label{theo1}
The graph $\mathrm{H}\!\left[\mathtt{YO_{14}}\right]$ does not admit a KS coloring and therefore exhibits logical KS contextuality.
\end{theorem}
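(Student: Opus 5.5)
The plan is to exhibit an explicit finite substructure of $\mathrm{H}[\mathtt{YO_{14}}]$ admitting no KS coloring, and to find it in the Hamilton extensions of the computational-basis-type parent vectors. First we fix the standard Yu--Oh realization in $\mathbb{R}^3$, embedded in $\mathbb{R}^4$ by appending a zero coordinate, together with $\vec v_A=(0,0,0,1)$. The thirteen vectors are $(1,0,0),(0,1,0),(0,0,1)$, the six vectors $(0,1,\pm1),(1,0,\pm1),(1,\pm1,0)$, and the four vectors $(\pm1,\pm1,1)$.

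Next we compute $\mathrm{H}[\vec v]$ for each parent via Definition~\ref{def2}. For the computational-basis parents, $\mathrm{H}$ returns the computational basis; by Proposition~\ref{prop1}, all four axis vectors collapse to one Hamilton context. The six parents of type $(1,\pm1,0,0)$ generate bases of the form $\{(1,\pm1,0,0),(0,0,1,\pm1)\}$ and their analogues, exactly as in Fig.~\ref{fig1}. Each of the four parents $(\pm1,\pm1,1,0)$ generates a basis of four vectors with entries in $\{0,\pm1\}$ and exactly one zero. The result is an explicit list of at most roughly $4+12+16=32$ rays. These are all rays in $\mathbb{R}^4$ with entries in $\{0,\pm1\}$, a subset of the Peres-type $\{0,\pm1\}$ family.

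We then enumerate all orthonormal bases contained in this list. These comprise the Hamilton contexts together with the emergent $(2\text{-}2)$ and $(1\text{-}1\text{-}1\text{-}1)$ contexts, identified by checking pairwise orthogonality. We aim to show that this hypergraph has no KS coloring.

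The route is parity-free, since not all vectors have one coordinate zero. We proceed by case analysis starting from the computational basis context. By symmetry under signed coordinate permutations preserving the set, which also preserve the induced contexts, we may assume $\mu(e_1)=1$ or $\mu(e_4)=1$; these are the two inequivalent cases, the apex coordinate being special. In each case, exclusivity forces every child orthogonal to the chosen axis to receive $0$. Completeness in the Hamilton contexts $\{(1,\pm1,0,0),(0,0,1,\pm1)\}$ then forces a $1$ among the remaining pairs. We propagate through the emergent contexts until some context has all four members colored $0$, or two orthogonal vectors are both colored $1$. Where the propagation branches, we branch on the binary choice within a $(2\text{-}2)$ context and show that both branches die.

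The main obstacle is the bookkeeping of the emergent contexts: the contradiction relies on them, not on the Hamilton contexts alone, since the disjoint Hamilton contexts are trivially colorable. I would first try to locate a known uncolorable subset of this $\{0,\pm1\}$ family, such as a Kernaghan- or Cabello-type configuration, inside the list. Its uncolorability, checked explicitly or by a short parity-style count over an odd number of contexts with each vector appearing an even number of times, would then give the result immediately. If no such subset is found, the exhaustive propagation above, possibly verified by computer, settles it.
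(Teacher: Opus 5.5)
Your identification of the $32$ rays is correct: they form $\mathrm{H}[\mathcal{V}_8]$, which is exactly the set of all $\{0,\pm1\}$-rays with at most three nonzero entries. The gap is that you never show a single branch dies, and the two shortcuts you propose cannot work for this list.

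\textbf{Why the shortcuts fail.} The list contains exactly $18$ bases:
\begin{itemize}
\item the axis basis;
\item six bases consisting of two axis rays and two two-nonzero rays;
\item three bases consisting of four two-nonzero rays;
\item eight bases consisting of four three-nonzero rays (the Hamilton contexts of $\vec u_5,\dots,\vec u_8$ and the four type $(1\text{-}1\text{-}1\text{-}1)$ contexts).
\end{itemize}
No basis mixes three-nonzero rays with the others. Moreover, completeness on all $18$ bases can be satisfied. Take $e_1,(1,1,0,0),(1,0,1,0),(1,0,0,1)$ on the first family and $\vec u_5^1,\vec u_6^1,\vec u_7^1,\vec u_8^1$ on the second. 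Hence there is no odd family of bases covering every ray an even number of times. So neither a parity count nor a copy of Kernaghan's or Cabello's configuration (both are parity proofs) can be found here. In addition, those configurations use Peres-24 rays of the form $(\pm1,\pm1,\pm1,\pm1)$, which are absent from your list. The contradiction must come from exclusivity between rays that share no context, such as $(1,1,0,0)\perp(1,-1,1,0)$. The reason a parity route is unavailable is this, not the zero pattern of the vectors.

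\textbf{The missing step, which is the paper's proof.} The list is invariant under all signed coordinate permutations. So one case suffices, and the apex coordinate is not special; set $\mu(e_1)=1$. This zeroes every ray with first entry $0$. The three type $(2\text{-}2)$ contexts $\{e_3,e_4,(1,\pm1,0,0)\}$, $\{e_2,e_4,(1,0,\pm1,0)\}$ and $\{e_2,e_3,(1,0,0,\pm1)\}$ then force
\[
\mu(1,\sigma,0,0)=\mu(1,0,\tau,0)=\mu(1,0,0,\rho)=1
\]
for some signs $\sigma,\tau,\rho$, giving eight branches. In each branch, exclusivity with these three rays and $e_1$ zeroes every three-nonzero ray except $(1,\sigma,\tau,0)$, $(1,\sigma,0,\rho)$ and $(1,0,\tau,\rho)$. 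Each three-nonzero ray lies in exactly two of the eight three-nonzero bases. So at most six of those bases contain a $1$, and completeness fails.

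The paper checks the eight branches one at a time. For each branch it names a context forced to be all zero: either one of $C_{\bf 5},C_{\bf 6},C_{\bf 8}$ or a type $(1\text{-}1\text{-}1\text{-}1)$ context. The counting argument above settles all eight branches at once.
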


\begin{proof}
(Sketch) The vector set underlying $\mathtt{YO_{14}}$ contains a subset 
\begin{align*}
\scriptsize\mathcal{V}_8=\left\{
\begin{aligned}
&\vec u_1=(0,0,0,1);~~\vec u_2=(1,-1,0,0);\vec u_3=(1,0,-1,0);\\
&\vec u_4=(0,1,-1,0);\vec u_5=(1,-1,1,0);\vec u_6=(-1,1,1,0);\\
&\hspace{1.5cm}\vec u_7=(1,1,-1,0);~\vec u_8=(1,1,1,0)
\end{aligned}
\right\}.
\end{align*}
of cardinality eight, wherein the vectors are pairwise Hamilton-distinct and $\mathrm{H}\!\left[\mathtt{YO_{14}}\right]=\mathrm{H}\!\left[\mathcal{V}_8\right]$. Consequently, $\mathrm{H}\!\left[\mathtt{YO_{14}}\right]$ consists of $32$ vectors arranged into eight Hamilton contexts. In addition, the Hamilton extension generates ten emergent contexts: six are of type $(2\text{-}2)$ and four are of type $(1\text{-}1\text{-}1\text{-}1)$ (see Fig.~\ref{fig2}). A detailed analysis of these Hamilton and emergent contexts, presented in the Appendix, shows that no KS coloring of $\mathrm{H}\!\left[\mathtt{YO_{14}}\right]$ exists, completing the proof.
\end{proof}

\noindent Notably, the set $\mathrm{H}\!\left[\mathcal{V}_8\right]$ gives rise to a previously unreported $32$-vector logical KS set in $\mathbb{R}^{4}$. The choice of the subset $\mathcal{V}_8\subset\mathtt{YO}_{14}$ is not unique: there exist $32$ distinct selections of pairwise Hamilton-distinct vectors that generate the same Hamilton extension.

\medskip
\noindent The Hamilton extension defined above applies to vector sets in $\mathbb{R}^{4}$. It admits a natural generalization to vector sets in $\mathbb{C}^{4}$. Let $\mathcal{V}=\{\Vec{v}_i\}_{i=1}^{n}=\{\Vec{v}^{\,r}_i\}_{i=1}^{k}\cup\{\Vec{v}^{\,c}_i\}_{i=k+1}^{n}\subset\mathbb{C}^{4}$, where the vectors $\{\Vec{v}^{\,r}_i\}_{i=1}^{k}$ have only real coefficients and the vectors $\{\Vec{v}^{\,c}_i\}_{i=k+1}^{n}$ possess genuinely complex coefficients. We define the generalized Hamilton extension by $\mathrm{H}_{g}[\mathcal{V}]:=\bigcup_{i=1}^{k}\mathrm{H}[\Vec{v}^{\,r}_i]\cup\{\Vec{v}^{\,c}_j\}_{j=k+1}^{n}$, which is nontrivial whenever $\mathcal V$ contains at least one real vector. The logical KS contradiction generated by $\mathcal V_8$ then immediately extends to a broad class of complex vector configurations.

\begin{corollary}\label{coro1}
Let $\mathcal{V}\subset\mathbb{C}^{4}$ contain $\mathcal{V}_8$ as a subset. Then the generalized Hamilton extension $\mathrm{H}_{g}[\mathcal{V}]$ is logically KS contextual.
\end{corollary}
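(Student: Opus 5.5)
The plan is to reduce the corollary to Theorem~\ref{theo1} through a monotonicity property of KS colorability: a KS set stays KS-uncolorable when further vectors are added. Concretely, if $\mathcal{K}\subseteq\mathcal{K}'$ and $\mathcal{K}$ admits no KS coloring, then $\mathcal{K}'$ admits none either. Indeed, suppose $\mu'$ were a KS coloring of $\mathcal{K}'$, and let $\mu$ be its restriction to $\mathrm{P}_{\mathcal{K}}$. Every orthogonal pair in $\mathcal{K}$ is still an orthogonal pair in $\mathcal{K}'$, so $\mu$ satisfies exclusivity. Every orthonormal basis contained in $\mathcal{K}$ is also a basis contained in $\mathcal{K}'$, so $\mu$ satisfies completeness. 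Hence $\mu$ would be a KS coloring of $\mathcal{K}$, which is a contradiction. The point is that both conditions in Definition~\ref{def1} refer only to pairs and bases lying \emph{inside} the set. Adding vectors therefore only adds constraints and never removes any.

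Next I will establish the inclusion $\mathrm{H}[\mathcal{V}_8]\subseteq \mathrm{H}_g[\mathcal{V}]$. Every vector of $\mathcal{V}_8$ has real coefficients. So if $\mathcal{V}_8\subseteq\mathcal{V}$, each $\vec u_i\in\mathcal{V}_8$ lies among the real vectors $\{\vec v^{\,r}_i\}$ of $\mathcal{V}$. By the definition of $\mathrm{H}_g$, the union defining $\mathrm{H}_g[\mathcal{V}]$ then contains every $\mathrm{H}[\vec u_i]$, and hence contains $\mathrm{H}[\mathcal{V}_8]=\bigcup_{i=1}^8\mathrm{H}[\vec u_i]$. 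One point needs care here: a real vector multiplied by a global complex phase should count as real, since rays are what matter. I will adopt this convention, or note that it is implicit in the definition, so that the classification into real and complex vectors is well defined on rays.

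The remaining input is that $\mathrm{H}[\mathcal{V}_8]$ is itself KS-uncolorable. This is exactly the content of the proof of Theorem~\ref{theo1}, which exhibits the $32$ vectors of $\mathrm{H}[\mathcal{V}_8]=\mathrm{H}[\mathtt{YO_{14}}]$ with its eight Hamilton contexts and ten emergent contexts and shows that no coloring exists. There is one subtlety to check. The orthogonality graph of $\mathrm{H}[\mathcal{V}_8]$ is computed with the real inner product, while in $\mathbb{C}^4$ orthogonality uses the Hermitian inner product. For real vectors the two agree, so every edge and every basis of $\mathrm{H}[\mathcal{V}_8]$ persists unchanged in $\mathrm{H}_g[\mathcal{V}]\subset\mathbb{C}^4$.

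Combining the three steps, monotonicity applied to $\mathrm{H}[\mathcal{V}_8]\subseteq\mathrm{H}_g[\mathcal{V}]$ gives the corollary. I do not expect a genuine obstacle, since the argument is structural. The only step needing attention is the well-definedness of the real/complex split up to phases, together with the compatibility of the real and Hermitian orthogonality relations on the subset $\mathrm{H}[\mathcal{V}_8]$.
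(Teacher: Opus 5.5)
Your proposal is correct and follows the paper's own argument: the paper likewise observes $\mathrm{H}[\mathcal{V}_8]\subseteq\mathrm{H}_{g}[\mathcal{V}]$ and then applies the hereditary property of KS non-colorability to the uncolorable set from Theorem~\ref{theo1}. Your explicit restriction argument and your remarks on phase conventions and on real versus Hermitian orthogonality fill in details that the paper leaves implicit.
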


\noindent Since $\mathrm{H}\!\left[\mathcal{V}_8\right]\subseteq\mathrm{H}_{g}\!\left[\mathcal{V}\right]$, the claim follows from the hereditary property of KS non-colorability: every super-set of a KS non-colorable set is itself KS non-colorable. This corollary reveals an unexpected connection between otherwise unrelated constructions. For example, the $18$-vector set $\mathtt{CN}_{18}\subset\mathbb{R}^{4}$ introduced by Cameron and Newman \textit{et al.} in the context of graph coloring game \cite{Cameron2007} contains $\mathcal V_8$ as a subset. Consequently, its Hamilton extension yields a $36$-vector logical KS set, despite originating from a framework entirely unrelated to $\mathtt{YO}_{14}$ (a more detailed exposition is presented in the Appendix).

\medskip
\noindent {\it Recovery of logical KS contextuality.--} Theorem~\ref{theo1} demonstrates both the emergence of logical KS contextuality from a KS-colorable parent configuration and the recovery of contextuality lost under apex augmentation. We now show that the recovery phenomenon is considerably more general. Consider a KS graph $\mathtt{G}\equiv(\mathtt{V},\mathtt{E})$. Construct a new graph $\mathtt{G}+1$ by adjoining an apex vertex $v_A$ adjacent to every vertex of $\mathtt{G}$, i.e., $v_A\sim v,~ \forall\, v\in\mathtt{V}$. The graph $\mathtt{G}+1$ always admits a valid KS coloring: assigning the value $1$ to $v_A$ and the value $0$ to all remaining vertices satisfies the exclusivity and completeness constraints. Consequently, apex augmentation destroys the original logical KS contradiction and renders $\mathtt{G}+1$ KS colorable. Now, given a vector $\Tilde{v}=(a,b,c)\in\mathbb{C}^{3}$, define its embedding in $\mathbb{C}^{4}$ by $\mathtt{Emb}:\Tilde{v}\mapsto \mathtt{E}(\Tilde{v})=(a,b,c,0)$. For a vector set $\mathcal{V}=\{\Tilde{v}_j\}\subset\mathbb{C}^{3}$, let
$\mathtt{E}(\mathcal{V})=\{\mathtt{E}(\Tilde{v}_j)\}\subset\mathbb{C}^{4}$. Let $\mathcal{V}_7=\{\Tilde{v}_i\}_{i=2}^{8}\subset\mathbb{R}^{3}$ be such that
$\mathtt{E}(\mathcal{V}_7)=\mathcal{V}_8\setminus\{\Vec u_1\}$. With this we have the following general result

\begin{corollary}\label{coro2}
Let $\mathcal{V}\subset\mathbb{C}^{3}$ contain $\mathcal{V}_7$ as a subset and let $\Vec v_A=(0,0,0,1)$. Then the generalized Hamilton extension of the KS-colorable set $\mathtt{E}(\mathcal{V})\cup\{\Vec v_A\}$ exhibits logically KS contextuality.
\end{corollary}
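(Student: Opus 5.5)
The plan is to reduce Corollary~\ref{coro2} to Corollary~\ref{coro1} by a direct containment argument. The key observation is that the set $\mathtt{E}(\mathcal{V})\cup\{\vec v_A\}\subset\mathbb{C}^{4}$ contains $\mathcal{V}_8$. Indeed, $\mathcal{V}_7\subseteq\mathcal{V}$ gives $\mathtt{E}(\mathcal{V}_7)\subseteq\mathtt{E}(\mathcal{V})$. By the defining property of $\mathcal{V}_7$ we have $\mathtt{E}(\mathcal{V}_7)=\mathcal{V}_8\setminus\{\vec u_1\}$. Finally, $\vec u_1=(0,0,0,1)=\vec v_A$. Hence
\begin{equation*}
\mathcal{V}_8=\mathtt{E}(\mathcal{V}_7)\cup\{\vec v_A\}\subseteq \mathtt{E}(\mathcal{V})\cup\{\vec v_A\}.
\end{equation*}

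Next I will record why the set $\mathtt{E}(\mathcal{V})\cup\{\vec v_A\}$ is KS-colorable, since the statement asserts this. Because $\mathtt{E}$ places a zero in the fourth coordinate, $\vec v_A$ is orthogonal to every $\mathtt{E}(\tilde v_j)$. So the orthogonality graph of $\mathtt{E}(\mathcal{V})\cup\{\vec v_A\}$ is exactly $\mathtt{G}_{\mathcal{V}}+1$, with $\vec v_A$ as apex; this uses that $\mathtt{E}$ preserves inner products. The apex-coloring argument given above then applies: set $\mu(\vec v_A)=1$ and $\mu=0$ elsewhere. Exclusivity holds because the only vertex valued $1$ is $\vec v_A$. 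Completeness holds because every orthonormal basis of four vectors in this set must contain $\vec v_A$. To see this, note that at most three mutually orthogonal vectors can lie in the three-dimensional subspace $x_4=0$.

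Finally, I apply Corollary~\ref{coro1} to $\mathtt{E}(\mathcal{V})\cup\{\vec v_A\}$, which contains $\mathcal{V}_8$. This yields that $\mathrm{H}_g[\mathtt{E}(\mathcal{V})\cup\{\vec v_A\}]$ is logically KS contextual. Behind that corollary sit three facts. First, the real vectors of the set include all eight elements of $\mathcal{V}_8$. Second, this gives $\mathrm{H}[\mathcal{V}_8]\subseteq\mathrm{H}_g[\cdot]$. Third, KS non-colorability (Theorem~\ref{theo1}) is hereditary under supersets.

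The only point needing care is the real/complex split in the definition of $\mathrm{H}_g$. The vectors in $\mathcal{V}_8$ must be classified as real so that they are Hamilton-extended. This holds because their embedded forms have real coordinates. Entries of $\mathcal{V}$ proportional to real vectors by a complex phase would, if anything, only enlarge the extension, so they are harmless. I expect no genuine obstacle, since the substance lies entirely in Theorem~\ref{theo1}.
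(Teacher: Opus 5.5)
Your proposal is correct and follows the paper's route. The paper likewise notes that $\mathtt{E}(\mathcal{V})\cup\{\vec v_A\}\supseteq\mathcal{V}_8$ (using $\vec u_1=\vec v_A$), so its generalized extension contains the KS-uncolorable set $\mathrm{H}[\mathcal{V}_8]$, and the claim follows from the hereditary nature of KS non-colorability. Your apex-coloring check, that every four-vector basis must contain $\vec v_A$, spells out the KS-colorability that the paper only argues in the surrounding discussion.
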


\noindent The proof follows immediately from the hereditary nature of KS non-colorability. This recovery phenomenon is exhibited by several well-known KS constructions.

\begin{example*}
For each of the logical KS constructions $\mathtt{P}_{33}$ of Peres \cite{Peres1991}, $\mathtt{CK}_{31}$ of Conway and Kochen \cite{Peres1993book}, and $\mathtt{B}_{33}$ of Bub \cite{Bub1996}, apex augmentation destroys logical KS contextuality by rendering $\mathtt{G}+1$ KS colorable, whereas its Hamilton extension $\mathrm{H}[\mathtt{G}+1]$ is KS non-colorable and therefore exhibits logical KS contextuality.
\end{example*}

\noindent Indeed, each of the vector sets $\mathtt{P}_{33}$, $\mathtt{CK}_{31}$, and $\mathtt{B}_{33}$ contains $\mathtt{YO}_{13}$ and therefore contains $\mathcal{V}_7$ as a subset \cite{Pavicic2023,Cabello2025,Pavicic2025}. Consequently, Corollary~\ref{coro2} implies that $\mathrm{H}_g[\mathtt{G}+1]$ is KS non-colorable for every $\mathtt{G}\in\left\{\mathtt{P}_{33},\mathtt{CK}_{31},\mathtt{B}_{33}\right\}$. Hence Hamilton extension restores the logical KS contextuality destroyed by apex augmentation.

\medskip
\noindent {\it Optimality of the parent-set cardinality.--} A natural question is whether the eight-vector set $\mathcal{V}_8$ appearing in Theorem~\ref{theo1} is minimal among parent sets whose Hamilton extensions exhibit logical KS contextuality. The following result shows that considerably smaller parent sets suffice.

\begin{theorem}\label{theo2}
There exists a six-vector set 
\begin{align*}
\scriptsize\mathcal{V}_6:=\left\{\begin{aligned}
&\vec u_1=(1,0,0,0);~~~~\vec u_2=(0,1,1,0);~~~~\vec u_3=(1,1,1,1);\\
&\vec u_4=(1,-1,0,0);~\vec u_5=(-1,1,1,1);~\vec u_6=(1,0,1,0)   
\end{aligned}\right\}
\end{align*}
whose Hamilton extension $\mathrm{H}[\mathcal{V}_6]$ exhibits logical KS contextuality.
\end{theorem}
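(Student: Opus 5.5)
The plan is to make the set $\mathrm{H}[\mathcal{V}_6]$ explicit, read off all of its four-element orthogonal bases, and then close the argument with a parity count. The parity count uses only the completeness constraint of Definition~\ref{def1}.

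\emph{Step 1: write out the extension.} Apply Definition~\ref{def2} to each of the six parent vectors. For example, $\vec u_1$ gives the computational basis, and $\vec u_2=(0,1,1,0)$ gives
\[
\mathrm{H}[\vec u_2]=\{(0,1,1,0),(1,0,0,-1),(1,0,0,1),(0,1,-1,0)\}.
\]
The vectors $\vec u_3$ and $\vec u_5$ give bases of $(\pm1,\pm1,\pm1,\pm1)$ type. The vectors $\vec u_4$ and $\vec u_6$ give bases built from pairs of the form $(1,\pm1)$ placed on complementary coordinate pairs. I would check that the six parents are pairwise Hamilton-distinct and that the resulting 24 rays are distinct. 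This identifies $\mathrm{H}[\mathcal{V}_6]$ with the Peres-24 set: the 4 computational vectors, the 12 vectors with two nonzero entries $\pm1$, and 8 of the $(\pm1,\pm1,\pm1,\pm1)$ rays.

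\emph{Step 2: enumerate the contexts.} I would list all orthonormal bases contained in the set. These are the six Hamilton contexts $C_i=\mathrm{H}[\vec u_i]$ together with the emergent contexts. The emergent ones are found by checking, for each pair $(\mathbf p,\mathbf q)$, which two children of $\mathbf p$ are orthogonal to which two children of $\mathbf q$. I expect this to give exactly the 18 type-(2-2) contexts listed in Fig.~\ref{fig2}, and I would confirm that no type-(1-1-1-1) contexts arise. This step is routine inner-product bookkeeping.

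\emph{Step 3: parity argument (main step).} Select an odd number of contexts, concretely nine, in which every vector that appears lies in an even number (two) of the selected contexts. Cabello's $18$-vector/$9$-basis configuration $\mathtt{C}_{18}$ sits inside Peres-24, so such a choice exists. Suppose $\mu$ is a KS coloring. Completeness forces $\mu$ to sum to $1$ on each selected context, so the total over the nine contexts is $9$, which is odd. On the other hand, each vector is counted exactly twice, so the same total equals $2\sum_{\mathbf v}\mu(\mathbf v)$, which is even. This contradiction shows that no KS coloring exists.

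The main obstacle is exhibiting the nine contexts explicitly in the paper's notation, as a choice among the $C_i$ and $C^{\alpha\beta|\gamma\delta}_{\mathbf{pq}}$. My first attempt would be to take $\mathtt{C}_{18}$ from the literature, map its vectors to labels $\mathbf u_i^\alpha$, and check that each of its bases is one of the listed contexts. If that identification fails, the fallback is to solve the GF(2) system directly: find an odd-size subset of the 24 context indicator vectors whose sum is zero modulo 2. That is a small linear-algebra computation. As a last resort, an exhaustive search over the $\{0,1\}$ assignments that respect the six Hamilton contexts ($4^6=4096$ cases) would settle the claim directly.
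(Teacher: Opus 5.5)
Your proposal is correct and follows the paper's route. The paper proves the theorem by observing that $\mathrm{H}[\mathcal{V}_6]$ is exactly the Peres-24 set, with 6 Hamilton and 18 type (2-2) contexts, and its Appendix B-3 supplies the explicit Cabello subset your parity count needs, namely the nine contexts $C^{03|03}_{\bf 12},C^{02|13}_{\bf 23},C^{03|03}_{\bf 34},C^{02|13}_{\bf 45},C^{03|03}_{\bf 56},C^{02|13}_{\bf 61},C^{01|23}_{\bf 14},C^{23|01}_{\bf 25},C^{01|23}_{\bf 36}$, which cover each of their 18 rays exactly twice; your parity argument on these bases simply makes self-contained the step the paper delegates to the known KS-uncolorability of Peres-24.
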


\noindent The proof follows from the observation that $\mathrm{H}[\mathcal{V}_6]$ coincide with Peres-$24$ construction $\mathtt{P_{24}}$, a well-known logical KS set \cite{Peres1991,Cubitt2010}. Notably, the choice of $\mathcal{V}_6$ is not unique: in fact, there exist $4096$ distinct six-vector parent sets whose Hamilton extensions coincide with $\mathtt{P_{24}}$. Interestingly, we now present a new construction of a 24-vector KS set.
\begin{construction}\label{const1}
The following set of six vectors 
\begin{align*}
\scriptsize\mathcal{V}^\prime_6:=\left\{\begin{aligned}
&\vec u_1=(1,0,0,0);~~~~\vec u_2=(0,0,1,1);~~~~\vec u_3=(0,\sqrt{2},1,-1);\\
&\vec u_4=(0,\sqrt{2},-1,1);~\vec u_5=(0,\sqrt{2},1,1);~\vec u_6=(0,-\sqrt{2},1,1) 
\end{aligned}\right\}
\end{align*}
consists of mutually Hamilton-distinct vectors. The Hamilton extension of the set $\mathrm{H}[\mathcal{V}^\prime_6]$ generates a 24-vector KS set that is not isomorphic to $\mathtt{P_{24}}$, thereby providing a new KS construction of the same cardinality (see Fig.~\ref{fig2}).
\end{construction}
\noindent A proof of this claim is presented in the Appendix. The examples $\mathcal{V}^\prime_6$ and $\mathcal{V}_6$ naturally motivate the question of whether five parent vectors are already sufficient to generate logical KS contextuality through Hamilton extension. The following theorem shows that this is not the case.

\begin{theorem}\label{theo3}
For every five-vector set $\mathcal{V}_5\subset\mathbb{R}^{4}$, the Hamilton extension $\mathrm{H}[\mathcal{V}_5]$ admits a KS coloring.
\end{theorem}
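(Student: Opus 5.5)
The plan is to reduce Theorem~\ref{theo3} to a finite combinatorial problem using the quaternion structure behind Definition~\ref{def2}, and then to exhibit a coloring for every combinatorial type that can occur. We identify $\mathbb{R}^4$ with the quaternions. Then, up to signs, $\mathrm{H}[\vec v]=\{v e_0,v e_1,v e_2,v e_3\}$ with $(e_0,e_1,e_2,e_3)=(1,i,j,k)$, each child obtained by a fixed one-sided multiplication by a unit. We label the children by the Klein group $\mathbb{Z}_2^2$, so that $e_a e_b=\pm e_{a+b}$.

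For two parents $u,v$ we write $w_{uv}=\bar v u$. Then $\langle u e_a, v e_b\rangle=\pm\,(w_{uv})_{a+b}$, the component of $w_{uv}$ along $e_{a+b}$. Consequently the whole $4\times 4$ orthogonality pattern between $\mathrm{H}[\vec u]$ and $\mathrm{H}[\vec v]$ is fixed by the zero set $Z_{uv}\subseteq\mathbb{Z}_2^2$ of the components of $w_{uv}$: child $a$ of $u$ is orthogonal to child $b$ of $v$ iff $a+b\in Z_{uv}$. Hamilton-distinct parents have $|Z_{uv}|\le 2$, since three zeros would force $w_{uv}\propto e_c$ and hence $\mathrm{H}[\vec u]=\mathrm{H}[\vec v]$; Hamilton-equal parents are merged beforehand, using Proposition~\ref{prop1}. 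Type $(2\text{-}2)$ emergent contexts exist exactly when $|Z_{uv}|=2$. Type $(1\text{-}1\text{-}1\text{-}1)$ contexts need four parents whose chosen indices make all six pairwise sums lie in the respective zero sets.

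A KS coloring then amounts to a choice $a_i\in\mathbb{Z}_2^2$ for each parent $i=1,\dots,5$, subject to three conditions:
\begin{itemize}
\item[(a)] \emph{exclusivity:} $a_i+a_j\notin Z_{ij}$ for all $i\neq j$;
\item[(b)] every $(2\text{-}2)$ context $\{ua,ua',vb,vb'\}$ contains exactly one chosen child;
\item[(c)] every $(1\text{-}1\text{-}1\text{-}1)$ context among four parents contains exactly one chosen child.
\end{itemize}
Every vertex of $\mathrm{H}[\mathcal{V}_5]$ is some child, and the value-$1$ children are exactly the chosen ones. Hence this choice reproduces Definition~\ref{def1}: completeness on Hamilton contexts is automatic, and on emergent contexts it is (b) and (c).

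The key simplification I would try first is that (b) and (c) follow from (a) whenever the chosen children are pairwise nonorthogonal and the chosen index pattern is ``generic''. A $(2\text{-}2)$ context $\{ua,ua',vb,vb'\}$ with $|Z_{uv}|=2$ is a basis. Whenever $a_u\notin\{a,a'\}$, the index $a_v$ is forced by (a) to avoid the complementary pair, which gives exactly one chosen element. So I would choose the $a_i$ so that each chosen child lies in a $(2\text{-}2)$ context of every relevant pair. For four-parent contexts, at most $\binom{5}{4}=5$ quadruples are involved, and each such context requires a pairwise orthogonal $4$-tuple. The counting argument is that forbidding an orthogonal $4$-tuple of chosen children is exactly condition (a) again. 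Condition (c) asks only that the chosen children hit each such tuple, and I would attempt to satisfy it by shifting all $a_i$ by a common $c\in\mathbb{Z}_2^2$. This shift preserves (a) and (b) in the pair structure, because $a_i+a_j$ is unchanged.

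The main obstacle is to show that the constraint system always has a solution for five parents. For the $Z_{ij}$ we have the cocycle-type relation $w_{ik}=w_{jk}w_{ij}$, up to conjugation and ordering conventions, which severely restricts which zero patterns $\{Z_{ij}\}$ are simultaneously realizable. I would enumerate realizable patterns up to the symmetries of the problem: permutations of parents, relabelling by the $\mathbb{Z}_2^2$ translations, and the automorphisms of the quaternion units permuting $i,j,k$. For each class, I would then show that a valid choice exists. A degree argument should handle most classes, since with five parents each index has at most $2$ forbidden values per neighbour. The remaining dense classes, where several pairs have $|Z|=2$ and four-parent contexts appear, would be settled by an explicit assignment, or by an exhaustive finite search over $4^5$ choices per pattern. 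The Peres-$24$ example of Theorem~\ref{theo2} shows that six parents can close a cycle of constraints that forces a contradiction; the case analysis must show that five parents cannot.
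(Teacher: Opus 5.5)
Your reduction is sound, and it is stronger than you state. Condition (b) needs no genericity. The two $(2\text{-}2)$ contexts of a pair with $|Z_{uv}|=2$ partition $\mathrm{H}[\vec u]\cup\mathrm{H}[\vec v]$. The chosen child of $u$ lies in one of them and is orthogonal to that context's two $v$-children, so (a) forces the chosen child of $v$ into the other context. Condition (c) follows by pigeonhole. A $(1\text{-}1\text{-}1\text{-}1)$ context and its three common shifts partition the sixteen children of the quadruple, and (a) puts the four chosen children into four different members.

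Your common-shift trick could never have helped. It moves every chosen child's context index by the same $c$, so ``each context is hit exactly once'' is shift-invariant. What you must add is that no other emergent contexts exist, i.e.\ Lemmas~\ref{t3l3} and \ref{t3l4}. With that, Theorem~\ref{theo3} is \emph{equivalent} to the existence of $a_1,\dots,a_5$ with $a_i+a_j\notin Z_{ij}$: one child from each Hamilton basis, pairwise non-orthogonal. A minor point: Definition~\ref{def2} multiplies on the left, so the relevant element is $\theta(\vec u)\overline{\theta(\vec v)}$ rather than $\bar v u$. The resulting structure is the same (cf.\ Lemma~\ref{t3l2}).

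The gap is that this solvability statement is the whole theorem, and you do not prove it. It is not a counting matter. Each parent has four neighbours, each excluding up to two of only four values, so no degree argument applies. The bare combinatorial system already fails for three parents with $Z_{12}=Z_{13}=Z_{23}=\{0,d\}$: with $a_1=0$, both $a_2$ and $a_3$ lie in the nontrivial coset of $\{0,d\}$, forcing $a_2+a_3\in\{0,d\}$. What excludes this is realizability: normalising $u_1=1$, one computes $Z_{23}=\mathbb{Z}_2^2\setminus\{0,d\}$.

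So the proof must rest on knowing which families $\{Z_{ij}\}$ five quaternions can actually produce, and your proposal supplies none of this. The identity $w_{ik}=w_{jk}w_{ij}$ does not determine the zero set of a product from those of its factors. The $4^5$ search only helps once the list of realizable patterns is known.

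The needed facts are of the following kind. Call a pair strong if $|Z|=2$ and weak if $|Z|=1$, and set $d=z_1+z_2$ for $Z=\{z_1,z_2\}$. With $u_1=1$, two parents both strong with $u_1$ are strong with each other, with the same $d$, if $d_{1j}=d_{1k}$; they have $Z_{jk}=\emptyset$ if $d_{1j}\neq d_{1k}$. You would also need analogous control of weak pairs and of $(1\text{-}1\text{-}1\text{-}1)$ quadruples.

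The paper extracts constraints of this sort in Observation~\ref{t3o1} and Lemmas~\ref{t3l6}--\ref{t3l10}:
\begin{itemize}
\item at most one quadruple carries $(1\text{-}1\text{-}1\text{-}1)$ contexts;
\item a fifth parent forms $(2\text{-}2)$ contexts with at most two of its members;
\item $(2\text{-}2)$ relations inside the quadruple come in complementary pairs.
\end{itemize}
It then reduces to the parametrized Cases 1--4 of part (I) and Cases 1--2 of part (II), colors each explicitly, and invokes Fact~\ref{fact1} for configurations with fewer orthogonalities. Your reduction would turn each such coloring into a check of ten inner products. Without a classification of this kind, though, your proposal establishes the reduction, not the theorem.
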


\noindent The proof is given in the Appendix.  Theorems~\ref{theo2} and \ref{theo3} establish a sharp threshold for the emergence of logical KS contextuality under Hamilton extension. While every five-vector parent set yields a KS-colorable Hamilton extension, a suitably chosen six-vector parent set already generates the Peres-$24$ construction. Combined with the result of Xu \textit{et al.} \cite{Xu2020} that every KS set contains at least $18$ vectors (confirming a conjecture of Peres \cite{Peres2003}), this identifies six as the minimum parent-set cardinality capable of generating logical KS contextuality through Hamilton extension.

\noindent Several additional structural features are worth noting. First, the Peres-$24$ set is itself fixed under Hamilton extension: $\mathrm{H}\left[\mathtt{P_{24}}\right]=\mathtt{P_{24}}$. Second, both the $20$-vector set $\mathtt{K}_{20}$ of Kernaghan \cite{Kernaghan1994} and the $18$-vector set $\mathtt{C}_{18}$ of Cabello \textit{et al.} \cite{Cabello1996} occur as subsets of $\mathrm{H}\left[\mathcal{V}_6\right]$. Finally, the parent sets $\mathcal{V}_6$ and $\mathcal{V}'_6$ are irreducible in the sense that removing any vector yields a five-vector set whose Hamilton extension is KS-colorable by Theorem~\ref{theo3}. In contrast, the set $\mathcal{V}_8$ (Theorem~\ref{theo1}) is only weakly irreducible. 
\begin{construction}\label{const2}
The Hamilton extension $\mathrm{H}[\mathcal{V}_8\setminus\{\vec u\}]$ is KS-uncolorable for $\vec u\in\{\vec u_2,\vec u_3,\vec u_4\}$, and thus provide examples of irreducible seven-vector parent sets. However, for $\vec u\in\{\vec u_1,\vec u_5,\vec u_6,\vec u_7,\vec u_8\}$ the extension $\mathrm{H}[\mathcal{V}_8\setminus\{\vec u\}]$ is KS-colorable. Thus, $\mathcal{V}_8$ is not irreducible in the strict sense, but only weakly irreducible.    
\end{construction}

\noindent {\it Discussion.--} We have introduced the Hamilton extension—a constructive transformation that associates a measurement context with each real vector of a four-dimensional parent set while simultaneously generating additional contexts through orthogonality relations among the extended vectors. These emergent compatibility relations fundamentally reshape the measurement structure, enabling logical Kochen--Specker (KS) contextuality to arise from configurations that are themselves KS colorable and restoring logical contradictions eliminated by apex-vertex augmentation. Hamilton extension is therefore not merely a dimensional embedding but a genuinely nontrivial structural transformation capable of generating new contextual phenomena.

\noindent Our results also reveal a sharp threshold governing this emergence mechanism. While the Hamilton extension of every five-vector parent set remains KS colorable, suitably chosen six-vector parent sets already generate logical KS contextuality. Thus, six is the minimum parent-set cardinality for which contextuality can emerge through Hamilton extension. This observation suggests an alternative perspective on KS constructions: instead of classifying them solely by the size of the resulting KS set, one may classify them by the complexity of the underlying parent configuration and the transformation that generates the contradiction. In this sense, Hamilton extension provides a new organizing principle based on generative structures rather than final configurations.

\noindent More broadly, our work points toward a new paradigm in which contextuality is understood as an emergent property of systematically transformed compatibility structures. It raises several natural directions for future investigation, including higher-dimensional extension procedures based on other algebraic or orthogonal-design frameworks, graph-theoretic criteria characterizing when emergent compatibility relations produce logical KS contradictions, and a classification of minimal irreducible parent sets. The ability to systematically generate compact KS sets may also provide new building blocks for contextuality-based quantum information protocols, including self-testing, device certification, and randomness generation, where smaller and structurally controlled contextual configurations can offer practical advantages. More generally, Hamilton extension provides a new framework for exploring the interplay between algebraic constructions, graph-theoretic characterizations of contextuality, and their operational applications in quantum information.

\medskip
\noindent{\bf Acknowledgement}: We thankfully acknowledge useful discussions with Kunika Agarwal and Subhendu B Ghosh. JK, BC, and SRC acknowledge support from University Grants Commission, India (Reference no. 231620167137, 241620129062, and 211610113404, respectively). MB acknowledges the financial support through the National Quantum Mission (NQM) of the Department of Science and Technology, Government of India.


%

\newpage
\onecolumngrid

\section{Appendix A: Hamilton Extension---Mathematical Structures and Measurement Contexts}

\noindent Before presenting our proofs, in this section, we will first analyze several mathematical structures of Hamilton extension that will be useful in our subsequent arguments.

\subsection{A-1. Hamilton Extension of Projectors}  

\noindent Since we are dealing with Kochen-Specker contextuality, often it is customary to work in the language of projectors instead of vectors. Let $\mathbb{RP}^3$ denote the real projective space of rays in $\mathbb{R}^4$. For a nonzero vector $\vec v\in\mathbb{R}^4$, let $[\vec v]$ denote the ray generated by $\vec v$. The Hamilton extension of a ray $[\vec v]\in\mathbb{RP}^3$ is the ordered set
\begin{align}
\mathrm H\big[[\vec v]\big]:=\big\{[\vec v]^\alpha=[\mathrm e_\alpha\theta(\vec v)]:\alpha=0,1,2,3\big\}.\end{align}
This definition is independent of the choice of representative $\vec v$. For notational convenience, we subsequently write $[\vec v]=\bf v$, and thus above equation reads as
\begin{align}
\mathrm H[{\bf v}]:=\big\{{\bf v}^\alpha=[\mathrm e_\alpha\theta(\vec v)]:\alpha=0,1,2,3\big\}.   
\end{align}

\subsection{A-2. Quaternion Representation} 
\noindent For any vector $\vec v=(a,b,c,d)\in\mathbb{R}^{4}$, define the linear isomorphism $\theta:\mathbb{R}^{4}\rightarrow\mathbb{H}$ by 
\begin{align}
(a,b,c,d)\simeq \theta(\vec v):=
a+b~\mathrm e_{1}+c~\mathrm e_{2}+d~\mathrm e_{3},    
\end{align}
where $\mathbb H$ denotes the quaternion algebra generated by
$\{\mathrm e_{1},\mathrm e_{2},\mathrm e_{3}\}$ satisfying
\begin{align}\label{quatmulrul}
\mathrm e_i\mathrm e_j=-\delta_{ij}+\epsilon_{ijk}\mathrm e_k .    
\end{align}
Here, $\delta_{ij}$ and $\epsilon_{ijk}$ denote the Kronecker delta and Levi--Civita symbol, respectively. Under this isomorphism, the Hamilton extension of a vector $\vec v$ can be written compactly as
\begin{align}
\mathrm H[\vec v]\simeq\big\{\mathrm e_\alpha\theta(\vec v)\,:\,\alpha=0,1,2,3\big\},  
\end{align}
where, and $\mathrm e_0:=1$. Namely, 
\begin{align}
\left.\begin{aligned}
&\mathrm e_0\theta(\vec v)=~~a+b~\mathrm e_1+c~\mathrm e_2+d~\mathrm e_3\simeq (a,b,c,d):=\vec v^0,\\ 
&\mathrm e_1\theta(\vec v)=-b+a~\mathrm e_1-d~\mathrm e_2+c~\mathrm e_3\simeq (-b,a,-d,c):=\vec v^1,\\ 
&\mathrm e_2\theta(\vec v)=-c+d~\mathrm e_1+a~\mathrm e_2-b~\mathrm e_3\simeq (-c,d,a,-b):=\vec v^2,\\
&\mathrm e_3\theta(\vec v)=-d-c~\mathrm e_1+b~\mathrm e_2+a~\mathrm e_3\simeq (-d,-c,b,a):=\vec v^3.
\end{aligned}\right\}.
\end{align}
From now on we will use these different representations (vector, projector, and quaternion) interchangeably.

\begin{lemma}\label{t3l1}
Let ${\bf u},{\bf v}\in\mathbb{RP}^3$. If ${\bf u}\in \mathrm H[{\bf v}]$, then $\mathrm H[{\bf u}]=\mathrm H[{\bf v}]$.
\end{lemma}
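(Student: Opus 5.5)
The plan is to work entirely in the quaternion picture of A-2. There the Hamilton extension of a ray $\mathbf v=[\vec v]$ is the set of rays $[\mathrm e_\alpha q]$, $\alpha=0,1,2,3$, with $q=\theta(\vec v)$. The key observation is that this set is exactly the set of rays spanned by the left multiples $\{xq : x\in Q\}$ of $q$ by the finite subgroup
\[
Q=\{\pm 1,\pm\mathrm e_1,\pm\mathrm e_2,\pm\mathrm e_3\}\subset\mathbb H^\times .
\]
Passing to rays absorbs the signs, so $\mathrm H[\mathbf v]=\{[xq]:x\in Q\}$. That $Q$ is closed under multiplication follows directly from the multiplication rule \eqref{quatmulrul}: $\mathrm e_i\mathrm e_j=-\delta_{ij}+\epsilon_{ijk}\mathrm e_k$ always lies in $Q$.

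Next, suppose $\mathbf u\in\mathrm H[\mathbf v]$. Then $\theta(\vec u)=\lambda\,\mathrm e_\beta q$ for some $\beta\in\{0,1,2,3\}$ and some nonzero real $\lambda$. For each $\alpha$ we get
\[
\mathrm e_\alpha\theta(\vec u)=\lambda\,(\mathrm e_\alpha\mathrm e_\beta)\,q .
\]
This uses associativity of $\mathbb H$ and the fact that real scalars are central. Since $\mathrm e_\alpha\mathrm e_\beta=\pm\mathrm e_\gamma$ for some $\gamma$, each ray $[\mathrm e_\alpha\theta(\vec u)]$ equals $[\mathrm e_\gamma q]\in\mathrm H[\mathbf v]$. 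Hence $\mathrm H[\mathbf u]\subseteq\mathrm H[\mathbf v]$.

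For the reverse inclusion, note that left multiplication by $\mathrm e_\beta$ permutes $Q$, because $Q$ is a group. So as $\alpha$ ranges over $0,\dots,3$, the products $\mathrm e_\alpha\mathrm e_\beta$ hit all four classes $\pm\mathrm e_\gamma$. Concretely, the map $\alpha\mapsto\gamma$ is a bijection of $\{0,1,2,3\}$. Alternatively, both sets consist of four mutually orthogonal rays: orthogonality holds because $\mathrm H[\vec u]$ is an orthogonal basis for any nonzero $\vec u$, as stated in Definition~\ref{def2}. So the inclusion of a 4-element set into a 4-element set forces equality. I would also record briefly that everything is independent of the representative, which is already noted in A-1.

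The only step needing care is the bookkeeping that $\mathrm e_\alpha\mathrm e_\beta$ is always $\pm$ a single basis unit. This includes the cases $\alpha=0$ or $\beta=0$, and $\alpha=\beta$, where the product is $-1=-\mathrm e_0$. This is immediate from \eqref{quatmulrul}. I do not expect any real obstacle; the lemma is essentially the statement that $\mathrm H[\mathbf v]$ is the orbit of $\mathbf v$ under the left action of the quaternion group $Q$, and orbits are either equal or disjoint. Proposition~\ref{prop1} then follows by applying the lemma elementwise.
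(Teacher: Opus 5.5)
Your argument is correct and is essentially the paper's proof. You write $\theta(\vec u)=\lambda\,\mathrm e_\beta\theta(\vec v)$, use $\mathrm e_\alpha\mathrm e_\beta=\pm\mathrm e_\gamma$ with $\alpha\mapsto\gamma$ a permutation of $\{0,1,2,3\}$, and absorb the signs into rays. Your orbit-under-$Q$ framing and the four-orthogonal-rays counting alternative just repackage that step. One tiny slip: the products $\mathrm e_\alpha\mathrm e_\beta$ come from \emph{right} multiplication by $\mathrm e_\beta$ on $Q$, not left, but either is a bijection of the group, so nothing changes.
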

\begin{proof}
Since ${\bf u}\in \mathrm H[{\bf v}]$, there exist
$\beta\in\{0,1,2,3\}$ and $\lambda\neq0$ such that $\theta(\vec u)=\lambda\,\mathrm e_\beta\theta(\vec v)$, where $\vec u$ and $\vec v$ are representatives of $u$ and $v$, respectively (throughout the paper we will consider the representative vectors to be normalized). Hence
\begin{align}
\mathrm H[{\bf u}]&=\left\{
[\mathrm e_\alpha\theta(\vec u)]:\alpha=0,1,2,3\right\}=\left\{[\mathrm e_\alpha\mathrm e_\beta\theta(\vec v)]:\alpha=0,1,2,3\right\}.    
\end{align}
By the quaternion multiplication rules, for every fixed $\beta\in\{0,1,2,3\}$ and every $\alpha\in\{0,1,2,3\}$ there exist $\gamma\in\{0,1,2,3\}$ and $\mathrm{s}\in\{\pm1\}$ such that $\mathrm e_\alpha\mathrm e_\beta=\mathrm s\,\mathrm e_\gamma$. Moreover, the map $\alpha\mapsto\gamma$ is a permutation of
$\{0,1,2,3\}$. Since $[\mathrm s\,\mathrm e_\gamma\theta(\vec v_2)]=[\mathrm e_\gamma\theta(\vec v_2)]$, it follows that
\begin{align}
\mathrm H[{\bf u}]&=\left\{[\mathrm e_\alpha\mathrm e_\beta\theta(\vec v)]:\alpha=0,1,2,3\right\}=\left\{[\mathrm e_\gamma\theta(\vec v)]:\gamma=0,1,2,3\right\}=\mathrm H[{\bf v}].   
\end{align}
This completes the proof. 
\end{proof}

\begin{definition}
Two rays ${\bf u},{\bf v}\in\mathbb{RP}^3$ are said to be
\emph{Hamilton-distinct} if $\mathrm{H}[{\bf u}]\neq \mathrm{H}[{\bf v}]$; equivalently, ${\bf u}\notin \mathrm{H}[{\bf v}]$ or, equivalently, ${\bf v}\notin \mathrm{H}[{\bf u}]$. 
\end{definition}
\noindent Analogous definition applies to vectors in $\mathbb{R}^4$. 

\begin{lemma}\label{t3l2}
Let ${\bf u},{\bf v}\in\mathbb{RP}^3$, with $\mathrm H[{\bf u}]=\left\{{\bf u}^\alpha\right\}_{\alpha=0}^3$ and $\mathrm H[{\bf v}]=\left\{{\bf v}^\beta\right\}_{\beta=0}^3$.Then, $\Tr[{\bf u}^\alpha {\bf v}^\beta]=\Tr\left[{\bf u}^{\alpha+\mu(\bmod4)}{\bf v}^{\beta\mp\mu(\bmod4)}\right]$; the `{+ve}' sign is when $|\alpha - \beta|\in\{0,2\}$ and `{-ve}' sign is when $|\alpha - \beta|\in\{1,3\}$ (see also Fig.~\ref{figs1}).
\end{lemma}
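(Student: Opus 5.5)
The plan is to express the trace of a product of two rank-one projectors as a squared inner product, and to compute that inner product in the quaternion picture of A-2. For normalized representatives, $\Tr[{\bf u}^\alpha{\bf v}^\beta]=\langle \vec u^\alpha,\vec v^\beta\rangle^2$. The Euclidean inner product on $\mathbb R^4\simeq\mathbb H$ is $\langle x,y\rangle=\mathrm{Re}(x\bar y)$. It is invariant under cyclic permutations inside $\mathrm{Re}$, and left multiplication by a unit quaternion preserves it.

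Write $p=\theta(\vec u)$ and $q=\theta(\vec v)$. Then
\begin{align*}
\langle \vec u^\alpha,\vec v^\beta\rangle=\mathrm{Re}\big(\mathrm e_\alpha p\,\bar q\,\bar{\mathrm e}_\beta\big)=\mathrm{Re}\big(\bar{\mathrm e}_\beta\mathrm e_\alpha\,w\big),\qquad w:=p\bar q=w_0+w_1\mathrm e_1+w_2\mathrm e_2+w_3\mathrm e_3 .
\end{align*}
By the rule \eqref{quatmulrul}, $\bar{\mathrm e}_\beta\mathrm e_\alpha=\pm\mathrm e_{\gamma}$ for a unique index $\gamma$. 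Moreover, $\mathrm{Re}(\mathrm e_\gamma w)=\pm w_\gamma$.

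Hence $\Tr[{\bf u}^\alpha{\bf v}^\beta]=w_{\gamma(\alpha,\beta)}^2$, where $\gamma(\alpha,\beta)$ depends only on $\alpha$ and $\beta$ and not on the rays. Modulo signs, the units $\{1,\mathrm e_1,\mathrm e_2,\mathrm e_3\}$ form the Klein four-group. Encoding $0,1,2,3$ as the bit strings $00,01,10,11$, we get $\gamma(\alpha,\beta)=\alpha\oplus\beta$ (bitwise XOR). The lemma therefore reduces to a purely combinatorial identity:
\begin{align*}
\alpha\oplus\beta=(\alpha+\mu)\oplus(\beta\pm\mu)\pmod 4,
\end{align*}
with the sign $+$ on $\beta$ when $\alpha-\beta$ is even and $-$ when it is odd. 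I read the lemma's ``$\mp$'' this way: both indices shift in the same direction for $|\alpha-\beta|\in\{0,2\}$, and in opposite directions for $|\alpha-\beta|\in\{1,3\}$. A quick check supports this reading. For $\alpha=0,\beta=2,\mu=1$ we get $1\oplus 3=2$, as required, whereas the opposite choice gives $1\oplus1=0$.

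Next, I reduce to $\mu=1$. Each step changes $\alpha-\beta$ by $0$ or $\pm2$, so the parity of $\alpha-\beta$, and hence the sign choice, is preserved. The case of general $\mu$ then follows by iterating the $\mu=1$ step $\mu$ times.

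For $\mu=1$, I verify the identity on the $16$ pairs $(\alpha,\beta)$. Symmetry helps here: XOR is commutative, and the map $x\mapsto x+2$ is XOR with $10$. Because of the latter, replacing $(\alpha,\beta)$ by $(\alpha+2,\beta+2)$ leaves both sides unchanged, which cuts the check to a handful of representative cases. For example, in the even case $(\alpha,\alpha)\mapsto(\alpha+1,\alpha+1)$ gives $0$ on both sides. In the odd case, $(0,1)\mapsto(1,0)$ and $(0,3)\mapsto(1,2)$ both preserve the XOR value.

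The main obstacle is bookkeeping rather than any conceptual step. The first point is to confirm that the overall signs arising from $\bar{\mathrm e}_\beta=\pm\mathrm e_\beta$ and from the quaternion product are harmless, which they are because the trace is a square. The second is to pin down the precise sign convention in the statement, consistently with Fig.~\ref{figs1}. Once $\gamma=\alpha\oplus\beta$ is established, the rest is a finite check.
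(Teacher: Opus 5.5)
Your proof is correct, but it takes a different route from the paper's.

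\textbf{The paper's route.} The paper argues by brute force in coordinates. It writes out the four children of $\vec u$ and of $\vec v$, computes all sixteen inner products, and records in Table~\ref{tabs1} that they fall into four classes of equal absolute value. It then declares the index-shift rule to be a ``compact representation'' of that table and does not verify the rule further.

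\textbf{Your route.} You instead derive the invariant $\langle\vec u^\alpha,\vec v^\beta\rangle=\pm w_{\alpha\oplus\beta}$, with $w=\theta(\vec u)\overline{\theta(\vec v)}$. This comes from three facts: $\langle x,y\rangle=\operatorname{Re}(x\bar y)$, cyclicity of $\operatorname{Re}$, and the Klein-four structure of the units modulo sign.

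\textbf{What your route buys.}
\begin{itemize}
\item It explains the shape of the table. The rows are exactly the fibres of $\alpha\oplus\beta$, and the entries are the components of $w$; for instance $w_1=-u_0v_1+u_1v_0-u_2v_3+u_3v_2$ is the $(\vec u^0,\vec v^1)$ row.
\item It gives a stronger statement. Any XOR-preserving relabelling, such as $(\alpha,\beta)\mapsto(\alpha\oplus\nu,\beta\oplus\nu)$, preserves the trace.
\item It pins down the reading of $\mp$. Yours is the only reading compatible with the table.
\item It makes later uses immediate, such as $\Tr[{\bf a}^\mu{\bf b}^\mu]=\Tr[{\bf a}^0{\bf b}^0]$ in Lemma~\ref{t3l6}.
\item It agrees, up to an irrelevant sign, with the quaternionic expression \eqref{t3l2e2} that the paper uses later in Lemma~\ref{t3l3}.
\end{itemize}

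\textbf{What the paper's route buys.} It is completely elementary, and it displays the four bilinear forms explicitly, which the later coordinate arguments draw on.

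\textbf{One small point to complete.} Your $\mu=1$ check is only sampled in the write-up. Your case $(0,1)$ and the diagonal are fine, but the classes of $(1,0)$, $(3,0)$ and $(1,3)$ are not shown. The gap closes at once if you note that the step $(\alpha,\beta)\mapsto(\alpha+1,\beta\pm1)$ cycles through each row of Table~\ref{tabs1}, in exactly the order listed there.
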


\begin{proof}
Let $\Vec u =(u_0,u_1,u_2,u_3), \Vec v=(v_0,v_1,v_2,v_3) \in \mathbb{R}^4$ be the normalized vector representatives of ${\bf u}$ and ${\bf v}$, respectively. Then
\begin{subequations}
\begin{align}
\mathrm{H}[{\bf u}]&=\left\{\begin{aligned}
&{\bf u}^0=[(u_0,u_1,u_2,u_3)],~~~~{\bf u}^1=[(-u_1,u_0,-u_3,u_2)],\\
&{\bf u}^2=[(-u_2,u_3,u_0,-u_1)],{\bf u}^3=[(-u_3,-u_2,u_1,u_0)]
\end{aligned}
\right\},\\
\mathrm{H}[{\bf v}]&=\left\{\begin{aligned}
&{\bf v}^0=[(v_0,v_1,v_2,v_3)],~~~~~{\bf v}^1=[(-v_1,v_0,-v_3,v_2)],\\
&{\bf v}^2=[(-v_2,v_3,v_0,-v_1)],~{\bf v}^3=[(-v_3,-v_2,v_1,v_0)]
\end{aligned}
\right\}.
\end{align}
\end{subequations}

\noindent A straightforward calculation yields the inner product between pairs of vectors $(\Vec u^\mu,\Vec v^\nu)$ as listed in the Table~\ref{tabs1}.
\begin{table}[h!]
\centering
\begin{tabular}{|c|c|}
\hline
$(\Vec u^\mu,\Vec v^\nu)$& $|\langle\Vec u^\mu,\Vec v^\nu\rangle|=\Tr[\bf u^\mu \bf v^\nu]$\\\hline

$(\Vec u^0,\Vec v^0)$;$(\Vec u^1,\Vec v^1)$;$(\Vec u^2,\Vec v^2)$;$(\Vec u^3,\Vec v^3)$ & ~~~~~~~$|u_0v_0+u_1v_1+u_2v_2+u_3v_3|$~~~~~~~\\\hline

$(\Vec u^0,\Vec v^1)$;$(\Vec u^1,\Vec v^0)$;$(\Vec u^2,\Vec v^3)$;$(\Vec u^3,\Vec v^2)$ & $|-u_0v_1+u_1v_0-u_2v_3+u_3v_2|$\\\hline

$(\Vec u^0,\Vec v^2)$;$(\Vec u^1,\Vec v^3)$;$(\Vec u^2,\Vec v^0)$;$(\Vec u^3,\Vec v^1)$ & $|-u_0v_2+u_1v_3+u_2v_0-u_3v_1|$\\\hline

$(\Vec u^0,\Vec v^3)$;$(\Vec u^1,\Vec v^2)$;$(\Vec u^2,\Vec v^1)$;$(\Vec u^3,\Vec v^0)$ & $|-u_0v_3-u_1v_2+u_2v_1+u_3v_0|$ \\\hline
\end{tabular}
\caption{Inner-product between the elements of $\mathrm{H}[{\bf u}]$ and $\mathrm{H}[{\bf v}]$ for ${\bf u},{\bf v}\in\mathbb{RP}^3$.}\label{tabs1}
\label{2-2-three}
\end{table}

\noindent A compact representation of all these inner products values is precisely the claim of present lemma. 
\end{proof}

\begin{figure}[t!]
\centering
\includegraphics[width=0.45\textwidth]{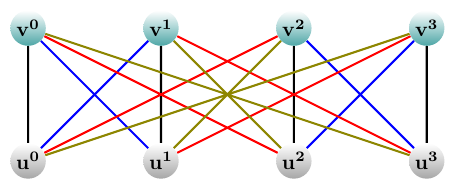} 
\caption{(Color online) Graphical representation of the inner-product relations between the elements of $\mathrm{H}[{\bf u}]$ and $\mathrm{H}[{\bf v}]$ for ${\bf u},{\bf v}\in\mathbb{RP}^3$. Each edge color labels a distinct inner-product value; edges sharing the same color represent equal inner products.}\vspace{-.25cm}
\label{figs1}
\end{figure}

\subsection{A-3. Hamilton context and Emergent context} 

\noindent Let ${\bf V}\subset \mathbb{RP}^3$, and let $\mathcal{C}({\bf V})$ denote the set of all measurement contexts of ${\bf V}$, namely, all orthonormal bases of $\mathbb{R}^4$ whose constituent rays belong to ${\bf V}$. For a ray ${\bf v}$, its Hamilton extension $\mathrm{H}[{\bf v}]$ forms a measurement context, denoted $C_{\bf v}$, since the four rays in $\mathrm{H}[{\bf v}]$ are mutually orthogonal.

\begin{definition}
A context $C_{\bf v}\in\mathcal{C}({\bf V})$ is called a \emph{Hamilton context} if there exists a ray ${\bf v}\in {\bf V}$ such that $C_{\bf v}=\mathrm{H}[{\bf v}]$, otherwise the context is called an emergent context. 
\end{definition}

\noindent Let $C=\{{\bf u}_1,{\bf u}_2,{\bf u}_3,{\bf u}_4\}$ be an emergent context. Since $C$ is not a Hamilton context, it is not contained in any single Hamilton extension. Let $\{{\bf v}_1,\ldots,{\bf v}_k\}$ be the minimal collection of pairwise Hamilton-distinct rays such that
\begin{align}
C\subseteq \bigcup_{i=1}^{k}\mathrm{H}[{\bf v}_i].    
\end{align}
The type of $C$ is defined by the partition of its four rays among the Hamilton extensions $\mathrm{H}[{\bf v}_1],\ldots,\mathrm{H}[{\bf v}_k]$. Since $k\ge 2$, the possible types are precisely the nontrivial partitions of $4$:
\begin{enumerate}
[itemsep=-.15cm, parsep=.15cm, topsep=2pt, leftmargin=.8cm]
\item[(i)] (1\text{-}1\text{-}1\text{-}1) type: ${\bf u}_i\in \mathrm{H}[{\bf v}_i],~i=1,2,3,4$; where $\{{\bf v}_1,{\bf v}_2,{\bf v}_3,{\bf v}_4\}$ are pairwise Hamilton-distinct.

\item[(ii)] (2-1-1) type: ${\bf u}_1\in \mathrm{H}[{\bf v}_1],~ {\bf u}_2\in \mathrm{H}[{\bf v}_2],\quad {\bf u}_3,{\bf u}_4\in \mathrm{H}[{\bf v}_3]$; where $\{{\bf v}_1,{\bf v}_2,{\bf v}_3\}$ are pairwise Hamilton-distinct.

\item[(iii)] (2\text{-}2) type: ${\bf u}_1,{\bf u}_2\in \mathrm{H}[{\bf v}_1],~ {\bf u}_3,{\bf u}_4\in \mathrm{H}[{\bf v}_2]$; where ${\bf v}_1$ and ${\bf v}_2$ are Hamilton-distinct.

\item[(iv)] (3\text{-}1) type: ${\bf u}_1,{\bf u}_2,{\bf u}_3\in \mathrm{H}[{\bf v}_1],~ {\bf u}_4\in \mathrm{H}[{\bf v}_2]$; where ${\bf v}_1$ and ${\bf v}_2$ are Hamilton-distinct.
\end{enumerate}
\noindent We will now prove that some of these types are not possible. 

\begin{lemma}\label{t3l3}
Let ${\bf u},{\bf v},{\bf w}\in\mathbb{RP}^3$ be Hamilton-distinct. There exists no emergent context of type (2-1-1) between $\mathrm{H}[{\bf u}]$, $\mathrm{H}[{\bf v}]$, and $\mathrm{H}[{\bf w}]$.
\end{lemma}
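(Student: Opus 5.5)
The plan is to show that the two remaining rays of such a context are forced into a single Hamilton extension. This contradicts the assumed Hamilton-distinctness of ${\bf u}$ and ${\bf v}$. Suppose, for contradiction, that $C=\{{\bf u}',{\bf v}',{\bf w}^\gamma,{\bf w}^\delta\}$ is an orthonormal basis with ${\bf u}'\in\mathrm H[{\bf u}]$, ${\bf v}'\in\mathrm H[{\bf v}]$ and $\gamma\neq\delta$. Let $\{\epsilon,\eta\}=\{0,1,2,3\}\setminus\{\gamma,\delta\}$. Since the four rays of $\mathrm H[{\bf w}]$ are mutually orthogonal, the orthogonal complement of $\mathrm{span}\{\vec w^\gamma,\vec w^\delta\}$ is $\mathrm{span}\{\vec w^\epsilon,\vec w^\eta\}$. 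Hence $\vec u'$ and $\vec v'$ are two orthogonal vectors in this plane.

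In the quaternion representation of A-2, every element of that plane has the form $q\,\theta(\vec w)$ with $q\in\mathrm{span}\{\mathrm e_\epsilon,\mathrm e_\eta\}$. So I write $\theta(\vec u')=q_1\theta(\vec w)$ and $\theta(\vec v')=q_2\theta(\vec w)$, with $q_1,q_2\in\mathrm{span}\{\mathrm e_\epsilon,\mathrm e_\eta\}$ nonzero. Left multiplication by a quaternion is a similarity of $\mathbb R^4$, so $\langle q_1\theta(\vec w),q_2\theta(\vec w)\rangle=|\theta(\vec w)|^2\langle q_1,q_2\rangle$. Orthogonality of $\vec u',\vec v'$ therefore gives $\langle q_1,q_2\rangle=0$. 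Then
\begin{align*}
\theta(\vec v')=q_2 q_1^{-1}\,\theta(\vec u')=|q_1|^{-2}\,q_2\bar q_1\,\theta(\vec u'),
\end{align*}
and the key step is to show that $q_2\bar q_1=c\,\mathrm e_\rho$ for some $\rho\in\{1,2,3\}$ and real $c\neq 0$.

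For this, note that the real part of $q_2\bar q_1$ equals $\langle q_2,q_1\rangle=0$. It remains to see that the imaginary part lies along a single unit $\mathrm e_\rho$. I would split into two cases.

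\emph{Case 1: $0\in\{\epsilon,\eta\}$.} Say the plane is $\mathrm{span}\{1,\mathrm e_k\}$. This is a commutative copy of $\mathbb C$, so $q_2\bar q_1\in\mathrm{span}\{1,\mathrm e_k\}$, and it is purely imaginary, hence a multiple of $\mathrm e_k$.

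\emph{Case 2: the plane is $\mathrm{span}\{\mathrm e_i,\mathrm e_j\}$ with $i\neq j$ imaginary.} Here $\bar q_1=-q_1$. For pure imaginary quaternions one has $-q_2q_1=\vec q_2\cdot\vec q_1-\vec q_2\times\vec q_1$. The dot product vanishes and the cross product lies along $\mathrm e_k$, where $\{i,j,k\}=\{1,2,3\}$.

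In both cases $\theta(\vec v')\propto\mathrm e_\rho\theta(\vec u')$, i.e.\ ${\bf v}'\in\mathrm H[{\bf u}']$. Two applications of Lemma~\ref{t3l1} then give $\mathrm H[{\bf v}]=\mathrm H[{\bf v}']=\mathrm H[{\bf u}']=\mathrm H[{\bf u}]$, contradicting Hamilton-distinctness. I expect the only real care to be needed in the key step, namely verifying that $q_2\bar q_1$ is a multiple of a single imaginary unit, which is why I would treat the two cases of the plane separately. Everything else follows directly from the orthogonality of Hamilton extensions and Lemma~\ref{t3l1}.
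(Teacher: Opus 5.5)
Your argument is correct, and it reaches the contradiction by a different route from the paper's. The paper first normalizes one singleton ray to $\theta(\vec v_1)=1$, which is implicitly a Hamilton-preserving change of basis. It then proceeds in two steps. First, orthogonality of $1$ to the pair $\mathrm e_\alpha\theta(\vec v_3),\mathrm e_\beta\theta(\vec v_3)$ kills two coordinates of $\theta(\vec v_3)$. Second, orthogonality of the other singleton to that pair and to $1$ forces $\theta(\vec u_2)=\mathrm e_\eta$, i.e.\ ${\bf u}_2\in\mathrm H[{\bf v}_1]$.

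You never normalize. You place both singletons in the complementary plane $\mathrm{span}\{\mathrm e_\epsilon,\mathrm e_\eta\}\,\theta(\vec w)$, reduce everything to the single fact that $q_2\bar q_1$ is a multiple of one imaginary unit, and close with Lemma~\ref{t3l1}. This buys three things:
\begin{itemize}
\item The argument is coordinate-free.
\item Your explicit split on whether $0\in\{\epsilon,\eta\}$ covers both possibilities cleanly. The paper's last step names $\eta$ as ``the unique element of $\{0,1,2,3\}\setminus\{0,\alpha,\beta\}$'', which is only well defined when $0\notin\{\alpha,\beta\}$. For $\{\alpha,\beta\}=\{0,k\}$ the forced unit is $\mathrm e_k$ itself, which is exactly your Case~2.
\item You never use that ${\bf w}$ is Hamilton-distinct from ${\bf u},{\bf v}$. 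So you actually prove a stronger statement: whenever a context contains two rays of one Hamilton extension, its remaining two rays lie in a common Hamilton extension.
\end{itemize}

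Two cosmetic points. The map $q\mapsto q\,\theta(\vec w)$ is right multiplication, not left; both are similarities, so your inner-product identity is fine. Also, your final chain uses Lemma~\ref{t3l1} three times, not twice.
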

\begin{proof}
Using the isomorphism $\theta:\mathbb{R}^4\to\mathbb{H}$, the Hilbert--Schmidt inner product between two projectors ${\bf u},{\bf v}\in\mathbb{RP}^3$ turns out to be
\begin{align}\label{t3l2e1}
\tr[{\bf uv}]=\left(\operatorname{Re}[\theta(\vec u)\overline{\theta(\vec v)}]\right)^2,
\end{align}
where $\Vec u$ and $\Vec v$ are the respective vector (normalized) representatives of ${\bf u}$ and ${\bf v}$, and for a $q=\mathrm e_0 a_0+\sum_{i=1}^3\mathrm e_i a_i\in\mathbb{H}$ its quaternion conjugate $\bar q=\mathrm e_0 a_0-\sum_{i=1}^3\mathrm e_i a_i$. We thus have
\begin{align}\label{t3l2e2}
\Tr[{\bf u}^\alpha {\bf v}^\beta]=\left(\operatorname{Re}[\mathrm{e}_{\alpha}\mathrm{e}_{\beta}\theta(\vec u)\overline{\theta(\vec v)}]\right)^2.
\end{align}
Consider an emergent context $C=\{{\bf u}_1,{\bf u}_2,{\bf u}_3,{\bf u}_4\}$ of type (2-1-1), where ${\bf u}_1\in \mathrm{H}[{\bf v}_1],~ {\bf u}_2\in \mathrm{H}[{\bf v}_2],\quad {\bf u}_3,{\bf u}_4\in \mathrm{H}[{\bf v}_3]$ with $\{{\bf v}_1,{\bf v}_2,{\bf v}_3\}$ being pairwise Hamilton-distinct. Without any loss of generality, we can consider ${\bf u}_1={\bf v}_1$, and $\theta(\vec v_1)=1$. Furthermore, consider ${\bf u}_3={\bf v}^\alpha_3$ and ${\bf u}_4={\bf v}^\beta_3$, where $\beta \ne \alpha$. Therefore, from Eq. \eqref{t3l2e2}, we have,
\begin{subequations}
\begin{align}
&\operatorname{Re}(\mathrm{e}_{\alpha}\theta(\vec v_3)\theta(\vec v_1))=0 \quad \Rightarrow\quad \operatorname{Re}(\mathrm{e}_{\alpha}\theta(\vec v_3))=0, \\
& \operatorname{Re}(\mathrm{e}_{\beta}\theta(\vec v_3)\theta(\vec v_1))=0 \quad \Rightarrow\quad \operatorname{Re}(\mathrm{e}_{\beta}\theta(\vec v_3))=0.
\end{align}  
\end{subequations}
We thus have
\begin{align}\label{t3l2e3}
\theta(\vec v_3) = \sum_{\gamma\in\{0,1,2,3\}\setminus\{\alpha,\beta\}}\mathrm{e}_{\gamma}v_3^{\gamma}
\end{align}
where $v_3^{\gamma}\in \mathbb{R}$ for all $\gamma$. Now, from the fact that ${\bf u}_2 \perp {\bf u}_3$ and ${\bf u}_2 \perp {\bf u}_4$, and from Eq.\eqref{t3l2e3}, we get 
\begin{subequations}\label{t3l2se4}
\begin{align}
\operatorname{Re}(\mathrm{e}_{\alpha}\theta(\vec v_3)\overline{\theta(\vec u_2)})=0, \\
\operatorname{Re}(\mathrm{e}_{\beta}\theta(\vec v_3)\overline{\theta(\vec u_2)})=0.
\end{align}
\end{subequations}
Therefore, $\theta(\vec v_3)\overline{\theta(\vec u_2)}$ is a quaternion number of the form:

\begin{align}\label{t3l2e4}
\theta(\vec v_3)\overline{\theta(\vec u_2)} = \sum_{\gamma\in\{0,1,2,3\}\setminus\{\alpha,\beta\}}\mathrm{e}_{\gamma}q^{\gamma}
\end{align}
Where, $q^{\gamma}\in \mathbb{R}$ for all $\gamma$.

Similarly, using ${\bf u}_2 \perp {\bf v}_1$, we get,
\begin{align}\label{t3l2e5}
\operatorname{Re}(\theta(\vec u_2)\theta(\vec v_1))=0 \quad \Rightarrow\quad \operatorname{Re}(\theta(\vec u_2))=0
\end{align}
Therefore, using Eq. \eqref{t3l2e3},\eqref{t3l2e4} and \eqref{t3l2e5}, and using the quaternion multiplication rule (see Eq. \eqref{quatmulrul}), we get, $\theta(\vec u_2)=\mathrm{e}_{\eta}$, where $\eta$ is the unique element of $\{0,1,2,3\}\setminus\{0,\alpha,\beta\}$. But then ${\bf u}_2\in \mathrm{H}[{\bf v}_1]$, which contradicts our assumption and hence completes the proof.
\end{proof}

\begin{lemma}\label{t3l4}
Let ${\bf u},{\bf v}\in\mathbb{RP}^3$ be Hamilton-distinct. There exists no emergent context of type (3-1) between $\mathrm{H}[{\bf u}]$ and $\mathrm{H}[{\bf v}]$.
\end{lemma}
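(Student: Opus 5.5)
Suppose, for contradiction, that $C=\{{\bf u}_1,{\bf u}_2,{\bf u}_3,{\bf u}_4\}$ is a context with ${\bf u}_1,{\bf u}_2,{\bf u}_3\in\mathrm H[{\bf v}]$ and ${\bf u}_4\in\mathrm H[{\bf u}]$, where ${\bf u}$ and ${\bf v}$ are Hamilton-distinct. We show that ${\bf u}_4$ is forced to lie in $\mathrm H[{\bf v}]$. The argument is a linear-algebra count. The four rays of $\mathrm H[{\bf v}]$ are mutually orthogonal and span $\mathbb{R}^4$, so they form an orthonormal basis. Three of them, ${\bf u}_1,{\bf u}_2,{\bf u}_3$, are already in $C$.

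Since $C$ is an orthonormal basis, ${\bf u}_4$ must be orthogonal to ${\bf u}_1,{\bf u}_2,{\bf u}_3$. The orthogonal complement of $\mathrm{span}\{{\bf u}_1,{\bf u}_2,{\bf u}_3\}$ is one-dimensional. It is spanned by the fourth element of $\mathrm H[{\bf v}]$, say ${\bf v}^\delta$, where $\delta$ is the unique index not used by ${\bf u}_1,{\bf u}_2,{\bf u}_3$. Hence ${\bf u}_4={\bf v}^\delta\in\mathrm H[{\bf v}]$.

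By Lemma~\ref{t3l1}, ${\bf u}_4\in\mathrm H[{\bf v}]$ gives $\mathrm H[{\bf u}_4]=\mathrm H[{\bf v}]$. Also ${\bf u}_4\in\mathrm H[{\bf u}]$ gives $\mathrm H[{\bf u}_4]=\mathrm H[{\bf u}]$. Therefore $\mathrm H[{\bf u}]=\mathrm H[{\bf v}]$, contradicting Hamilton-distinctness.

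Equivalently, $C=\mathrm H[{\bf v}]$ is a Hamilton context, not an emergent one. In the minimal-cover language of Section A-3, $C$ is covered by the single extension $\mathrm H[{\bf v}]$, so $k=1$ and type (3-1) cannot occur.

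If a quaternionic argument parallel to Lemma~\ref{t3l3} were preferred, one could use Eq.~\eqref{t3l2e2} with $\theta(\vec v)=1$. Orthogonality of ${\bf u}_4$ to the three rays ${\bf v}^\alpha$ kills three of the four components of $\theta(\vec u_4)$, so $\theta(\vec u_4)\propto \mathrm e_\delta$. The dimension count above is simpler, however.

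There is no real obstacle here. The only point needing care is that "three rays of $\mathrm H[{\bf v}]$" means three \emph{distinct} rays. This holds automatically because the elements of a context are distinct, mutually orthogonal rays.
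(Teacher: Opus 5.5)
Your proof is correct and follows the paper's argument. The three rays of $\mathrm{H}[{\bf v}]$ lying in $C$ leave a one-dimensional orthogonal complement, which forces the fourth ray of $C$ to be the remaining element of $\mathrm{H}[{\bf v}]$; the paper then concludes that $C=\mathrm{H}[{\bf v}]$ is a Hamilton context rather than an emergent one, while your main line instead uses Lemma~\ref{t3l1} to get $\mathrm{H}[{\bf u}]=\mathrm{H}[{\bf v}]$, contradicting Hamilton-distinctness — an equivalent ending, and you note the paper's version as well.
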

\begin{proof}
Suppose, for contradiction, that $C$ is an emergent context of type (3-1). Then there exist Hamilton-distinct rays ${\bf v}_1,{\bf v}_2\in\mathbb{RP}^3$ such that
\begin{align}
|C\cap \mathrm{H}[{\bf v}_1]|=3
\quad\text{and}\quad
|C\cap \mathrm{H}[{\bf v}_2]|=1.    
\end{align}
Let $C\cap \mathrm{H}[{\bf v}_1]=\{{\bf r}_1,{\bf r}_2,{\bf r}_3\}$. Since $\mathrm{H}[{\bf v}_1]$ is a context, there exists a unique ray ${\bf r}_4\in \mathrm{H}[{\bf v}_1]$ such that $\mathrm{H}[{\bf v}_1]=\{{\bf r}_1,{\bf r}_2,{\bf r}_3,{\bf r}_4\}$. Because $C$ is also a context and already contains the three mutually orthogonal rays ${\bf r}_1,{\bf r}_2,{\bf r}_3$, its fourth ray must be orthogonal to each of them and must coincide with ${\bf r}_4$. Therefore, $C=\{{\bf r}_1,{\bf r}_2,{\bf r}_3,{\bf r}_4\}=\mathrm{H}[{\bf v}_1]$, which contradicts the assumption that $C$ is an emergent context. This completes the proof. 
\end{proof}

\noindent We will now establish some notable features of $(2-2)$ context.

\begin{lemma}\label{t3l5}
Let ${\bf u},{\bf v}\in\mathbb{RP}^3$ be Hamilton-distinct. If a type $(2\text{-}2)$ emergent context exists between $\mathrm{H}[{\bf u}]$ and $\mathrm{H}[{\bf v}]$, then such contexts always occur exactly in pairs.
\end{lemma}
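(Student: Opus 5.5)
The plan is to pair each type $(2\text{-}2)$ emergent context with its \emph{complementary} context by a fixed-point-free involution. I expect this to need only orthogonal-complement arguments in $\mathbb{R}^4$, with no quaternion computation, although Lemma~\ref{t3l2} could serve as a cross-check.

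\emph{Setup.} Let $C=\{{\bf u}^{\alpha},{\bf u}^{\beta},{\bf v}^{\gamma},{\bf v}^{\delta}\}$ be a type $(2\text{-}2)$ emergent context, with $\alpha\neq\beta$ and $\gamma\neq\delta$. Let $\{\alpha',\beta'\}=\{0,1,2,3\}\setminus\{\alpha,\beta\}$ and $\{\gamma',\delta'\}=\{0,1,2,3\}\setminus\{\gamma,\delta\}$. The candidate partner is
\[
C'=\{{\bf u}^{\alpha'},{\bf u}^{\beta'},{\bf v}^{\gamma'},{\bf v}^{\delta'}\}.
\]

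\emph{Step 1: subspace identities.} Set $U=\Span\{\vec u^{\alpha},\vec u^{\beta}\}$ and $U'=\Span\{\vec u^{\alpha'},\vec u^{\beta'}\}$. Since $\mathrm H[{\bf u}]$ is an orthonormal basis, $U'=U^{\perp}$, and likewise $V'=V^{\perp}$ for $V=\Span\{\vec v^{\gamma},\vec v^{\delta}\}$ and $V'=\Span\{\vec v^{\gamma'},\vec v^{\delta'}\}$. Because $C$ is a context, each of $\vec v^{\gamma},\vec v^{\delta}$ is orthogonal to both $\vec u^{\alpha}$ and $\vec u^{\beta}$, so $V\subseteq U^{\perp}=U'$. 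Both $V$ and $U'$ are $2$-dimensional, so $V=U'$. Taking orthogonal complements gives $V'=V^{\perp}=U'^{\perp}=U$.

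\emph{Step 2: $C'$ is a context.} By Step 1, ${\bf v}^{\gamma'},{\bf v}^{\delta'}$ lie in $U$, and $U$ is orthogonal to $U'\ni\vec u^{\alpha'},\vec u^{\beta'}$. The pairs $\{{\bf u}^{\alpha'},{\bf u}^{\beta'}\}$ and $\{{\bf v}^{\gamma'},{\bf v}^{\delta'}\}$ are each orthogonal internally, being members of a Hamilton context. Hence $C'$ consists of four mutually orthogonal rays, i.e.\ it is a context.

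\emph{Step 3: $C'$ is emergent, and the pairing is exact.} By Lemma~\ref{t3l4} and the Hamilton-distinctness of ${\bf u},{\bf v}$, the context $C'$ is not a Hamilton context; it is again of type $(2\text{-}2)$ between $\mathrm H[{\bf u}]$ and $\mathrm H[{\bf v}]$. The map $C\mapsto C'$ is an involution, since complementing indices twice returns $C$. It has no fixed point, because $C\cap C'=\emptyset$. Therefore the type $(2\text{-}2)$ contexts between $\mathrm H[{\bf u}]$ and $\mathrm H[{\bf v}]$ split into disjoint complementary pairs, which is the claim.

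\emph{Main obstacle.} The main subtlety is interpretive rather than computational: what ``exactly in pairs'' should mean. I will state it precisely as ``the set of such contexts is partitioned by the fixed-point-free involution $C\mapsto C'$'', so that their number is even and each has a unique complementary partner. As a consistency check, Lemma~\ref{t3l2} shows that the orthogonality pattern of $C$ is carried to shifted index pairs. This matches the examples in Fig.~\ref{fig2}, e.g.\ $C^{03|03}_{\bf 12}$ paired with $C^{12|12}_{\bf 12}$.
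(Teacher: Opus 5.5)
Your Steps 1--2 are the paper's argument. The paper's proof is just the span identities $\Span\{\vec u^{\alpha'},\vec u^{\beta'}\}=\Span\{\vec v^{\gamma},\vec v^{\delta}\}$ and $\Span\{\vec v^{\gamma'},\vec v^{\delta'}\}=\Span\{\vec u^{\alpha},\vec u^{\beta}\}$, which you derive more carefully by the dimension count.

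\textbf{The gap is in what you prove.} As the paper states and uses the lemma, it says that if one type (2-2) context exists between $\mathrm H[{\bf u}]$ and $\mathrm H[{\bf v}]$, then there are \emph{exactly two}: the complementary pair and nothing else. The paper's proof ends with ``no more type (2-2) context can be obtained''. Your reading (``partitioned by a fixed-point-free involution, so the number is even'') does not exclude a second complementary pair whose ${\bf u}$-indices split $\{0,1,2,3\}$ differently. The exact count is what Theorem~\ref{theo3} needs. Part (I) assigns a single pair $C^{\mu\nu|\alpha\beta}_{\bf AX}$ to each parent pair, and Case-1 counts $20=2\times 10$ type (2-2) contexts. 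In your argument, Hamilton-distinctness is used only to show that $C'$ is not a Hamilton context. Its essential role is uniqueness.

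\textbf{The missing step is short.} By Step~1, every $\vec v^{\kappa}$ lies in $U$ or in $U'$. Let $\{{\bf u}^a,{\bf u}^b,{\bf v}^c,{\bf v}^d\}$ be any type (2-2) context between $\mathrm H[{\bf u}]$ and $\mathrm H[{\bf v}]$.
\begin{itemize}
\item If $\{a,b\}=\{\alpha,\beta\}$, then $\vec v^c,\vec v^d\in U'=\Span\{\vec v^{\gamma},\vec v^{\delta}\}$. Orthonormality of $\mathrm H[{\bf v}]$ forces $\{c,d\}=\{\gamma,\delta\}$, so the context is $C$. Likewise $\{a,b\}=\{\alpha',\beta'\}$ gives $C'$.
\item Otherwise $\{a,b\}$ takes one index from each of $\{\alpha,\beta\}$ and $\{\alpha',\beta'\}$, and so does its complement $\{e,f\}$. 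Then $\vec v^c\in\Span\{\vec u^{e},\vec u^{f}\}$. Intersecting with whichever of $U,U'$ contains $\vec v^c$ leaves a single ray, so ${\bf v}^c\in\{{\bf u}^e,{\bf u}^f\}\subset\mathrm H[{\bf u}]$. Lemma~\ref{t3l1} then gives $\mathrm H[{\bf v}]=\mathrm H[{\bf v}^c]=\mathrm H[{\bf u}]$, contradicting Hamilton-distinctness.
\end{itemize}

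\textbf{A smaller correction.} That $C'$ is not a Hamilton context follows from Lemma~\ref{t3l1}, not Lemma~\ref{t3l4}. If $C'=\mathrm H[{\bf w}]$, then $\mathrm H[{\bf u}]=\mathrm H[{\bf u}^{\alpha'}]=\mathrm H[{\bf w}]=\mathrm H[{\bf v}^{\gamma'}]=\mathrm H[{\bf v}]$.
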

\begin{proof}
Let $C^{\alpha\beta|\gamma\delta}_{\bf uv}:=\left\{\vec u^\alpha,\vec u^\beta,\vec v^\gamma,\vec v^\delta\right\}$ be a (2-2) type context, where $\alpha\neq\beta$ and $\gamma\neq\delta$. We will show that $C^{ab|cd}_{\bf uv}$ is also a context, where $a,b\in\{0,1,2,3\}\setminus\{\alpha,\beta\}$ with $a\neq b$ and $c, d\in\{0,1,2,3\}\setminus\{\gamma,\delta\}$ with $c\neq d$. Since, $C^{\alpha\beta|\gamma\delta}_{\bf uv}$ is a context we have 
\begin{subequations}
\begin{align}
&\mathrm{Span}\left\{\Vec u^a,\Vec u^b\right\}=\mathrm{Span}\left\{\Vec v^\gamma,\Vec v^\delta\right\},\\
&\mathrm{Span}\left\{\Vec v^c,\Vec v^d\right\}=\mathrm{Span}\left\{\Vec u^\alpha,\Vec u^\beta\right\},
\end{align}   
\end{subequations}
which proves the claim. Since ${\bf u},{\bf v}$ are Hamilton-distinct, no more type (2-2) context can be obtained .
\end{proof}

\begin{observation}\label{t3o1}
Pairs of $(2\text{-}2)$ emergent contexts among mutually Hamilton-distinct ${\bf u},{\bf v},{\bf w}\in\mathbb{RP}^3$ can occur only in certain patterns as listed below:
\begin{table}[h!]
\centering
\begin{tabular}{c|c|c}
${\bf u},{\bf v}$ & ${\bf u},{\bf w}$ & ${\bf v},{\bf w}$ \\\hline
~~~$\big\{C^{\alpha\beta|\gamma\delta}_{\bf uv},C^{ab|cd}_{\bf u \bf v}\big\}$ ~~~& ~~~$\big\{C^{\alpha\beta|\mu\nu}_{\bf uw},~C^{ab|mn}_{\bf u \bf w}\big\}$ ~~~& ~~~$\big\{C^{\gamma\delta|mn}_{\bf vw},C^{cd|\mu\nu}_{\bf vw}\big\}$~~~\\\hline
\rotatebox[origin=0]{30}{$\Big\{C^{\alpha\beta|\gamma\delta}_{\bf uv},~C^{ab|cd}_{\bf uv}\Big\}$} & \rotatebox[origin=0]{30}{$\Big\{C^{\alpha b|\mu\nu}_{\bf uw},C^{a\beta|mn}_{\bf uw}\Big\}$} & 
\begin{tabular}{c}
$\big\{C^{\gamma c|\mu m}_{\bf vw},C^{\delta d|\nu n}_{\bf vw}\big\}$;~~$\big\{C^{\gamma c|\mu n}_{\bf vw},C^{\delta d|\nu m}_{\bf vw}\big\}$;~~$\big\{C^{\gamma c|\nu m}_{\bf vw},C^{\delta d|\mu n}_{\bf vw}\big\}$~\vspace{.1cm}\\
$\big\{C^{\gamma c|\nu n}_{\bf vw},C^{\delta d|\mu m}_{\bf vw}\big\}$;~~$\big\{C^{\gamma d|\mu m}_{\bf vw},C^{\delta c|\nu n}_{\bf vw}\big\}$;~~$\big\{C^{\gamma d|\mu n}_{\bf vw},C^{\delta c|\nu m}_{\bf vw}\big\}$~\vspace{.1cm}\\
$\big\{C^{\gamma d|\nu m}_{\bf vw},C^{\delta c|\mu n}_{\bf vw}\big\}$;~~$\big\{C^{\gamma d|\nu n}_{\bf vw},C^{\delta c|\mu m}_{\bf vw}\big\}$;~~\text{No }(2-2)\text{context}~\vspace{.1cm}
\end{tabular}
\\\hline
\end{tabular}
\label{2-2-three}
\end{table}
\end{observation}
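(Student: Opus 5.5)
The plan is to translate every statement about $(2\text{-}2)$ contexts into a statement about $2$-planes in $\mathbb{R}^4$, and then do a short case analysis on how the planes coming from $\mathrm{H}[{\bf u}]$, $\mathrm{H}[{\bf v}]$, $\mathrm{H}[{\bf w}]$ can coincide.

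\emph{Reformulation.} As in the proof of Lemma~\ref{t3l5}, $C^{\alpha\beta|\gamma\delta}_{\bf uv}$ is a context exactly when
\begin{align*}
\Span\{\vec u^\alpha,\vec u^\beta\}=\Span\{\vec v^c,\vec v^d\}\quad\text{and}\quad\Span\{\vec u^a,\vec u^b\}=\Span\{\vec v^\gamma,\vec v^\delta\},
\end{align*}
where $\{a,b\}$ and $\{c,d\}$ are the complementary index pairs. Hence a pair of $(2\text{-}2)$ contexts between ${\bf u}$ and ${\bf v}$ is the same thing as a splitting of $\mathbb{R}^4$ into two orthogonal planes, each spanned by two children of ${\bf u}$ and also by two children of ${\bf v}$. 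I call the induced partition of $\{0,1,2,3\}$ into two pairs the \emph{pairing} used by ${\bf u}$ (respectively ${\bf v}$). There are only three pairings. By the quaternion form $\vec u^\alpha\simeq \mathrm e_\alpha\theta(\vec u)$, the pairing $\{\{0,i\},\{j,k\}\}$ corresponds to the planes $\Span\{q,\mathrm e_i q\}$ and its orthogonal complement, where $q=\theta(\vec u)$. These planes are complex lines for the left complex structure $L_{\mathrm e_i}$.

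\emph{Step 1: same pairing on ${\bf u}$ (first row).} Suppose both the $({\bf u},{\bf v})$ pair and the $({\bf u},{\bf w})$ pair use the ${\bf u}$-pairing $\{\alpha\beta\},\{ab\}$. Set $P=\Span\{\vec u^\alpha,\vec u^\beta\}$. Then
\begin{align*}
P=\Span\{\vec v^c,\vec v^d\}=\Span\{\vec w^m,\vec w^n\},\qquad P^\perp=\Span\{\vec v^\gamma,\vec v^\delta\}=\Span\{\vec w^\mu,\vec w^\nu\}.
\end{align*}
Transitivity immediately gives the contexts $C^{\gamma\delta|mn}_{\bf vw}$ and $C^{cd|\mu\nu}_{\bf vw}$. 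Uniqueness then comes from the last clause of Lemma~\ref{t3l5}: Hamilton-distinct ${\bf v},{\bf w}$ admit at most one pair of $(2\text{-}2)$ contexts. This yields exactly the first row of the table.

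\emph{Step 2: different pairings on ${\bf u}$ (second row).} Suppose the $({\bf u},{\bf w})$ pair uses the ${\bf u}$-pairing $\{\alpha b\},\{a\beta\}$. First I show that no $({\bf v},{\bf w})$ context can use the ${\bf v}$-pairing $\{\gamma\delta\},\{cd\}$ or the ${\bf w}$-pairing $\{\mu\nu\},\{mn\}$. If, say, $P=\Span\{\vec v^c,\vec v^d\}$ were spanned by two children of ${\bf w}$, then $P$ would be a complex line for some $L_{\mathrm e_i}$ adapted to $\theta(\vec w)$. The $({\bf u},{\bf w})$ data already make the planes $\Span\{\vec u^\alpha,\vec u^b\}$ complex lines for a different $L_{\mathrm e_j}$. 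A short computation should show that the pairings of $\mathrm H[{\bf w}]$ relative to $\mathrm H[{\bf u}]$ are then forced to coincide on two distinct complex structures. A plane invariant under two anticommuting complex structures $L_{\mathrm e_i},L_{\mathrm e_j}$ is invariant under $L_{\mathrm e_k}$ too, hence is a left $\mathbb{H}$-submodule, which is impossible for a $2$-plane. Equivalently, one can use the permutation/sign structure of Lemma~\ref{t3l2} to force ${\bf w}\in\mathrm H[{\bf v}]$, contradicting Hamilton-distinctness via Lemma~\ref{t3l1}.

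Consequently any $({\bf v},{\bf w})$ pair must use the mixed ${\bf v}$-pairing ($\{\gamma c\},\{\delta d\}$ or $\{\gamma d\},\{\delta c\}$) and the mixed ${\bf w}$-pairing ($\{\mu m\},\{\nu n\}$ or $\{\mu n\},\{\nu m\}$). It must also choose which ${\bf w}$-plane is orthogonal to which ${\bf v}$-plane. This gives $2\times2\times2=8$ combinatorial possibilities, which are exactly the eight entries listed, and Lemma~\ref{t3l5} again limits them to at most one. The ninth option, ``no $(2\text{-}2)$ context'', simply records that existence is not forced. I would confirm that it genuinely occurs by exhibiting explicit vectors, for instance from $\mathcal V_6$ or $\mathcal V_8$.

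\emph{Main obstacle.} The hard part is the exclusion argument in Step 2, namely showing rigorously that a $2$-plane cannot be simultaneously a ``child plane'' of ${\bf u}$ (or ${\bf v}$) and of ${\bf w}$ for two different pairings. My first attempt would use Lemma~\ref{t3l2}: normalise $\theta(\vec u)=1$ by left-invariance, so that $\mathrm H[{\bf u}]$ is the computational basis. The $(2\text{-}2)$ conditions then become vanishing conditions on coordinates of $\theta(\vec v)$ and $\theta(\vec w)$, as in the proof of Lemma~\ref{t3l3}. From there I would check directly that the forbidden coincidences force $\theta(\vec w)$ to be proportional to some $\mathrm e_\eta\theta(\vec v)$.
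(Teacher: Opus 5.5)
Your Step~1 and the $2\times2\times2$ count in Step~2 are correct. However, the exclusion claim that the whole second row rests on is left unproved, and the mechanism you sketch for it does not close. Suppose $P=\Span\{\vec v^c,\vec v^d\}=\Span\{\vec u^\alpha,\vec u^\beta\}$ were spanned by two children of ${\bf w}$. The invariance argument then only yields that this ${\bf w}$-pairing has the same type $i_1$ as the ${\bf u}$-pairing $\{\alpha\beta\},\{ab\}$, i.e.\ both are adapted to the same $L_{\mathrm e_{i_1}}$, since a $2$-plane cannot be invariant under two of $L_{\mathrm e_1},L_{\mathrm e_2},L_{\mathrm e_3}$. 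That is not yet a contradiction. It only says ${\bf w}$ agrees with ${\bf u}$ on the type-$i_1$ planes $P,P^\perp$, and, by the given $({\bf u},{\bf w})$ pair, on the type-$i_2$ planes $\Span\{\vec u^\alpha,\vec u^b\},\Span\{\vec u^a,\vec u^\beta\}$. These are different planes, so the $\mathbb H$-submodule fact gives nothing further. Your stated target $\theta(\vec w)\propto\mathrm e_\eta\theta(\vec v)$ is also not what these hypotheses naturally produce. Normalise $\theta(\vec u)=1$ by \emph{right} multiplication with $\theta(\vec u)^{-1}$, which commutes with the left multiplications defining $\mathrm H$. The hypotheses then give $\theta(\vec w)\propto\mathrm e_\eta$, i.e.\ ${\bf w}\in\mathrm H[{\bf u}]$, and ${\bf v}\in\mathrm H[{\bf u}]$ in the symmetric case.

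The missing step is just the transitivity you already used in Step~1. Suppose a $({\bf v},{\bf w})$ pair used the ${\bf v}$-pairing $\{\gamma\delta\},\{cd\}$. Then $\Span\{\vec u^a,\vec u^b\}=\Span\{\vec v^\gamma,\vec v^\delta\}$ and $\Span\{\vec u^\alpha,\vec u^\beta\}=\Span\{\vec v^c,\vec v^d\}$ would each be spanned by two children of ${\bf w}$. This gives a $({\bf u},{\bf w})$ pair with ${\bf u}$-pairing $\{\alpha\beta\},\{ab\}$ besides $\{C^{\alpha b|\mu\nu}_{\bf uw},C^{a\beta|mn}_{\bf uw}\}$, contradicting the uniqueness clause of Lemma~\ref{t3l5}. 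Symmetrically, using the ${\bf w}$-pairing $\{\mu\nu\},\{mn\}$ would make $\Span\{\vec u^a,\vec u^\beta\}$ and $\Span\{\vec u^\alpha,\vec u^b\}$ spanned by children of ${\bf v}$, giving a second $({\bf u},{\bf v})$ pair.

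That uniqueness clause is only asserted in the paper, but your complex-structure idea is exactly what proves it. With $\theta(\vec u)=1$, type matching shows that a pair of type $i$ exists iff $\theta(\vec v)\in\Span\{1,\mathrm e_i\}\cup\Span\{\mathrm e_j,\mathrm e_k\}$, and two distinct types force $\theta(\vec v)\propto\mathrm e_\eta$.

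Pushed one step further, type matching also sharpens the second row. The ${\bf v}$- and ${\bf w}$-pairings of any $({\bf v},{\bf w})$ pair must both have the third type. So in a given configuration only two of the eight listed patterns are even compatible, and the table is a correct but non-tight list.
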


\begin{lemma}\label{t3l6}
Let ${\bf u},{\bf v},{\bf w},{\bf x}\in\mathbb{RP}^3$ be Hamilton-distinct. If a type $(1\text{-}1\text{-}1\text{-}1)$ emergent context exists among $\mathrm{H}[{\bf u}]$, $\mathrm{H}[{\bf v}]$, $\mathrm{H}[{\bf w}]$, and $\mathrm{H}[{\bf x}]$, then there exist exactly four such type $(1\text{-}1\text{-}1\text{-}1)$ emergent contexts among them.
\end{lemma}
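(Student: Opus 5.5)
The plan is to exploit the fact that left multiplication by a unit quaternion is an orthogonal map of $\mathbb{R}^4\simeq\mathbb{H}$, and that it permutes the children of every Hamilton extension simultaneously. Suppose $C=\{{\bf u}^{\alpha},{\bf v}^{\beta},{\bf w}^{\gamma},{\bf x}^{\delta}\}$ is a type (1-1-1-1) context. For each $\mu\in\{0,1,2,3\}$, the map $L_\mu:q\mapsto \mathrm e_\mu q$ preserves inner products. By the quaternion rules \eqref{quatmulrul} we have $\mathrm e_\mu\mathrm e_\alpha=\mathrm s\,\mathrm e_{\pi_\mu(\alpha)}$ with $\mathrm s\in\{\pm1\}$, so $L_\mu$ sends ${\bf u}^{\alpha}\mapsto{\bf u}^{\pi_\mu(\alpha)}$, and likewise for ${\bf v},{\bf w},{\bf x}$ with the \emph{same} permutation $\pi_\mu$. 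This is the same observation used in Lemma~\ref{t3l1}.

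\emph{Existence of four contexts.} The image $L_\mu(C)=\{{\bf u}^{\pi_\mu(\alpha)},{\bf v}^{\pi_\mu(\beta)},{\bf w}^{\pi_\mu(\gamma)},{\bf x}^{\delta'}\}$, with $\delta'=\pi_\mu(\delta)$, is again an orthonormal basis, and it lies inside $\mathrm H[{\bf u}]\cup\mathrm H[{\bf v}]\cup\mathrm H[{\bf w}]\cup\mathrm H[{\bf x}]$. For fixed $\alpha$ the map $\mu\mapsto\pi_\mu(\alpha)$ is a bijection, so the four contexts $L_0(C),\dots,L_3(C)$ have pairwise different ${\bf u}$-children. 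They are therefore distinct, and each is of type (1-1-1-1). This step is also consistent with the explicit quadruples listed in Fig.~\ref{fig2}, such as $C^{1|0|2|3}_{\bf 5678},C^{0|1|3|2}_{\bf 5678},\dots$.

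\emph{At most four contexts.} Let $C'=\{{\bf u}^{a},{\bf v}^{b},{\bf w}^{c},{\bf x}^{d}\}$ be any type (1-1-1-1) context among the four extensions. Applying the appropriate $L_\mu$, I may assume $a=\alpha$. It then suffices to show that a type (1-1-1-1) context containing ${\bf u}^\alpha$ is unique, i.e.\ $b=\beta$, $c=\gamma$ and $d=\delta$. Suppose instead that $b\neq\beta$. Then ${\bf u}^\alpha$ is orthogonal to two distinct children ${\bf v}^\beta,{\bf v}^b$. By Lemma~\ref{t3l2}, shifting indices, every child ${\bf u}^{\alpha'}$ is orthogonal to two children of ${\bf v}$. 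In quaternion language, with $p=\theta(\vec u)\overline{\theta(\vec v)}$ and the orthogonality condition $\operatorname{Re}(\overline{\mathrm e_\beta}\mathrm e_\alpha p)=0$ as in \eqref{t3l2e2}, this says that $p$ has at least two vanishing coordinates. Since ${\bf u},{\bf v}$ are Hamilton-distinct, $p$ has exactly two vanishing coordinates. The orthogonality pattern is then exactly that of a (2-2) pair: $\mathrm{Span}\{{\bf v}^\beta,{\bf v}^b\}$ equals the span of the two ${\bf u}$-children other than ${\bf u}^\alpha$ and its partner ${\bf u}^{\alpha''}$.

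In particular ${\bf v}^\beta$ would lie in the span of two ${\bf u}$-children. Then ${\bf w}^\gamma$ and ${\bf x}^\delta$, being orthogonal to ${\bf u}^\alpha$ and ${\bf v}^\beta$, must lie in a two-dimensional space spanned by ${\bf u}^{\alpha''}$ together with a combination of ${\bf u}$-children. Completing this basis forces a context of type (2-1-1) or (3-1) involving $\mathrm H[{\bf u}]$. Such contexts are excluded by Lemmas~\ref{t3l3} and~\ref{t3l4}, or else ${\bf w}^\gamma\in\mathrm H[{\bf u}]$, which contradicts Hamilton-distinctness. The same argument applied to ${\bf w}$ and ${\bf x}$ gives $c=\gamma$ and $d=\delta$.

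This last step is the main obstacle. The delicate point is turning ``${\bf v}^\beta\in\mathrm{Span}$ of ${\bf u}$-children'' into a genuine contradiction, because ${\bf w}^\gamma$ and ${\bf x}^\delta$ need not themselves be ${\bf u}$-children. My first attempt is a direct coordinate computation. Normalize $\theta(\vec u)=1$, so that ${\bf u}^\mu=[\mathrm e_\mu]$. Then $\theta(\vec v)$ has exactly two nonzero coordinates, and I would write out explicitly which ${\bf w}$- and ${\bf x}$-children can be orthogonal to both ${\bf u}^\alpha$ and ${\bf v}^\beta$. I would then show that Hamilton-distinctness of ${\bf w}$ and ${\bf x}$ from ${\bf u}$ and ${\bf v}$ is incompatible with their being mutually orthogonal inside the residual plane, in the same spirit as the proof of Lemma~\ref{t3l3}.
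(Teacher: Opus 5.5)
\textbf{Verdict: there is a genuine gap, in the ``at most four'' direction.}

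\textbf{What works.} Your existence half is correct, and tidier than the paper's appeal to Lemma~\ref{t3l2}. Left multiplication by $\mathrm e_\mu$ is orthogonal and permutes the children of every extension by the same permutation, so $L_0(C),\dots,L_3(C)$ are four distinct (1-1-1-1) contexts. Reducing ``at most four'' to uniqueness of the (1-1-1-1) context through ${\bf u}^\alpha$ is also valid; the paper makes this normalization implicitly before its cases C-1 to C-3. Your (2-2) claim holds as well. By Table~\ref{tabs1} the overlap of ${\bf u}^\mu$ and ${\bf v}^\nu$ depends only on $\mu\oplus\nu$ (XOR of the two-bit labels), so the partner is ${\bf u}^{\alpha''}$ with $\alpha''=\alpha\oplus\beta\oplus b$.

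\textbf{The gap.} The contradiction step is not merely unfinished. The facts you invoke there, that $C$ is a context and ${\bf u}^\alpha\perp{\bf v}^\beta,{\bf v}^b$, are jointly consistent. In $\mathrm{H}[\mathcal{V}'_6]$ (Construction~\ref{const1}) take ${\bf u},{\bf v},{\bf w},{\bf x}={\bf u}_1,{\bf u}_2,{\bf u}_3,{\bf u}_4$, $C=C^{0|0|0|0}_{\bf 1234}$ and $b=1$. Then:
\begin{itemize}
\item ${\bf u}_1^0$ is orthogonal to both ${\bf u}_2^0$ and ${\bf u}_2^1$;
\item the residual plane is $\mathrm{Span}\{{\bf u}_1^1,{\bf u}_2^1\}$;
\item ${\bf u}_3^0,{\bf u}_4^0$ are an orthogonal pair inside that plane, with all four parents Hamilton-distinct.
\end{itemize}
So neither ``completing this basis forces a (2-1-1) or (3-1) context'' nor your fallback (that ${\bf w},{\bf x}$ cannot be mutually orthogonal inside the residual plane) can be proved. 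In fact ${\bf u}_1^0$ and ${\bf u}_2^1$ do lie in a second (1-1-1-1) context, $C^{0|1|0|0}_{\bf 1256}$, only with different partners.

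\textbf{The missing ingredient} is the rest of the hypothetical second context $C'=\{{\bf u}^\alpha,{\bf v}^b,{\bf w}^c,{\bf x}^d\}$. Since $\{{\bf u}^\alpha,{\bf u}^{\alpha''},{\bf v}^\beta,{\bf v}^b\}$ is an orthonormal basis, $C$ puts ${\bf w}^\gamma$ in $\mathrm{Span}\{{\bf u}^{\alpha''},{\bf v}^b\}$, while $C'$ puts ${\bf w}^c$ in $\mathrm{Span}\{{\bf u}^{\alpha''},{\bf v}^\beta\}$.
\begin{itemize}
\item If $c=\gamma$, then ${\bf w}^\gamma$ lies in the intersection, which is the ray ${\bf u}^{\alpha''}$. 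This contradicts Hamilton-distinctness of ${\bf u}$ and ${\bf w}$ (Lemma~\ref{t3l1}).
\item If $c\neq\gamma$, then ${\bf w}^\gamma\perp{\bf w}^c$. Writing $\vec w^\gamma=s\,\vec u^{\alpha''}+t\,\vec v^b$ and $\vec w^c=s'\vec u^{\alpha''}+t'\vec v^\beta$ gives $ss'=0$. Hence ${\bf w}^\gamma={\bf v}^b$ or ${\bf w}^c={\bf v}^\beta$, contradicting Hamilton-distinctness of ${\bf v}$ and ${\bf w}$.
\end{itemize}
Therefore $b=\beta$, and the same argument with roles permuted gives $c=\gamma$ and $d=\delta$.

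With this step repaired, your route is a coordinate-free alternative to the paper's proof. The paper instead normalizes ${\bf u}^0=[(1,0,0,0)]$, so that ${\bf v}^0,{\bf w}^0,{\bf x}^0$ are purely imaginary. It then rules out the competing index patterns C-1 to C-3 by showing that the required vanishing coordinates push some child into $\mathrm{H}[{\bf u}]$ or into another parent's extension.
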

\begin{proof}
Without loss of any generality let $C^{0000}_{\bf uvwx}:=\left\{{\bf u}^0,{\bf v}^0,{\bf w}^0,{\bf x}^0\right\}$ be a type (1-1-1-1) emergent context. Therefore we have 
\begin{align}
&\hspace{1cm}\tr\left[{\bf a}^0{\bf b}^0\right]=0,~\text{for all}~{\bf a}^0,{\bf b}^0\in\left\{{\bf u}^0,{\bf v}^0,{\bf w}^0,{\bf x}^0\right\}~\text{with}~{\bf a}^0\neq{\bf b}^0.
\end{align}
Further, Lemma~\ref{t3l2} ensures that
\begin{align}
&\tr\left[{\bf a}^\mu{\bf b}^\mu\right]=0,~\text{for all}~{\bf a}^\mu,{\bf b}^\mu\in\left\{{\bf u}^\mu,{\bf v}^\mu,{\bf w}^\mu,{\bf x}^\mu\right\}~\text{with}~{\bf a}^\mu\neq{\bf b}^\mu;~\&~\forall~\mu\in\{0,1,2,3\},
\end{align}
thereby assuring all of $C^{\mu\mu\mu\mu}_{\bf uvwx}$ to be type (1-1-1-1) emergent context for $\mu\in\{0,1,2,3\}$.

\noindent Furthermore, not more than four type (1-1-1-1) emergent context can occur in $\mathrm{H}\left[\{{\bf u},{\bf v},{\bf w},{\bf x}\}\right]$. To prove this claim, without loss of any generality we can consider:
\begin{align}
&\Vec u^0= (1,0,0,0),~~ \vec v^0 = (0,v_1,v_2,v_3),~~ \vec w^0= (0,w_1,w_2,w_3),~~\vec x^0 = (0,x_1,x_2,x_3);\nonumber\\
&\hspace{3cm}\text{s. t. } \sum_{i=1}^3a_ib_i=0;~~ a_i,b_i \in \{v_i,w_i,x_i\}~\&~ a_i\ne b_i.
\end{align}
The Hamilton extension of these vectors are

\begin{align}
\left.\begin{aligned}
\mathtt{H}[\Vec u]&\equiv\big\{\Vec u^0=(1,0,0,0),~~\quad\quad \Vec u^1=(0,1,0,0),\qquad\qquad\Vec u^2=(0,0,1,0),~~\qquad\qquad\Vec u^3=(0,0,0,1)\big\},\\
\mathtt{H}[\Vec v]&\equiv\big\{\Vec v^0=(0,v_1,v_2,v3),~\quad\Vec v^1=(-v_1,0,-v_3,v_2),~\quad\Vec v^2=(-v_2,v_3,0,-v_1),~\quad\Vec v^3=(-v_3,-v_2,v_1,0)\big\},\\
\mathtt{H}[\Vec w]&\equiv\big\{\Vec w^0=(0,w_1,w_2,w_3),~\Vec w^1=(-w_1,0,-w_3,w_2),~\Vec w^2=(-w_2,w_3,0,-w_1),~\Vec w^3=(-w_3,-w_2,w_1,0)\big\},\\
\mathtt{H}[\Vec x]&\equiv\big\{\Vec x^0=(0,x_1,x_2,x_3),\quad\Vec x^1=(-x_1,0,-x_3,x_2),\quad\Vec x^2=(-x_2,x_3,0,-x_1),~~~\Vec x^3=(-x_3,-x_2,x_1,0)\big\}
\end{aligned}\right\}.   
\end{align}

\noindent Consider the following three cases: 
\begin{itemize}
\item[\textbf{C-1:}]  Let $\{\Vec u^0,\Vec v^1,\Vec w^2,\Vec x^3\}$ forms a type (1-1-1-1) emergent context, which implies that 
\begin{align*}
v_1=w_2=x_3=0.    
\end{align*}
Furthermore, $\Vec v^0\perp\Vec w^0$ implies either $v_3=0$ or $w_3=0$. Consequently, either $\Vec v^0\in\mathrm H[\Vec u^0]$ or $\Vec w^0\in\mathrm H[\Vec u^0]$, thereby contradicting the assumption that they are Hamilton distinct.

\item[\textbf{C-2:}] Let $\{\Vec u^0,\Vec v^0,\Vec w^\mu,\Vec x^\nu\}$ is a type (1-1-1-1) emergent context, where $\mu,\nu\in\{1,2,3\}$. Then 
\begin{align*}
\Span\left\{\Vec w^0, \Vec x^0,\Vec w^\mu, \Vec x^\nu\right\}\perp\Span\{\Vec u^0,\Vec v^0\}.    
\end{align*}
Furthermore, $\Vec w^0\perp\Vec w^\mu\implies\Vec x^0=\Vec w^\mu$, again  contradicting the Hamilton distinctness assumption.

\item[\textbf{C-3:}] Let $\{\Vec u^0,\Vec v^0,\Vec w^0,\Vec x^\mu\}$ is a type (1-1-1-1) emergent context, where $\mu\in\{1,2,3\}$. This implies
\begin{align*}
\Vec x^0\perp\Span\{\Vec u^0,\Vec v^0,\Vec w^0\}\perp\Vec x^\mu.   
\end{align*}
However, this is a contradiction as $\{\Vec u^0,\Vec v^0,\Vec w^0\}$ spans a 3 dimensional space and $\Vec x^0\perp\Vec x^\mu$
\end{itemize}
\noindent Any other type (1-1-1-1) emergent context boils down to these three cases under relabeling of the indices, and hence  concludes the proof.
\end{proof}

\begin{lemma}\label{t3l7}
Let ${\bf u},{\bf v},{\bf w},{\bf x},{\bf y}\in\mathbb{RP}^3$ be Hamilton-distinct. At most one four-element subset of $\{{\bf u},{\bf v},{\bf w},{\bf x},{\bf y}\}$ can give rise to a type $(1\text{-}1\text{-}1\text{-}1)$ emergent context.
\end{lemma}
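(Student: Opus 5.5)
The approach is by contradiction. Suppose two distinct four-element subsets of $\{{\bf u},{\bf v},{\bf w},{\bf x},{\bf y}\}$ both carry a type (1-1-1-1) emergent context. Two distinct four-subsets of a five-set share exactly three elements, so after relabeling they are $\{{\bf u},{\bf v},{\bf w},{\bf x}\}$ and $\{{\bf u},{\bf v},{\bf w},{\bf y}\}$. I will show that the second context forces ${\bf y}$, or one of ${\bf u},{\bf v},{\bf w},{\bf x}$, into the Hamilton extension of another ray, which contradicts Hamilton-distinctness by Lemma~\ref{t3l1}.

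\textbf{Normalisation.} By Lemma~\ref{t3l1}, each ray may be replaced by any member of its own Hamilton extension without changing $\mathrm H[\cdot]$. I therefore choose representatives so that the first context is $\{{\bf u}^0,{\bf v}^0,{\bf w}^0,{\bf x}^0\}$. Applying a rotation of $\mathbb{R}^4$ that fixes the Hamilton structure (for example left multiplication by a unit quaternion), I take $\theta(\vec u)=1$. Then $\vec v^0,\vec w^0,\vec x^0$ are mutually orthogonal pure-imaginary unit quaternions, exactly as in the proof of Lemma~\ref{t3l6}. By Lemma~\ref{t3l6}, the second subset carries four contexts, which by Lemma~\ref{t3l2} are index-shifts of one another. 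These four contexts use four different children of ${\bf u}$, so one of them contains ${\bf u}^0$. I will therefore assume the second context is $\{{\bf u}^0,{\bf v}^b,{\bf w}^c,{\bf y}^d\}$.

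\textbf{Case analysis on $(b,c)$.} This mirrors cases C-1 to C-3 of Lemma~\ref{t3l6}.
\begin{itemize}
\item[(a)] $b=c=0$. Then ${\bf y}^d$ is orthogonal to ${\bf u}^0,{\bf v}^0,{\bf w}^0$, so ${\bf y}^d={\bf x}^0$. Hence ${\bf y}\in\mathrm H[{\bf x}]$, a contradiction.
\item[(b)] Exactly one of $b,c$ is zero, say $b=0\neq c$. Then ${\bf w}^c$ and ${\bf w}^0$ both lie in $\{{\bf u}^0,{\bf v}^0\}^\perp=\Span\{\vec w^0,\vec x^0\}$, and ${\bf w}^c\perp{\bf w}^0$. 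So ${\bf w}^c={\bf x}^0$, a contradiction.
\item[(c)] $b=c\neq0$. Orthogonality of ${\bf v}^b$ and ${\bf w}^b$ to ${\bf u}^0$ says that the real part of $\mathrm e_b\theta(\vec v)$ and of $\mathrm e_b\theta(\vec w)$ vanishes, i.e.\ $v_b=w_b=0$. Thus $\theta(\vec v)$ and $\theta(\vec w)$ are orthogonal unit vectors in $\mathrm{Im}\,\mathbb H$, both orthogonal to $\mathrm e_b$. Since $\theta(\vec x)$ is orthogonal to both in the three-dimensional space $\mathrm{Im}\,\mathbb H$, it follows that $\theta(\vec x)=\pm\mathrm e_b$. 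Hence ${\bf x}\in\mathrm H[{\bf u}]$, a contradiction.
\item[(d)] $b\neq c$, both nonzero. As in C-1, $v_b=0$ and $w_c=0$. Writing out ${\bf v}^b\perp{\bf w}^c$ gives $\operatorname{Re}[\mathrm e_b\theta(\vec v)\overline{\mathrm e_c\theta(\vec w)}]=0$. Together with $\vec v^0\perp\vec w^0$, this should force one remaining coordinate of $\vec v$ or $\vec w$ to vanish. That makes $\theta(\vec v)$ or $\theta(\vec w)$ equal to $\pm\mathrm e_k$, i.e.\ an element of $\mathrm H[{\bf u}]$, a contradiction.
\end{itemize}

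\textbf{Expected obstacles.} The main difficulty is making case (d) and the normalisation step rigorous. For the normalisation, I must check that the four contexts of Lemma~\ref{t3l6} really use all four children of ${\bf u}$; this follows from Lemma~\ref{t3l2}, since the index shifts act as permutations on each Hamilton extension. For case (d), I would expand the quaternion product for the three concrete pairs $(b,c)\in\{(1,2),(1,3),(2,3)\}$, up to sign and symmetry, and check that the single extra orthogonality relation reduces to a product of two coordinates being zero. Each branch of that product then lands in C-1's contradiction. If the computation does not close that cleanly, I would fall back on the span argument: the three vectors ${\bf v}^b,{\bf w}^c,{\bf y}^d$ span the same space as ${\bf v}^0,{\bf w}^0,{\bf x}^0$.
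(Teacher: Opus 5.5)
Your argument is correct and follows essentially the paper's route: reduce to the subsets $\{{\bf u},{\bf v},{\bf w},{\bf x}\}$ and $\{{\bf u},{\bf v},{\bf w},{\bf y}\}$, normalise the first context to $\{{\bf u}^0,{\bf v}^0,{\bf w}^0,{\bf x}^0\}$ with $\theta(\vec u)=1$, and split on the index pattern of ${\bf u},{\bf v},{\bf w}$ in the second context, where your cases (a), (b)--(c) and (d) correspond to the paper's C-1, C-2 and C-3 (your span argument in (b) and your explicit treatment of (c) are cleaner than the paper's C-2). Two small repairs are needed: the structure-preserving normalisation is \emph{right} multiplication by $\overline{\theta(\vec u)}$, which commutes with the left multiplications by $\mathrm e_\alpha$ that define $\mathrm H$ (left multiplication by a generic unit quaternion does not map Hamilton extensions to Hamilton extensions); and case (d) closes without further computation, since with $k$ the third nonzero index the relation $\vec v^0\perp\vec w^0$ already reads $v_kw_k=0$, forcing $\theta(\vec v)=\pm\mathrm e_c$ or $\theta(\vec w)=\pm\mathrm e_b$, i.e.\ a ray of $\mathrm H[{\bf u}]$.
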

\begin{proof}
Without loss of generality let type (1-1-1-1) emergent context appears among $\bf u, \bf v, \bf w, \bf x$, and suppose they are of the form  $C_{\bf uvwx}^{\mu\mu\mu\mu}$ for $\mu\in\{0,1,2,3\}$. Contrary to the claim, without loss of generality, let us assume that $\bf u, \bf v, \bf w, \bf y$ also give rise to type (1-1-1-1) emergent contexts. Analyzing the following cases suffice our purpose. 

\begin{itemize}
\item[\textbf{C-1:}] Emergent context of form $C_{\bf uvwy}^{\mu\mu\mu\xi}$: The context $C_{\bf uvwx}^{\mu\mu\mu\mu}$ along with the fact that the projectors are Hamilton distinct restricts $\bf y^\xi$ to be orthogonal to at most two among $\bf u^\mu, \bf v^\mu, \bf w^\mu$, thereby prohibiting $C_{\bf uvwy}^{\mu\mu\mu\xi}$ to be a type (1-1-1-1) emergent context. 

\item[\textbf{C-2:}] Emergent context of form $C_{\bf uvwy}^{\mu\mu\nu\xi}$: Without loss of generality, the vectors ${\bf u},{\bf v},{\bf w},{\bf x},{\bf y}\in\mathbb{RP}^3$ and their and Hamilton extension can be taken as of the following form:
\begin{align}
\left.\begin{aligned}
\mathtt{H}[\Vec u]&\equiv\big\{\Vec u^0=(1,0,0,0),~~~\qquad \Vec u^1=(0,1,0,0),~~~\quad\qquad\Vec u^2=(0,0,1,0),~~~~\quad\qquad\Vec u^3=(0,0,0,1)\big\},\\
 \mathtt{H}[\Vec v]&\equiv\big\{\Vec v^0=(0,v_1,v_2,v_3),~~\quad\Vec v^1=(-v_1,0,-v_3,v_2),\quad\Vec v^2=(-v_2,v_3,0,-v_1),~\quad\Vec v^3=(-v_3,-v_2,v_1,0)\big\},\\ 
\mathtt{H}[\Vec w]&\equiv\big\{\Vec w^0=(0,w_1,w_2,w_3),~\Vec w^1=(-w_1,0,-w_3,w_2),~\Vec w^2=(-w_2,w_3,0,-w_1),~\Vec w^3=(-w_3,-w_2,w_1,0)\big\},\\ 
\mathtt{H}[\Vec x]&\equiv\big\{\Vec x^0=(0,x_1,x_2,x_3),~~~~\Vec x^1=(-x_1,0,-x_3,x_2),~~~\Vec x^2=(-x_2,x_3,0,-x_1),~~~\Vec x^3=(-x_3,-x_2,x_1,0)\big\},\\ 
\mathtt{H}[\Vec y]&\equiv\big\{\Vec y^0=(y_0,y_1,x_2,x_3),~~~\Vec y^1=(-x_1,y_0,-x_3,x_2),~\Vec y^2=(-y_2,y_3,y_0,-y_1),~~~\Vec y^3=(-y_3,-y_2,y_1,y_0)\big\}
\end{aligned}\right\}.   
\end{align}

\noindent Furthermore, analyzing the case $C_{\bf uvwy}^{00\nu\xi} $ is sufficient. Existence of an emergent context $C_{\bf uvwy}^{00\nu\xi}$ implies $w_\nu=0$ and $w_{\nu\oplus_3 1} = v_{\nu\oplus_3 1} , w_{\nu\oplus_3 2} = v_{\nu\oplus_3 2}$. But, $\Vec v^0 \perp \Vec w^0\implies v_{\nu\oplus_3 1} = v_{\nu\oplus_3 2}=0$, thereby forcing $\Vec w $ to be the zero vector, and thus leading to a contradiction.

\item[\textbf{C-3:}] Emergent context of form $C_{\bf uvwy}^{\mu\nu\eta\xi}$: Analyzing the case $C_{\bf uvwy}^{012\xi} $ is sufficient. The context $C_{\bf uvwy}^{012\xi}$ implies $v_1 = w_2 =0$ and $\Vec v^0 \perp \Vec w^0\implies v_3 = 0 \text{ or }w_3=0$. Which further enforces either $\Vec v^0\in \mathsf{H}[\Vec u]$ or $\Vec w^0\in \mathsf{H}[\Vec u]$, thereby leading to a contradiction.
\end{itemize}
\noindent This completes the proof. 
\end{proof}

\begin{lemma}\label{t3l8}
Let ${\bf u},{\bf v},{\bf w},{\bf x},{\bf y}\in\mathbb{RP}^{3}$ be pairwise Hamilton-distinct vectors. If $\{{\bf u},{\bf v},{\bf w},{\bf x}\}$ induces a type (1-1-1-1) emergent context, then among the four pairs $({\bf y},{\bf u})$, $({\bf y},{\bf v})$, $({\bf y},{\bf w})$, and $({\bf y},{\bf x})$, at most two can generate a type (2-2) emergent context.
\end{lemma}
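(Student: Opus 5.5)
The plan is to work entirely in the quaternion picture of Appendix A-2 and to turn both hypotheses into algebraic constraints on unit quaternions. Two reductions come first.

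\emph{Normalisation.} Right multiplication $q\mapsto q\,r$ by a fixed unit quaternion $r$ commutes with every left multiplication by $\mathrm e_\alpha$. It also preserves inner products. Hence it maps each $\mathrm H[{\bf a}]$ onto $\mathrm H[{\bf a}r]$ and preserves every orthogonality relation, so it preserves both the Hamilton and the emergent contexts. Choosing $r=\overline{\theta(\vec u)}$, I may take $\theta(\vec u)=1$. Then the context $C^{0|0|0|0}_{\bf uvwx}$ of Lemma~\ref{t3l6} says that $v:=\theta(\vec v)$, $w:=\theta(\vec w)$, $x:=\theta(\vec x)$ are pure imaginary unit quaternions that are mutually orthogonal in $\mathbb R^3$. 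Consequently $vw=\pm x$, and similarly for the other pairs, and each of $v,w,x$ has at least two nonzero imaginary coordinates, since otherwise it would lie in $\mathrm H[{\bf u}]$.

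\emph{Criterion for a (2-2) context.} Using Eq.~\eqref{t3l2e2}, the inner product $\langle \mathrm e_\alpha y,\mathrm e_\beta a\rangle$ equals $\operatorname{Re}(\bar{\mathrm e}_\beta\mathrm e_\alpha\, y\bar a)$. This is $\pm$ a single coordinate of $q_a:=y\bar a$, with the pattern recorded in Lemma~\ref{t3l2} and Table~\ref{tabs1}. A $2\times2$ block of zeros in this $4\times4$ table, which is exactly a (2-2) context by Lemma~\ref{t3l5}, therefore occurs if and only if $q_a$ has exactly two vanishing coordinates. Equivalently, $q_a$ lies in a coordinate $2$-plane $\mathrm{Span}\{\mathrm e_\gamma,\mathrm e_\delta\}$. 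Three vanishing coordinates would mean ${\bf y}\in\mathrm H[{\bf a}]$, which is excluded by Hamilton-distinctness. I would state this as a small auxiliary claim before the main argument.

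\emph{Main argument.} Suppose, for contradiction, that ${\bf y}$ forms (2-2) contexts with three of ${\bf u},{\bf v},{\bf w},{\bf x}$. The roles of these four are symmetric: after renormalising by right multiplication, any one of them can be made equal to $1$, and the (1-1-1-1) relation is preserved. So I may assume the three partners are ${\bf u},{\bf v},{\bf w}$. This gives three conditions:
\begin{itemize}
\item $y\in P_1$,
\item $yv\in P_2$, since $\bar v=-v$,
\item $yw\in P_3$,
\end{itemize}
where each $P_i$ is a coordinate $2$-plane, so there are $6^3$ choices up to symmetry.

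I would reduce the choices using the symmetry group generated by simultaneous permutations of $\mathrm e_1,\mathrm e_2,\mathrm e_3$ and sign changes, which are quaternion automorphisms, together with left multiplication by $\mathrm e_\alpha$. This leaves essentially two cases for $P_1$: $\mathrm{Span}\{1,\mathrm e_1\}$ or $\mathrm{Span}\{\mathrm e_2,\mathrm e_3\}$. In the first case write $y=\cos t+\sin t\,\mathrm e_1$. Then $yv\in P_2$ imposes two linear conditions on $(v_1,v_2,v_3)$ whose coefficients are $\cos t$ and $\sin t$. Combined with the analogous conditions on $w$, the relation $v\perp w$, and the requirement that $v,w$ have at least two nonzero coordinates, these should force one of the following:
\begin{itemize}
\item $t\in\frac{\pi}{2}\mathbb Z$, so ${\bf y}\in\mathrm H[{\bf u}]$;
\item $v$ or $w$ is a coordinate vector;
\item ${\bf v}$ and ${\bf w}$ are Hamilton-equal.
\end{itemize}
Each of these contradicts pairwise Hamilton-distinctness. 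The second case for $P_1$ reduces to the first by left multiplication by $\mathrm e_2$. I expect this case analysis over the plane triples $(P_1,P_2,P_3)$ to be the main obstacle: the symmetry reductions must be done carefully so that no configuration is missed. If the bookkeeping becomes unwieldy, I would instead compare dimensions. The set of $y$ satisfying two such conditions is finite up to scale. I would then show that the third condition is incompatible with $w$ being orthogonal to $v$ and not a coordinate vector, checking the finitely many surviving $y$ directly.
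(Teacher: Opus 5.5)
Your proof is correct, and it takes a different route from the paper's.

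\textbf{The paper's route.} The paper assumes ${\bf y}$ pairs with ${\bf u}$ and ${\bf v}$. It then asserts, without derivation, an explicit coordinate form of all five Hamilton extensions, in which ${\bf u},{\bf v}$ themselves form a (2-2) context and $\vec y^{\,0},\vec v^{\,0}\in\mathrm{Span}\{\mathrm e_2,\mathrm e_3\}$. Finally it checks by hand that $\vec y^{\,0}$ is orthogonal to no element of $\mathrm H[{\bf w}]$ or $\mathrm H[{\bf x}]$.

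\textbf{Your route.} You derive the configuration instead, from three ingredients: the normalisation $\theta(\vec u)=1$; the criterion that ${\bf y}$ forms a (2-2) context with ${\bf a}$ iff $y\bar a$ has exactly two zero coordinates; and the symmetry of the four roles.

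\textbf{Closing the step you left open.} The step you wrote as ``should force'' closes in one line. For $y=c+s\,\mathrm e_1$ with $cs\neq0$ and $a$ pure imaginary,
\[
ya=\bigl(-s a_1,\ c a_1,\ c a_2-s a_3,\ s a_2+c a_3\bigr).
\]
The first two entries vanish together, so the criterion forces $a_1=0$; otherwise $a_2=a_3=0$ and $a=\pm\mathrm e_1\in\mathrm H[{\bf u}]$. Thus $v_1=w_1=0$. Orthogonality of $v,w$ inside $\mathrm{Span}\{\mathrm e_2,\mathrm e_3\}$ then gives $w=\pm\mathrm e_1v\in\mathrm H[{\bf v}]$ (equivalently $x=\pm\mathrm e_1$). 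This is exactly your third anticipated outcome. No $6^3$ case analysis or dimension count is needed.

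\textbf{Two small repairs.} First, only the determinant-one signed permutations of $\mathrm e_1,\mathrm e_2,\mathrm e_3$ are quaternion automorphisms. The cyclic ones already suffice for your reduction. Second, the left multiplication by $\mathrm e_2$ must act on $y$ alone, as a change of representative licensed by Lemma~\ref{t3l1}. Applied to the whole configuration, it destroys $\theta(\vec u)=1$ and, after renormalising, merely conjugates the planes.

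\textbf{What each approach buys.} Your approach proves that the paper's posited configuration is forced: pairing with ${\bf u}$ and ${\bf v}$ forces $v_1=0$, i.e.\ ${\bf u},{\bf v}$ form a (2-2) context. It also gives a clean reason for the bound. ${\bf y}$ pairs with $a\in\{v,w,x\}$ iff $a_1=0$, while $(v_1,w_1,x_1)$ is a unit vector with no entry $\pm1$, so at most one entry vanishes.

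The paper's computation yields a stronger statement: there is no orthogonality at all between ${\bf y}$ and $\mathrm H[{\bf w}],\mathrm H[{\bf x}]$. This is what Theorem~\ref{theo3} later invokes. Your frame recovers it too. Here $w_1\neq0$ and $(w_2,w_3)\propto(-v_3,v_2)$. So the last two entries of $yw$ are a nonzero multiple of a quarter-turn of those of $yv$, which are nonzero. Hence every coordinate of $yw$ is nonzero.
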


\begin{proof}
Without loss of generality, assume that ${\bf y}$ forms type (2-2) emergent contexts with both ${\bf u}$ and ${\bf v}$. Then, up to a Hamilton-preserving change of basis, the corresponding vectors may be chosen as follows

\begin{align}
\left.\begin{aligned}
\mathtt{H}[\Vec u]&\equiv\big\{\Vec u^0=(1,0,0,0),~~~~\qquad \Vec u^1=(0,1,0,0),~\qquad\qquad\Vec u^2=(0,0,1,0),~~~\qquad\Vec u^3=(0,0,0,1)\big\},\\
\mathtt{H}[\Vec v]&\equiv\big\{\Vec v^0=(0,0,v_2,v_3),~\qquad\Vec v^1=(0,0,-v_3,v_2),~~\qquad\Vec v^2=(-v_2,v_3,0,0),~\quad\Vec v^3=(-v_3,-v_2,0,0)\big\},\\
\mathtt{H}[\Vec w]&\equiv\big\{\Vec w^0=(0,w_1,-v_3,v_2),~\Vec w^1=(-w_1,0,-v_2,-v_3),~\Vec w^2=(v_3,v_2,0,-w_1),~\Vec w^3=(-v_2,v_3,w_1,0)\big\},\\
\mathtt{H}[\Vec x]&\equiv\big\{\Vec x^0=(0,x_1,-v_3,v_2),~~~\Vec x^1=(-x_1,0,-v_2,-v_3),~~\Vec x^2=(v_3,v_2,0,-x_1),~~\Vec x^3=(-v_2,v_3,x_1,0)\big\},\\
\mathtt{H}[\Vec y]&\equiv\big\{\Vec y^0=(0,0,y_2,y_3),~\qquad\Vec y^1=(0,0,-y_3,y_2),~~\qquad\Vec y^2=(-y_2,y_3,0,0),\quad\Vec y^3=(y_3,y_2,0,0)\big\}
\end{aligned}\right\}.   
\end{align}
\noindent Consider any vector $\Vec w^\mu$ from the Hamilton context $\mathtt{H}[\Vec w^0]=\{\Vec w^0,\Vec w^1,\Vec w^2,\Vec w^3\}$. Suppose that $\Vec y^0$ is orthogonal to some $\Vec w^\mu$. Using the explicit forms of the vectors, the orthogonality condition $\langle \Vec y^0,\Vec w^\mu\rangle =0$ implies either
\begin{itemize}
\item[(i)] $\Vec y^0$ belongs to the Hamilton context of $\Vec v^0$, or
\item[(ii)] one of the vectors of $\mathtt{H}[\Vec w^0]$ belongs to the Hamilton context of $\Vec v^0$, 
\end{itemize}
thereby contradicting the assumption of pairwise Hamilton-distinctness. An identical argument applies to $\mathtt{H}[\Vec x^0]$. Hence ${\bf y}$ cannot form a type $(2,2)$ emergent context with either ${\bf w}$ or ${\bf x}$. This completes the proof.
\end{proof}

\begin{lemma}\label{t3l9}
Let ${\bf u},{\bf v},{\bf w},{\bf x}\in\mathbb{RP}^{3}$ be pairwise Hamilton-distinct and form type (1-1-1-1) emergent contexts. Suppose that a type (2-2) emergent context exists between two vectors ${\bf a}$ and ${\bf b}$ from the set $\{{\bf u},{\bf v},{\bf w},{\bf x}\}$. Let, $\{{\bf c},{\bf d}\}=\{{\bf u},{\bf v},{\bf w},{\bf x}\}\setminus\{{\bf a},{\bf b}\}$. Then (1) ${\bf c}$ and ${\bf d}$ also form a type (2-2) emergent context; and (2) no type $(2,2)$ emergent context exists between any of the pairs $({\bf a},{\bf c}),~({\bf a},{\bf d}),~({\bf b},{\bf c}),~({\bf b},{\bf d})$.
\end{lemma}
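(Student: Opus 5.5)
The plan is to reduce everything to the \emph{relative quaternion} of a pair. For normalized representatives, $\langle \mathrm e_\alpha\theta(\vec p),\mathrm e_\beta\theta(\vec q)\rangle=\operatorname{Re}\big[\mathrm e_\alpha\,\theta(\vec p)\overline{\theta(\vec q)}\,\overline{\mathrm e_\beta}\big]$. So the whole orthogonality pattern between $\mathrm H[{\bf p}]$ and $\mathrm H[{\bf q}]$, and in particular whether a type (2-2) context exists, depends only on $r_{pq}:=\theta(\vec p)\overline{\theta(\vec q)}$. It is also unchanged under $r\mapsto -r$ and $r\mapsto\bar r$; the latter just swaps the roles of the two parents. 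Right multiplication of all representatives by a fixed unit quaternion preserves every $\mathrm H[\cdot]$ and every $r_{pq}$. Using it, I would normalize $\theta(\vec u)=1$, as in Lemmas~\ref{t3l6}--\ref{t3l8}.

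\textbf{Step 1: normal form.} By Lemma~\ref{t3l6} the (1-1-1-1) contexts are $C^{\mu\mu\mu\mu}_{\bf uvwx}$. Orthogonality of the $\mu=0$ one forces $\theta(\vec v),\theta(\vec w),\theta(\vec x)$ to be pure imaginary and mutually orthogonal. Hence they form an orthonormal triple $n_v,n_w,n_x$ in $\operatorname{Im}\mathbb H\cong\mathbb R^3$. The relative quaternions of the six pairs are then
\begin{align*}
r_{uv}=-n_v,\quad r_{uw}=-n_w,\quad r_{ux}=-n_x,\quad r_{wx}=-n_wn_x=\pm n_v,\quad r_{vx}=\pm n_w,\quad r_{vw}=\pm n_x .
\end{align*}
Here I use $n_wn_x=n_w\times n_x$ for orthogonal pure quaternions. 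So the three ways of splitting $\{{\bf u},{\bf v},{\bf w},{\bf x}\}$ into two pairs correspond exactly to the three vectors $n_v,n_w,n_x$, up to sign. Complementary pairs share the same relative quaternion up to sign.

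\textbf{Step 2: criterion for (2-2).} I would show that, for a unit pure imaginary $n$ that is not a coordinate axis, a type (2-2) context between $\mathrm H[1]$ and $\mathrm H[n]$ exists if and only if exactly one of the three imaginary coordinates of $n$ vanishes. The computation uses Table~\ref{tabs1} with $\vec u=(1,0,0,0)$. The spans $\mathrm{Span}\{\vec e_\alpha,\vec e_\beta\}$ are coordinate planes. For such a plane to contain two of the $\mathrm e_\gamma n$, two of the vectors $\vec n^\gamma$ must each have two zero coordinates. Inspecting $\vec n^\gamma$ for $n=(0,n_1,n_2,n_3)$, this happens exactly when some $n_i=0$; Lemma~\ref{t3l8}'s form $(0,0,v_2,v_3)$ is an instance. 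The case "two zeros" is excluded because then $n$ is a coordinate axis, i.e.\ ${\bf n}\in\mathrm H[{\bf u}]$, contradicting Hamilton-distinctness. By the invariances above, the criterion extends to any pair through its relative quaternion.

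\textbf{Step 3: conclude.} Part (1) is immediate from Steps 1 and 2: $r_{cd}=\pm r_{ab}$, or its conjugate, so one has a (2-2) context iff the other does. For part (2), relabel so that $\{{\bf a},{\bf b}\}=\{{\bf u},{\bf v}\}$, and choose coordinates with $n_v=(0,v_2,v_3)$, $v_2v_3\neq0$. The remaining cross pairs have relative quaternions $\pm n_w$ and $\pm n_x$, so it suffices to show that $n_w,n_x$ have no zero coordinate. They span $n_v^\perp=\mathrm{Span}\{\mathrm e_1,(0,-v_3,v_2)\}$, so we may write $n_w=(\cos\varphi,-v_3\sin\varphi,v_2\sin\varphi)$. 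A zero coordinate forces $\cos\varphi=0$ or $\sin\varphi=0$. Then either $n_w$ or $n_x$ equals $\pm\mathrm e_1$, placing ${\bf w}$ or ${\bf x}$ in $\mathrm H[{\bf u}]$ and contradicting Hamilton-distinctness.

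The main obstacle I expect is Step 2, specifically checking the "only if" direction cleanly. One must rule out (2-2) spans that are not coordinate planes. This works because $\mathrm{Span}\{\vec u^\alpha,\vec u^\beta\}$ is always a coordinate plane when $\theta(\vec u)=1$, but the sign and index bookkeeping across the four $\vec n^\gamma$ needs care.
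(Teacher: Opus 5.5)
Your proof is correct, and it is organized differently from the paper's.

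\textbf{The paper's route.} The paper works directly in the explicit parametrization of Lemma~\ref{t3l8}, namely $\vec u^0=(1,0,0,0)$, $\vec v^0=(0,0,v_2,v_3)$, $\vec w^0=(0,w_1,-v_3,v_2)$, $\vec x^0=(0,x_1,-v_3,v_2)$.
\begin{itemize}
\item[(1)] It reads off the orthogonality of $\{\vec w^0,\vec w^1,\vec x^0,\vec x^1\}$ from these forms.
\item[(2)] It treats only the pair $({\bf u},{\bf w})$, by a case split on which $\vec u^\mu$ with $\mu\neq0$ is orthogonal to $\vec w^0$. The case $w_1=0$ puts $\vec w^1$ into $\mathrm H[\vec v]$, and $v_2=0$ or $v_3=0$ puts $\vec v$ into $\mathrm H[\vec u]$. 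It then says the other three cross pairs follow ``by the same argument''.
\end{itemize}

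\textbf{What your route buys.} Your relative quaternion $r_{pq}=\theta(\vec p)\overline{\theta(\vec q)}$ does the same underlying computation, but it gains two things.
\begin{itemize}
\item[(a)] The identity $r_{wx}=\pm r_{uv}$ transfers the entire cross-orthogonality pattern from $({\bf u},{\bf v})$ to $({\bf w},{\bf x})$, including $\vec w^0\perp\vec x^1$ and $\vec w^1\perp\vec x^0$. The paper's displayed check $\langle\vec w^0,\vec x^0\rangle=\langle\vec w^1,\vec x^1\rangle=0$ only restates what the (1-1-1-1) hypothesis already gives, so those two relations are left unverified there.
\item[(b)] The identities $r_{vx}=\pm n_w$ and $r_{vw}=\pm n_x$ genuinely reduce the pairs involving ${\bf v}$, which is not normalized to $1$, to the two pairs involving ${\bf u}$.
\end{itemize}

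\textbf{Your Step~2.} Your criterion is precisely the fact the paper uses tacitly when it asserts that a (2-2) context with ${\bf u}$ forces $\vec w^0\perp\vec u^\mu$ for some $\mu\neq0$. Your ``only if'' sketch is sound. Since $\mathrm H[1]$ is the standard basis, the plane complementary to $\mathrm{Span}\{\vec u^\alpha,\vec u^\beta\}$ in any (2-2) context is a coordinate plane. Hence two children $\vec n^\gamma,\vec n^\delta$ must share two zero positions, which forces some $n_i=0$.

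\textbf{Comparison.} The paper's route is shorter once its parametrization is granted. Yours is self-contained and leaves no case to analogy.

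\textbf{Minor wording point.} Right multiplication by a unit quaternion $q$ does not preserve each $\mathrm H[\cdot]$. It is an isometry sending each child $\mathrm e_\alpha\theta(\vec p)$ to $\mathrm e_\alpha\big(\theta(\vec p)q\big)$, so it maps Hamilton extensions onto Hamilton extensions index by index. That is the property you actually use.
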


\begin{proof}
Without loss of generality, assume that ${\bf u}$ and ${\bf v}$ form a type (2-2) emergent context. Then the corresponding vectors may be chosen of the form of Lemma~\ref{t3l8}. From the explicit forms of the vectors, we observe that $\langle \Vec w^0,\Vec x^0\rangle=\langle \Vec w^1,\Vec x^1\rangle=0$. Hence, $\left\{\Vec w^0,\Vec w^1,\Vec x^0,\Vec x^1\right\}$ forms a type (2-2) emergent context.

\noindent It remains to show that no type (2-2) emergent context can exist between a vector from $\{{\bf u},{\bf v}\}$ and a vector from $\{{\bf w},{\bf x}\}$. By symmetry, it suffices to consider the pair $({\bf u},{\bf w})$. Suppose that ${\bf u}$ and ${\bf w}$ form a type $(2,2)$ emergent context. Then there exists $\mu\in{1,2,3}$ such that $\langle \Vec w^0,\Vec u^\mu\rangle=0$. If $\mu=1$, then $w_1=0$, and consequently $\Vec w^1=(0,0,-v_2,-v_3)\in\mathtt{H}[\Vec v^0]$. Thus ${\bf w}$ and ${\bf v}$ are not Hamilton-distinct, a contradiction. If $\mu=2$ or $\mu=3$, then $v_2=0$ or $v_3=0$, respectively. In either case one obtains $\Vec v^0\in\mathtt{H}[\Vec u^0]$, contradicting the assumption that ${\bf u}$ and ${\bf v}$ are Hamilton-distinct. By the same argument, none of the pairs $({\bf u},{\bf x}),~({\bf v},{\bf w}),~({\bf v},{\bf x})$ can form a type $(2,2)$ emergent context.
\end{proof}

\begin{lemma}\label{t3l10}
Let ${\bf u},{\bf v},{\bf w},{\bf x},{\bf y}\in\mathbb{RP}^{3}$
be pairwise Hamilton-distinct. Assume that $\{{\bf u},{\bf v},{\bf w},{\bf x}\}$ forms a type (1-1-1-1) emergent context. If ${\bf a}$ forms type (2-2) emergent contexts with both ${\bf b}$ and ${\bf y}$, where ${\bf a},{\bf b}\in\{{\bf u},{\bf v},{\bf w},{\bf x}\}$, then ${\bf y}$ cannot form a type (2,2) emergent context with any other vector of $\{{\bf u},{\bf v},{\bf w},{\bf x}\}\setminus\{{\bf a}, {\bf b}\}$.
\end{lemma}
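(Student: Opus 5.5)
The plan is to combine Lemma~\ref{t3l9} with Lemma~\ref{t3l8}, and then rule out the remaining configurations by explicit coordinates. Since $\{{\bf u},{\bf v},{\bf w},{\bf x}\}$ induces type (1-1-1-1) contexts and ${\bf a}$ forms a type (2-2) context with ${\bf b}$, Lemma~\ref{t3l9} gives two facts. First, the complementary pair $\{{\bf c},{\bf d}\}=\{{\bf u},{\bf v},{\bf w},{\bf x}\}\setminus\{{\bf a},{\bf b}\}$ also forms a type (2-2) context. Second, no type (2-2) context joins $\{{\bf a},{\bf b}\}$ to $\{{\bf c},{\bf d}\}$. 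By hypothesis ${\bf y}$ already forms a (2-2) context with ${\bf a}$. Lemma~\ref{t3l8} then allows ${\bf y}$ at most one further (2-2) partner among $\{{\bf b},{\bf c},{\bf d}\}$. So it suffices to show that ${\bf y}$ has no (2-2) context with ${\bf c}$, and by the ${\bf c}\leftrightarrow{\bf d}$ symmetry the same argument covers ${\bf d}$.

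For this I will use the normal form from the proofs of Lemmas~\ref{t3l8} and \ref{t3l9}. Take ${\bf a}={\bf u}$ with $\Vec u^0=(1,0,0,0)$, so $\mathrm H[\Vec u]$ is the computational basis. Take ${\bf b}={\bf v}$ with $\Vec v^0=(0,0,v_2,v_3)$, where $v_2v_3\neq0$ by Hamilton-distinctness. Then ${\bf c}={\bf w}$ and ${\bf d}={\bf x}$ have $\Vec w^0=(0,w_1,-v_3,v_2)$ and $\Vec x^0=(0,x_1,-v_3,v_2)$, as forced by the (1-1-1-1) relations. Because ${\bf y}$ forms a (2-2) context with ${\bf u}$, two of its children lie in a coordinate 2-plane spanned by a pair of computational basis vectors. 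Lemma~\ref{t3l2} and the pairing in Lemma~\ref{t3l5} show that this plane must be one of the three pairings $\{12|34\},\{13|24\},\{14|23\}$ (after relabeling the index $0$ to the first coordinate). The $\{12|34\}$ pairing is the one already used by ${\bf u}$–${\bf v}$, which gives the form of $\mathrm H[\Vec y]$ written in Lemma~\ref{t3l8}. The other two pairings produce $\Vec y^0$ supported on $\{e_0,e_2\}$ or $\{e_0,e_3\}$ together with the complementary plane.

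In each of these three subcases I will impose $\langle \Vec y^\alpha,\Vec w^\mu\rangle=0$ for the pair of indices that a (2-2) context would require, using the table of Lemma~\ref{t3l2} to reduce this to a single scalar equation in $y_i,w_1,v_2,v_3$. I expect each equation to force one of the following: $y_2y_3=0$, $v_2v_3=0$, $w_1=0$, or $\Vec y^0\in\mathrm H[\Vec v]$. Each outcome contradicts pairwise Hamilton-distinctness, exactly as in the proofs of Lemmas~\ref{t3l8} and \ref{t3l9}. In the $\{12|34\}$ subcase this is essentially the computation in Lemma~\ref{t3l8}.

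The main obstacle is the bookkeeping in the two new pairings. There the (2-2) condition between ${\bf y}$ and ${\bf w}$ involves a 2-plane spanned by children of ${\bf w}$. I will need to check the full span equality of Lemma~\ref{t3l5}, not just a single orthogonality, and verify that every admissible index assignment collapses to a degenerate case. I would handle this by writing out the $4\times4$ Gram block between $\mathrm H[\Vec y]$ and $\mathrm H[\Vec w]$ once, using Table~\ref{tabs1}. A (2-2) context requires a $2\times2$ block of zeros, and that block's pattern is fixed by the Klein-group structure of Lemma~\ref{t3l2}. So only three vanishing conditions need to be checked per subcase.
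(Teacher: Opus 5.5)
Your proposal is correct and follows essentially the paper's proof. You normalize $\mathrm H[\vec u]$ to the computational basis with $C^{01|01}_{\bf uv}$ and split on the ${\bf u}$--${\bf y}$ pairing. If it coincides with the ${\bf u}$--${\bf v}$ pairing, you are in the setting of Lemma~\ref{t3l8}; the paper reaches this via Observation~\ref{t3o1}, which shows ${\bf y}$ then also forms a type (2-2) context with ${\bf v}$. If the pairing differs, direct inspection of the ${\bf y}$--${\bf w}$ inner products excludes it. Your appeal to Lemma~\ref{t3l9} is not needed, and treating both $\{13|24\}$ and $\{14|23\}$ is harmless where the paper takes one without loss of generality.

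One caution: in the new pairings a single orthogonality such as $\langle\vec y^0,\vec w^0\rangle=y_1w_1+y_3v_2=0$ is satisfiable, so the contradiction must come from the pair of vanishing conditions. For example, with $\vec y^0=(0,y_1,0,y_3)$ the entries $-y_3v_3$ and $y_1v_3$ are automatically nonzero, and the remaining $2\times2$ system has determinant $-(w_1^2+v_2^2)\neq0$, which forces $\vec y=0$. This is what your last paragraph anticipates, so your third-paragraph phrasing that each equation alone forces a degenerate case should be read that way.
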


\begin{proof}
Without loss of generality, suppose that there is a type (2-2) emergent context between ${\bf u}$ and ${\bf v}$ and another between ${\bf u}$ and ${\bf y}$. Let these contexts be are $C_{\bf uv}^{0\mu|0\nu}$ and $C_{\bf uy}^{0\xi|0\eta}$. We distinguish two cases.

\smallskip
\noindent
\textbf{Case 1: $(\mu=\xi)$.} In this case, Observation~\ref{t3o1} implies that there exists a type (2-2) emergent context between ${\bf v}$ and ${\bf y}$. Consequently, ${\bf y}$ forms type $(2,2)$ emergent contexts with both ${\bf u}$ and ${\bf v}$. Since $\{{\bf u},{\bf v},{\bf w},{\bf x}\}$ forms a type (1-1-1-1) emergent context, Lemma~\ref{t3l8} implies that ${\bf y}$ cannot form a type (2-2) emergent context with either ${\bf w}$ or ${\bf x}$.

\smallskip
\noindent
\textbf{Case 2: $(\mu\neq\xi)$.} Without loss of generality, let the contexts are $C_{{\bf uv}}^{01|01}$, $C_{{\bf uy}}^{02|02}$ and the corresponding vectors are of the form
\begin{align}
\left.\begin{aligned}
\mathtt{H}[\Vec u]&\equiv\big\{\Vec u^0=(1,0,0,0),~~~\quad\quad \Vec u^1=(0,1,0,0),~\qquad\qquad\Vec u^2=(0,0,1,0),~~~\quad\qquad\Vec u^3=(0,0,0,1)\big\},\\
\mathtt{H}[\Vec v]&\equiv\big\{\Vec v^0=(0,0,v_2,v_3),~~~~\quad\Vec v^1=(0,0,-v_3,v_2),~~~\quad\quad\Vec v^2=(-v_2,v_3,0,0),~~~~\quad\Vec v^3=(-v_3,-v_2,0,0)\big\},\\
\mathtt{H}[\Vec w]&\equiv\big\{\Vec w^0=(0,w_1,-v_3,v_2),~\Vec w^1=(-w_1,0,-v_2,-v_3),~\Vec w^2=(v_3,v_2,0,-w_1),~~~~~~\Vec w^3=(-v_2,v_3,w_1,0)\big\},\\
\mathtt{H}[\Vec x]&\equiv\big\{\Vec x^0=(0,x_1,v_3,-v_2),~~~\Vec x^1=(-x_1,0,v_2,v_3),~~~~\quad\Vec x^2=(-v_3,-v_2,0,-x_1),~\Vec x^3=(v_2,-v_3,x_1,0)\big\},\\
\mathtt{H}[\Vec y]&\equiv\big\{\Vec y^0=(0,y_1,0,y_3),~~~~\quad\Vec y^1=(-y_1,0,-y_3,0),~~~~\quad\Vec y^2=(0,y_3,0,-y_1),~\qquad\Vec y^3=(-y_3,0,y_1,0)\big\}
\end{aligned}\right\}.   
\end{align}
Suppose, towards a contradiction, that ${\bf y}$ forms a type (2-2) emergent context with ${\bf w}$. Then $\Vec y^0$ must be orthogonal to two vectors in $\mathtt{H}[\Vec w]$. A direct examination of the orthogonality equations $\langle \Vec y^0,\Vec w^\alpha\rangle =\langle \Vec y^0,\Vec w^\beta\rangle =0$ for distinct $\alpha,\beta\in{0,1,2,3}$, shows that either ${\bf y}\in\mathtt{H}[{\bf u}]$ or ${\bf v}\in\mathtt{H}[{\bf u}]$. Both possibilities contradict the assumption of pairwise Hamilton-distinctness. Hence no type (2-2) emergent context can exist between ${\bf y}$ and ${\bf w}$. By symmetry, the same conclusion holds for ${\bf y}$ and ${\bf x}$. This concludes the proof.
\end{proof}

\subsection{A-4: Proof of Proposition~\ref{prop1}}
\begin{proof}
We first establish the claim for a singleton set $\mathcal V=\{\vec v\}$. Using the quaternion representation above,
\begin{align}
\mathrm H[\mathrm H[\vec v]]\equiv\left\{\mathrm e_\alpha\mathrm e_\beta\theta(\vec v)\,:\,\alpha,\beta\in\{0,1,2,3\}\right\}.    
\end{align}
Since the quaternion basis is closed under multiplication up to sign, for every pair $\alpha,\beta\in\{0,1,2,3\}$ there exists $\gamma\in\{0,1,2,3\}$ such that $\mathrm e_\alpha\mathrm e_\beta=\pm \mathrm e_\gamma$. Hence,
\begin{align}
\mathrm H[\mathrm H[\vec v]]\equiv\left\{\pm \mathrm e_\gamma\theta(\vec v)\,:\,\gamma=0,1,2,3\right\}.   \end{align}
Throughout this work, vectors differing by an overall sign are considered identical, namely, $\vec v\sim -\vec v$. Consequently it follows that 
\begin{align}
\mathrm H[\mathrm H[\vec v]]=\mathrm H[\vec v].\label{single}    
\end{align}

\noindent Now let
$\mathcal V=\{\vec v_i\}_{i=1}^{n}\subseteq\mathbb R^{4}$. We first argue that $\mathrm H[\mathrm H[\mathcal V]\subseteq\mathrm H[\mathcal V]$. Let $w\in \mathrm H[\mathrm H[\mathcal V]]$. Then, by definition of Hamilton extension of a set, there exists some vector $u\in \mathrm H[\mathcal V]$ such that $w\in \mathrm H[u]$. Again, since $u\in \mathrm H[\mathcal V]:=\bigcup_{i=1}^{n}\mathrm H[\vec v_i]$, there exists some index $j\in\{1,\dots,n\}$ such that $u\in \mathrm H[\vec v_j]$. Therefore, $w\in \mathrm H[\mathrm H[\vec v_j]]$. Using Eq.~(\ref{single}) we have $w\in \mathrm H[\vec v_j]\subseteq\mathrm H[\mathcal V]$, which establishes $\mathrm H[\mathrm H[\mathcal V]]\subseteq\mathrm H[\mathcal V]$. On the other hand, $\vec v\in \mathrm H[\vec v],~\forall~\vec v\in\mathbb R^4$, thereby implying $\mathrm H[\mathcal V]\subseteq\mathrm H[\mathrm H[\mathcal V]]$. We therefore have 
\begin{align}
\mathrm H[\mathrm H[\mathcal V]]=\mathrm H[\mathcal V],~\text{for an arbitrary  set}~ \mathcal V\subseteq\mathbb{R}^4.   
\end{align}
This concludes the proof.
\end{proof}

\section{Appendix B: Theorem~\ref{theo1} and Theorem~\ref{theo2}}
\section{B-1. Proof of Theorem~\ref{theo1}}
\begin{proof}
Hamilton extension $\mathrm{H}\left[\mathcal{V}_8\right]$ of the set $\mathcal{V}_8\equiv\left\{\Vec{u}_i\right\}_{i=1}^8$ contains $32$ distinct vectors as listed in Table~\ref{tabs2}. One can construct $18$ measurement contexts from this set:
\begin{subequations}
\begin{align}
&~C_i\equiv\mathrm{H}\left[\Vec{u}_i\right]=\left\{{\bf u}^0_i,{\bf u}^1_i,{\bf u}^2_i,{\bf u}^3_i\right\},~~\text{for}~i\in\{1,2,\ldots,8\}~~\Big[\text{Hamilton context}\Big],\\
&\left.\begin{aligned}
&C^{01|01}_{\bf 12}=\left\{{\bf u}^0_1,{\bf u}^1_1,{\bf u}^0_2,{\bf u}^1_2\right\},~~C^{23|23}_{\bf 12}=\left\{{\bf u}^2_1,{\bf u}^3_1,{\bf u}^2_2,{\bf u}^3_2\right\};\\
&C^{02|02}_{\bf 13}=\left\{{\bf u}^0_1,{\bf u}^2_1,{\bf u}^0_3,{\bf u}^2_3\right\},~~C^{13|13}_{\bf 13}=\left\{{\bf u}^1_1,{\bf u}^3_1,{\bf u}^1_3,{\bf u}^3_3\right\};\\
&C^{03|03}_{\bf 14}=\left\{{\bf u}^0_1,{\bf u}^3_1,{\bf u}^0_4,{\bf u}^3_4\right\},~~C^{12|12}_{\bf 14}=\left\{{\bf u}^1_1,{\bf u}^2_1,{\bf u}^1_4,{\bf u}^2_4\right\}
\end{aligned}\right\}~\Big[\text{Type (2-2) context}\Big],\\
&\left.\begin{aligned}
&C^{1|0|2|3}_{\bf 5678}=\left\{{\bf u}^1_5,{\bf u}^0_6,{\bf u}^2_7,{\bf u}^3_8\right\},~~
C^{0|1|3|2}_{\bf 5678}=\left\{{\bf u}^0_5,{\bf u}^1_6,{\bf u}^3_7,{\bf u}^2_8\right\},\\
&C^{2|3|1|0}_{\bf 5678}=\left\{{\bf u}^2_5,{\bf u}^3_6,{\bf u}^1_7,{\bf u}^0_8\right\},~~
C^{3|2|0|1}_{\bf 5678}=\left\{{\bf u}^3_5,{\bf u}^2_6,{\bf u}^0_7,{\bf u}^1_8\right\}
\end{aligned}\right\}~\Big[\text{Type (1-1-1-1) context}\Big].
\end{align}
\end{subequations}

\begin{table}[h!]
\centering
\begin{tabular}{|c|c||c|c|c|c|}
\hline
$\#$ & $\Vec{u}_i$ & $\Vec{u}^0_i$ & $\Vec{u}^1_i$ & $\Vec{u}^2_i$ & $\Vec{u}^3_i$ \\ \hline
1 &~~~ $(0,0,0,1)$~~~ &~~~$(0,0,0,1)$~~~  &~~~ $(0,0,1,0)$ ~~~&~~~$(0,1,0,0)$ ~~~ &~~~$(1,0,0,0)$~~~  \\ \hline
2 & $(1,-1,0,0)$ &$(1,-1,0,0)$  &(1,1,0,0)  & $(0,0,1,1)$ &$(0,0,1,-1)$  \\ \hline
3 & $(1,0,-1,0)$ & $(1,0,-1,0)$ &$(0,1,0,-1)$  & $(1,0,1,0)$ &$(0,1,0,1)$  \\ \hline
4 & $(0,1,-1,0)$ & $(0,1,-1,0)$&$(1,0,0,1)$  &$(1,0,0,-1)$  &$(0,1,1,0)$    \\ \hline
5 & $(1,-1,1,0)$ &$(1,-1,1,0)$  &$(1,1,0,1)$  &$(-1,0,1,1)$  &$(0,1,1,-1)$  \\ \hline
6 & $(-1,1,1,0)$ & $(-1,1,1,0)$ & $(1,1,0,-1)$ & $(1,0,1,1)$ & $(0,1,-1,1)$ \\ \hline
7 & $(1,1,-1,0)$ &$(1,1,-1,0)$  & $(1,-1,0,1)$ & $(1,0,1,-1)$ &  $(0,1,1,1)$\\ \hline
8 & $(1,1,1,0)$ &$(1,1,1,0)$  &$(-1,1,0,1)$  &$(1,0,-1,1)$  &$(0,-1,1,1)$  \\ \hline
\end{tabular}
\caption{The set of vectors $\mathcal{V}_8=\left\{\Vec{u}_i\right\}_{i=1}^8\subset\mathbb{R}^4$ appearing in Theorem~\ref{theo1}, and their Hamilton extension $\mathrm{H}\left[\mathcal{V}_8\right]$.}
\label{tabs2}
\end{table}
\noindent For any KS coloring $\mu:\mathrm{H}\left[\mathcal{V}_{8}\right]\to\{0,1\}$, the completeness conditions demands exactly one vector in each context to be assigned value $1$, whereas the exclusivity condition demands no two orthogonal vectors to be assigned value $1$. It is to note that the construction $\mathrm{H}\left[\mathcal{V}_{8}\right]$ is symmetric with respect to the projectors $\{{\bf u}_1^0,{\bf u}_1^1,{\bf u}_1^2,{\bf u}_1^3\}$. Without loss of any generality, let assign $\mu({\bf u}_1^3)=1$ in context $C_1$. Consequently, we have 
\begin{align}
\mu({\bf u}_1^0)=\mu({\bf u}_1^1)=\mu({\bf u}_1^2)=0.   
\end{align}
Completeness requirement then demands exactly one of the projectors $\{{\bf u}_4^1,{\bf u}_4^2\}$ to be assigned value $1$ in context $C^{12|12}_{\bf 14}$. Similarly, exactly one of $\{{\bf u}_3^0,{\bf u}_3^2\}$ in context $C^{02|02}_{\bf 13}$ and exactly one of the vectors $\{{\bf u}_2^0,{\bf u}_2^1\}$ in context $C^{01|01}_{\bf 12}$ to be assigned value $1$, thereby leading to eight different possible assignments:
\begin{align}
\left.\begin{aligned}
\mu({\bf u}^1_4)=\mu({\bf u}^0_3)=\mu({\bf u}^0_2)=1;\qquad\mu({\bf u}^1_4)=\mu({\bf u}^0_3)=\mu({\bf u}^1_2)=1;\\
\mu({\bf u}^1_4)=\mu({\bf u}^2_3)=\mu({\bf u}^0_2)=1;\qquad\mu({\bf u}^1_4)=\mu({\bf u}^2_3)=\mu({\bf u}^1_2)=1;\\
\mu({\bf u}^2_4)=\mu({\bf u}^0_3)=\mu({\bf u}^0_2)=1;\qquad\mu({\bf u}^2_4)=\mu({\bf u}^0_3)=\mu({\bf u}^1_2)=1;\\
\mu({\bf u}^2_4)=\mu({\bf u}^2_3)=\mu({\bf u}^0_2)=1;\qquad\mu({\bf u}^2_4)=\mu({\bf u}^2_3)=\mu({\bf u}^1_2)=1;
\end{aligned}\right\}.
\end{align}
Consider the assignment $\mu({\bf u}_4^1)=\mu({\bf u}_3^0)=\mu({\bf u}_2^0)=\mu({\bf u}_1^3)=1$. As can be checked from Table~\ref{tabs2}, we have
\begin{align}
{\bf u}_4^1\perp {\bf u}_8^1,\quad {\bf u}_3^0\perp {\bf u}_6^2,\quad {\bf u}_2^0\perp {\bf u}_7^0,\quad {\bf u}_1^3\perp {\bf u}_5^3.
\end{align}
The exclusivity condition thus demands 
\begin{align}
\mu({\bf u}_8^1)=\mu({\bf u}_6^2)=\mu({\bf u}_7^0)=\mu({\bf u}_5^3)=0.
\end{align}
However, this violates the completeness condition when considering the emergent context $C^{3|2|0|1}_{\bf 5678}$. Similar contradiction arise for other possible assignments:
\begin{align}
\left.\begin{aligned}
&\Big\{\mu({\bf u}^1_4)=\mu({\bf u}^0_3)=\mu({\bf u}^0_2)=\mu({\bf u}_1^3)=1\Big\}\mathrlap{\longrightarrow}\times~~~C^{3|2|0|1}_{\bf 5678};\quad \Big\{\mu({\bf u}^1_4)=\mu({\bf u}^0_3)=\mu({\bf u}^1_2)=\mu({\bf u}_1^3)=1\Big\}\mathrlap{\longrightarrow}\times~~~ C_{\bf 6};\\
&\Big\{\mu({\bf u}^1_4)=\mu({\bf u}^2_3)=\mu({\bf u}^0_2)=\mu({\bf u}_1^3)=1\Big\}\mathrlap{\longrightarrow}\times~~~ C^{1|0|2|3}_{\bf 5678} ;\quad \Big\{\mu({\bf u}^1_4)=\mu({\bf u}^2_3)=\mu({\bf u}^1_2)=\mu({\bf u}_1^3)=1\Big\} \mathrlap{\longrightarrow}\times~~~ C^{0|1|3|2}_{\bf 5678} ;\\
&\Big\{\mu({\bf u}^2_4)=\mu({\bf u}^0_3)=\mu({\bf u}^0_2)=\mu({\bf u}_1^3)=1\Big\}\mathrlap{\longrightarrow}\times~~~ C_{\bf 8};~~~~\qquad \Big\{\mu({\bf u}^2_4)=\mu({\bf u}^0_3)=\mu({\bf u}^1_2)=\mu({\bf u}_1^3)=1\Big\}\mathrlap{\longrightarrow}\times~~~ C_{\bf 8};\\
&\Big\{\mu({\bf u}^2_4)=\mu({\bf u}^2_3)=\mu({\bf u}^0_2)=\mu({\bf u}_1^3)=1\Big\}\mathrlap{\longrightarrow}\times~~~ C^{2|3|1|0}_{\bf 5678};\quad\Big\{\mu({\bf u}^2_4)=\mu({\bf u}^2_3)=\mu({\bf u}^1_2)=\mu({\bf u}_1^3)=1\Big\}\mathrlap{\longrightarrow}\times~~~ C_{\bf 5};
\end{aligned}\right\},
\end{align}
thereby excluding all possible valid KS colorings. This completes the proof.
\end{proof}

\begin{figure*}[t!]
\centering
\begin{tabular}{cc}
\includegraphics[height=6cm]{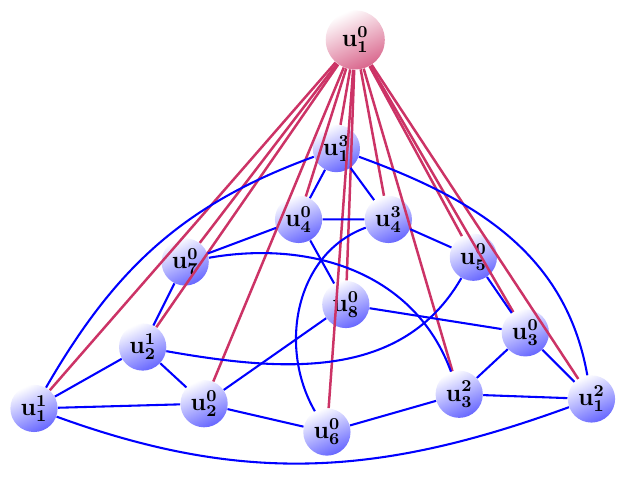}
~~
\includegraphics[height=6cm]{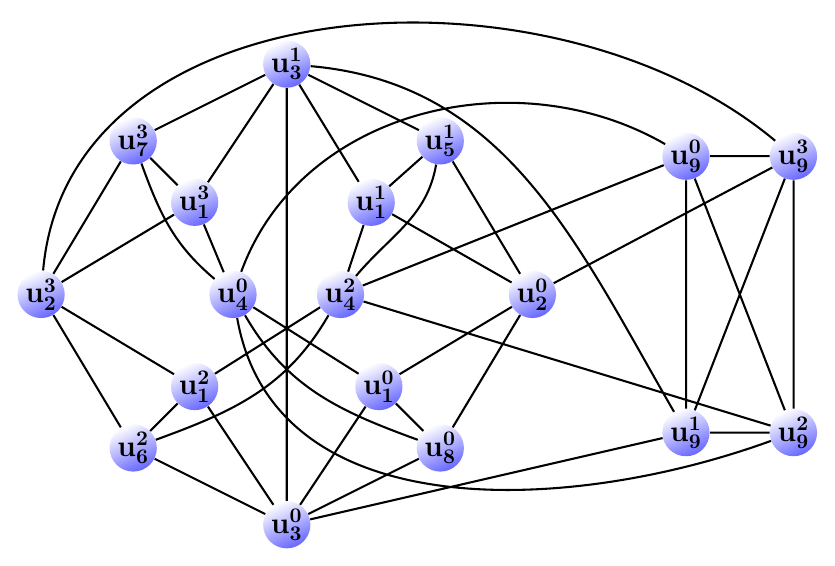}
\end{tabular}
\caption{(Color online) Left: Orthogonality graph of the set $\mathtt{YO_{14}}$. The red vertex (top) is the apex vertex. The graph obtained by removing the apex admits an orthogonal representation in $\mathbb{R}^{3}$, while the full graph admits one only in $\mathbb{R}^{4}$. Right: Orthogonality graph of the set $\mathtt{CN_{18}}$. For both graphs, the classical chromatic number exceeds the corresponding quantum chromatic number \cite{Cameron2007,Maninska2016}.} \vspace{-.25cm}
\label{figs3}
\end{figure*}

\begin{remark}
At this point, we note that an analogous argument applies when the parent set is chosen as the seven-vector set $\mathcal{V}_8\setminus\{\vec u\}$ with $\vec u\in\{\vec v_2,\vec v_3,\vec v_4\}$ (pointed out in Construction~\ref{const2} of the main manuscript). Specifically, one can show that the corresponding Hamilton extension $\mathrm{H}[\mathcal{V}_8\setminus\{\vec u\}]$ is KS-uncolorable. Consequently, the three parent sets $\mathcal{V}_8\setminus\{\vec u\}$, with $\vec u\in\{\vec v_2,\vec v_3,\vec v_4\}$, provide examples of irreducible seven-vector parent sets whose Hamilton extensions exhibit logical KS contextuality.     
\end{remark}

\subsection{B-2. Vector sets $\mathtt{YO_{14}}$ and $\mathtt{CN_{18}}$}

\noindent{\bf Vector sets $\mathtt{YO_{14}}$:} For a vector $\vec{v}=(a,b,c,d)\in\mathbb{R}^4$ let $\tilde{v}$ denotes a vector in $\mathbb{R}^3$ obtained by ignoring its fourth component, i.e. $\tilde{v}=(a,b,c)\in\mathbb{R}^3$. Then the $13$ vectors of Yu-Oh's construction \cite{Yu2012} are given by
\begin{align}
\mathtt{YO_{13}}\equiv
\Big\{\tilde{u}_1^1,~\tilde{u}_1^2,~\tilde{u}_1^3,~\tilde{u}_2^0,~\tilde{u}_2^1,~\tilde{u}_3^0,~\tilde{u}_3^2,~\tilde{u}_4^0,~\tilde{u}_4^3,~\tilde{u}_5^0,~\tilde{u}_6^0,~\tilde{u}_7^0,~\tilde{u}_8^0\Big\}\subset\mathbb{R}^3,
\end{align}
where $\vec{u}^\alpha_p$'s are the vectors appearing in $\mathrm{H}\left[\mathcal{V}_8\right]$. The $14$ vectors set $\mathtt{YO_{14}}$  \cite{Cabello2015,Maninska2016} is obtained by embedding the vectors of $\mathtt{YO_{13}}$ in $\mathbb{R}^4$ and adding an apex vertex. Accordingly the set is given by 
\begin{align}
\mathtt{YO_{14}}\equiv\Big\{\vec u_1^0,~\vec u_1^1,~\vec u_1^2,~\vec u_1^3,~\vec u_2^0,~\vec u_2^1,~\vec u_3^0,~\vec u_3^2,~\vec u_4^0,~\vec u_4^3,~\vec u_5^0,~\vec u_6^0,~\vec u_7^0,~\vec u_8^0\Big\}\subset\mathbb{R}^4.   
\end{align}
Since $\mathcal{V}_{8}\subset\mathtt{YO_{14}}$, and no vector in $\mathtt{YO_{14}}\setminus\mathcal{V}_{8}$ is Hamilton-distinct to all the vectors in $\mathcal{V}_{8}$, we thus have $\mathrm{H}[\mathtt{YO_{14}}]=\mathrm{H}[\mathcal{V}_{8}]$. As shown in \cite{Yu2012} the set $\mathtt{YO_{13}}$ exhibits SI-C. But in the set $\mathtt{YO_{14}}$ contextuality weakens to SD-C \cite{Cabello2015}. Nonetheless the Hamilton extended set $\mathrm{H}\left[\mathtt{YO_{14}}\right]=\mathrm{H}\left[\mathtt{YO_{13}}+1\right]$ exhibits logical KS contextuality. \\

\noindent{\bf Vector sets $\mathtt{CN_{18}}$:} The $18$-vector set $\mathtt{CN}_{18}$, appearing in the graph coloring game of Cameron and Newman \textit{et al.}~\cite{Cameron2007}, contains $14$ vectors from $\mathrm{H}\left[\mathcal{V}_8\right]$ along with $4$ additional vectors. The set is given by
\begin{align}
\mathtt{CN_{18}}\equiv\Big\{~\vec u_2^3,~\vec u_1^3,~\vec u_7^3,~\vec u_3^1,~\vec u_1^1,~\vec u_5^1,~\vec u_2^0,~\vec u_1^0,~\vec u_8^0,~\vec u_3^0,~\vec u_1^2,~\vec u_6^2,~\vec u_4^0,u_4^2,~\vec u_9^0,~\vec u_9^1,~\vec u_9^2,~\vec u_9^3\Big\}\subset\mathbb{R}^4,     
\end{align}
where $\vec u_9=(1,1,1,1)$ and $\mathrm{H}\left[\vec u_9\right]=\left\{\vec u_9^0,~\vec u_9^1,~\vec u_9^2,~\vec u_9^3\right\}$. Accordingly, we have 
\begin{align}
\mathrm{H}\left[\mathtt{CN_{18}}\right]=\mathrm{H}\left[\mathcal{V}^\star_{9}\right]=\mathrm{H}\left[\mathcal{V}_8\right]\cup\left\{\vec u_9^0,~\vec u_9^1,~\vec u_9^2,~\vec u_9^3\right\}, ~\text{where}~ \mathcal{V}^\star_{9}= \mathcal{V}_8\cup\left\{\vec u_9\right\}. 
\end{align}
Manifestly the set $\mathcal{V}^\star_{9}$ is not irreducible as Hamilton extension of its proper subset $\mathcal{V}_8$ exhibits logical KS contextuality.  At this point we would like to point out that the set $\mathtt{YO_{14}}$ is not contained in $\mathtt{CN_{18}}$, and they are structurally different. Nonetheless, both these sets containing $\mathcal{V}_8$ as a subset within them exhibit logical KS contextuality under Hamilton extension due to the ‘hereditary under supersets’ feature. For the sake of completeness we depict the orthogonality graphs of $\mathtt{YO_{14}}$ and $\mathtt{CN_{18}}$ in Fig.~\ref{figs3}.

\subsection{B-3. Four Cabello sets form  $\mathrm{H}\left[\mathcal{V}_{6}\right]$}
\noindent As pointed out in Fig.~\ref{fig2}, four distinct Cabello sets can be obtained from the vectors of $\mathrm{H}\left[\mathcal{V}_{6}\right]$ by removing different suitable subsets of six vectors. The resulting sets, together with their corresponding contexts, are listed in Table~\ref{tab2}.

\begin{table}[h!]
\begin{tabular}{|c|c|}
\hline
Cabello set: $\mathtt{C^{[i]}_{18}}$   &  Contexts \\ \hline
$\mathtt{H}\left[\mathcal{V}_6\right]\setminus\left\{\mathbf{u}_1^2,\mathbf{u}_2^1,\mathbf{u}_3^2,\mathbf{u}_4^1,
\mathbf{u}_5^2,\mathbf{u}_6^1\right\}$ 
&~$C^{03|03}_{\bf 12},C^{02|13}_{\bf 23},C^{03|03}_{\bf 34},C^{02|13}_{\bf 45},C^{03|03}_{\bf 56},C^{02|13}_{\bf 61},C^{01|23}_{\bf 14},C^{23|01}_{\bf 25},C^{01|23}_{\bf 36}$ \vspace{.1cm}\\\hline
$\mathtt{H}\left[\mathcal{V}_6\right]\setminus\left\{\mathbf{u}_1^1,\mathbf{u}_2^2,\mathbf{u}_3^1,\mathbf{u}_4^2,
\mathbf{u}_5^1,\mathbf{u}_6^2\right\}$ &~$C^{03|03}_{\bf 12},C^{13|02}_{\bf 23},C^{03|03}_{\bf 34},C^{13|02}_{\bf 45},C^{03|03}_{\bf 56},C^{13|02}_{\bf 61},C^{23|01}_{\bf 14},C^{01|23}_{\bf 25},C^{23|01}_{\bf 36}$ \vspace{.1cm}\\\hline
$\mathtt{H}\left[\mathcal{V}_6\right]\setminus\left\{\mathbf{u}_1^0,\mathbf{u}_2^3,\mathbf{u}_3^0,\mathbf{u}_4^3,
\mathbf{u}_5^0,\mathbf{u}_6^3\right\}$ &~$C^{12|12}_{\bf 12},C^{02|13}_{\bf 23},C^{12|12}_{\bf 34},C^{02|13}_{\bf 45},C^{12|12}_{\bf 56},C^{02|13}_{\bf 61},C^{23|01}_{\bf 14},C^{01|23}_{\bf 25},C^{23|01}_{\bf 36}$ \vspace{.1cm}\\\hline
$\mathtt{H}\left[\mathcal{V}_6\right]\setminus\left\{\mathbf{u}_1^3,\mathbf{u}_2^0,\mathbf{u}_3^3,\mathbf{u}_4^0,
\mathbf{u}_5^3,\mathbf{u}_6^0\right\}$ &~$C^{12|12}_{\bf 12},C^{13|02}_{\bf 23},C^{12|12}_{\bf 34},C^{13|02}_{\bf 45},C^{12|12}_{\bf 56},C^{13|02}_{\bf 61},C^{01|23}_{\bf 14},C^{23|01}_{\bf 25},C^{01|23}_{\bf 36}$ \vspace{.1cm}\\\hline
\end{tabular}
\caption{Four different Cabello sets $\mathtt{C^{[1]}_{18}}$, $\mathtt{C^{[2]}_{18}}$, $\mathtt{C^{[3]}_{18}}$, and $\mathtt{C^{[4]}_{18}}$ obtained from $\mathtt{P_{24}}=\mathtt{H}\left[\mathcal{V}_6\right]$. Left column lists the vectors and right column the contexts.}\label{tab2}
\end{table}
\noindent Notably, these sets are connected to each other through unitaries: 
\begin{subequations}
\begin{align}
&\mathtt{C^{[1]}_{18}}\xleftrightarrow{\mathrm U_1}\mathtt{C^{[2]}_{18}}~\&~\mathtt{C^{[3]}_{18}}\xleftrightarrow{\mathrm U_1}\mathtt{C^{[4]}_{18}};~~\mathtt{C^{[1]}_{18}}\xleftrightarrow{\mathrm U_2}\mathtt{C^{[4]}_{18}}~\&~\mathtt{C^{[2]}_{18}}\xleftrightarrow{\mathrm U_2}\mathtt{C^{[3]}_{18}};~~\mathtt{C^{[1]}_{18}}\xleftrightarrow{\mathrm U_3}\mathtt{C^{[3]}_{18}}~\&~\mathtt{C^{[2]}_{18}}\xleftrightarrow{\mathrm U_3}\mathtt{C^{[4]}_{18}};~~\text{where}\\
&\hspace{2cm}\scriptstyle \mathrm U_1:=
\begin{pmatrix}
0 &0 & 0 & -1\\
0 & 0 & -1 & 0\\
0 & 1 & 0 & 0\\
1 &0 & 0 & 0
\end{pmatrix};~~ \mathrm U_2:=
\begin{pmatrix}
0 & -1 & 0 & 0\\
1 & 0 & 0 & 0\\
0 & 0 & 0 & -1\\
0 & 0 & 1 & 0
\end{pmatrix};~~\mathrm U_3:=
\begin{pmatrix}
0 & 0 & -1 & 0\\
0 & 0 & 0 &1\\
1 & 0 & 0 & 0\\
0 & -1 & 0 &0
\end{pmatrix}.
\end{align}
\end{subequations}

\subsection{B-4. KS contextuality of the set $\mathrm{H}\left[\mathcal{V}^\prime_6\right]$ (Construction~\ref{const1})}

\begin{proposition}
The Hamilton extension $\mathrm{H}\!\left[\mathcal{V}^\prime_6\right]$ of the set
\begin{align*}
\mathcal{V}^\prime_6:=\left\{\begin{aligned}
&\vec u_1=(1,0,0,0);~~~~\vec u_2=(0,0,1,1);~~~~\vec u_3=(0,\sqrt{2},1,-1);\\
&\vec u_4=(0,\sqrt{2},-1,1);~\vec u_5=(0,\sqrt{2},1,1);~\vec u_6=(0,-\sqrt{2},1,1)
\end{aligned}\right\}
\end{align*}
is KS-uncolorable, thereby yielding a $24$-vector logical proof of the
Kochen--Specker theorem.
\end{proposition}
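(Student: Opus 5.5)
The plan mirrors the proof of Theorem~\ref{theo1}: exhibit enough measurement contexts inside $\mathrm{H}[\mathcal{V}^\prime_6]$, then show by a short case analysis that no $\{0,1\}$-assignment can place exactly one $1$ in each of them.

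\textbf{Step 1: tabulate the vectors.} Using Definition~\ref{def2}, I compute the $24$ vectors ${\bf u}^\alpha_i$ for $i=1,\dots,6$ and $\alpha=0,1,2,3$. For instance, $\mathrm H[\vec u_1]$ is the computational basis, and $\mathrm H[\vec u_2]=\{(0,0,1,1),(0,0,-1,1),(-1,1,0,0),(-1,-1,0,0)\}$. I also check that the six parents are pairwise Hamilton-distinct, so all $24$ rays are different.

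\textbf{Step 2: verify the contexts.} From the inner-product table of Lemma~\ref{t3l2} I verify the contexts listed in Fig.~\ref{fig2}:
\begin{itemize}
\item the six Hamilton contexts $C_i$;
\item the six type-(2-2) contexts $C^{01|01}_{\bf pq}$ and $C^{23|23}_{\bf pq}$ for ${\bf pq}\in\{{\bf 12},{\bf 34},{\bf 56}\}$;
\item the twelve type-(1-1-1-1) contexts $C^{\mu|\mu|\mu|\mu}_{\bf 1234}$, $C_{\bf 1256}^{\cdots}$ and $C_{\bf 3456}^{\cdots}$.
\end{itemize}
Lemmas~\ref{t3l5} and~\ref{t3l6} cut the work down. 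For each pair or quadruple it suffices to check one representative context; the partner (2-2) context and the remaining three (1-1-1-1) contexts then follow automatically.

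Completeness of this list is not needed. A KS coloring of $\mathrm{H}[\mathcal{V}^\prime_6]$ must in particular satisfy completeness on these $24$ contexts and exclusivity on the orthogonalities inside them. So it suffices to show that this sub-system already has no solution.

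\textbf{Step 3: a parity argument fails.} Each vector lies in exactly four of the listed contexts: its Hamilton context, one (2-2) context, and one (1-1-1-1) context from each of the two quadruples containing its parent. With $24$ contexts, a coloring would have to contain exactly $6$ ones. This is not a contradiction, so a case analysis is required.

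\textbf{Step 4: fix the first choice by symmetry.} I use the symmetry of the configuration under relabeling $\alpha$ (the permutations appearing in Lemma~\ref{t3l2}, realized by the left/right quaternion multiplications that fix $\mathcal{V}^\prime_6$ up to Hamilton equivalence). This lets me assume $\mu({\bf u}_1^0)=1$ in $C_1$, so ${\bf u}_1^{1,2,3}$ are $0$.

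\textbf{Step 5: propagate.} Propagation then runs as follows.
\begin{itemize}
\item The context $C^{01|01}_{\bf 12}$ forces $\mu({\bf u}_2^0)=\mu({\bf u}_2^1)=0$, so exactly one of ${\bf u}_2^2,{\bf u}_2^3$ is $1$.
\item $C^{0|0|0|0}_{\bf 1234}$ forces ${\bf u}_2^0,{\bf u}_3^0,{\bf u}_4^0$ to be $0$.
\item The $C_{\bf 1256}$ context containing ${\bf u}_1^0$ forces the corresponding children of $\vec u_2,\vec u_5,\vec u_6$ to be $0$.
\end{itemize}
I then branch on which of ${\bf u}_2^2,{\bf u}_2^3$ is $1$, and on the choice within the $\{3,4\}$ and $\{5,6\}$ pairs dictated by $C^{01|01}_{\bf 34}$, $C^{23|23}_{\bf 34}$, $C^{01|01}_{\bf 56}$ and $C^{23|23}_{\bf 56}$. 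This gives at most about eight branches, as in Theorem~\ref{theo1}. Each branch should be killed by a $C_{\bf 3456}$ or remaining $C_{\bf 1234}$/$C_{\bf 1256}$ context in which all four entries are forced to $0$, or by a Hamilton context $C_i$ that receives two $1$s.

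\textbf{Main obstacle.} The hardest part will be Step 5: organizing the branches so that each closes quickly, and justifying the symmetry reduction in Step 4 rigorously. If the symmetry claim proves awkward, I would drop it and run the analysis over all four choices of $\mu({\bf u}_1^\alpha)=1$. Since the problem is a finite $0/1$ system, it can also be confirmed by exhaustive computer search.
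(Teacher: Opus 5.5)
Your proposal is correct and is essentially the paper's argument. You fix $\mu({\bf u}_1^0)=1$ and then propagate completeness through the Hamilton and type (1-1-1-1) contexts. The reduction to $\mu({\bf u}_1^0)=1$ is legitimate: left multiplication by $\mathrm e_\mu$ is orthogonal and relabels every Hamilton context by one common permutation, and these permutations form a Klein four-group transitive on $\{0,1,2,3\}$, so no right multiplication is needed. The only difference is that the paper appeals to further unspecified symmetries to fix $\mu({\bf u}_2^2)=1$ and $\mu({\bf u}_3^1)=1$ and reaches the contradiction at $C_{\bf 5}$. Your eight branches instead close exactly as you anticipate: with $x=\mu({\bf u}_3^1)$ and $y=\mu({\bf u}_5^1)$, the four $C_{\bf 3456}$ contexts force both $x=y$ and $x+y=1$, whichever of ${\bf u}_2^2,{\bf u}_2^3$ is $1$. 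Your version is therefore, if anything, more self-contained.
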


\begin{proof}
Since the vectors in $\mathcal{V}'_6$ are mutually Hamilton distinct, the Hamilton extension
$\mathrm{H}\!\left[\mathcal{V}'_6\right]$ contains the six Hamilton contexts

\begin{subequations}
\begin{align}
&C_{\bf 1}^{0123}\equiv\mathrm{H}[{\bf u}_1]=\Big\{{\bf u}^0_1=[(1,0,0,0)],~{\bf u}^1_1=[(0,1,0,0)],~{\bf u}^2_1=[(0,0,1,0)],~{\bf u}^3_1=[(0,0,0,1)]\Big\},\\
&C_{\bf 2}^{0123}\equiv\mathrm{H}[{\bf u}_2]=\Big\{{\bf u}^0_2=[(0,0,1,1)],~{\bf u}^1_2=[(0,0,1,-1)],~{\bf u}^2_2=[(1,-1,0,0)],~{\bf u}^3_2=[(1,1,0,0)]\Big\},\\
&C_{\bf 3}^{0123}\equiv\mathrm{H}[{\bf u}_3]=\Big\{{\bf u}^0_3=[(0,\sqrt{2},1,-1)],~{\bf u}^1_3=[(-\sqrt{2},0,1,1)],~{\bf u}^2_3=[(1,1,0,\sqrt{2})],~{\bf u}^3_3=[(1,-1,\sqrt{2},0)]\Big\},\\
&C_{\bf 4}^{0123}\equiv\mathrm{H}[{\bf u}_4]=\Big\{{\bf u}^0_4=[(0,\sqrt{2},-1,1)],~{\bf u}^1_4=[(\sqrt{2},0,1,1)],~{\bf u}^2_4=[(1,1,0,-\sqrt{2})],~{\bf u}^3_4=[(-1,1,\sqrt{2},0)]\Big\},\\
&C_{\bf 5}^{0123}\equiv\mathrm{H}[{\bf u}_5]=\Big\{{\bf u}^0_5=[(0,\sqrt{2},1,1)],~{\bf u}^1_5=[(\sqrt{2},0,1,-1)],~{\bf u}^2_5=[(1,-1,0,\sqrt{2})],~{\bf u}^3_5=[(1,1,-\sqrt{2},0)]\Big\},\\
&C_{\bf 6}^{0123}\equiv\mathrm{H}[{\bf u}_6]=\Big\{{\bf u}^0_6=[(0,-\sqrt{2},1,1)],~{\bf u}^1_6=[(\sqrt{2},0,-1,1)],~{\bf u}^2_6=[(-1,1,0,\sqrt{2})],~{\bf u}^3_6=[(1,1,\sqrt{2},0)]\Big\}.
\end{align}
\end{subequations}

\noindent Besides these six Hamilton contexts, the extension contains six Type (2--2) emergent contexts and twelve Type (1--1--1--1) emergent contexts. The latter alone suffice to derive a contradiction:
\begin{align}
\left\{\begin{aligned}
&C_{\bf 1234}^{0|0|0|0},~
C_{\bf 1234}^{1|1|1|1},~
C_{\bf 1234}^{2|2|2|2},~
C_{\bf 1234}^{3|3|3|3},~
C_{\bf 1256}^{0|1|0|0},~
C_{\bf 1256}^{1|0|1|1},\\
&C_{\bf 1256}^{2|3|2|2},~
C_{\bf 1256}^{3|2|3|3},~
C_{\bf 3456}^{0|1|3|2},~
C_{\bf 3456}^{1|0|2|3},~
C_{\bf 3456}^{2|3|1|0},~
C_{\bf 3456}^{3|2|0|1}
\end{aligned}\right\}.
\end{align}

\noindent Assume, for contradiction, that there exists a KS coloring
$\mu:\mathrm{H}\!\left[\mathcal{V}'_6\right]\rightarrow\{0,1\}$. The completeness rule requires exactly one projector in every context to be assigned value $1$, while the exclusivity rule forbids two orthogonal projectors from simultaneously receiving value $1$. The hypergraph $\mathrm{H}[\mathcal{V}'_6]$ possesses  symmetries which map equivalent branches of the coloring problem onto one another. It therefore suffices to analyze a single representative branch.

\noindent Choose $\mu({\bf u}_1^0)=1$. Then
\begin{subequations}\label{eq_AC_1}
\begin{align}
&\mu({\bf u}_1^1)=\mu({\bf u}_1^2)=\mu({\bf u}_1^3)=0,
&&\text{from }C_{\bf 1}^{0123},\\
&\mu({\bf u}_2^0)=\mu({\bf u}_3^0)=\mu({\bf u}_4^0)=0,
&&\text{from }C_{\bf 1234}^{0|0|0|0},\\
&\mu({\bf u}_2^1)=\mu({\bf u}_5^0)=\mu({\bf u}_6^0)=0,
&&\text{from }C_{\bf 1256}^{0|1|0|0}.
\end{align}
\end{subequations}

\noindent Since $C_{\bf2}^{0123}$ must contain exactly one projector assigned value $1$, the relations $\mu({\bf u}_2^0)=\mu({\bf u}_2^1)=0$ force exactly one of ${\bf u}_2^2$ and ${\bf u}_2^3$ to receive value $1$. By symmetry, we may choose $\mu({\bf u}_2^2)=1$, giving
\begin{subequations}\label{eq_AC_2}
\begin{align}
&\mu({\bf u}_2^3)=0,
&&\text{from }C_{\bf 2}^{0123},\\
&\mu({\bf u}_3^2)=\mu({\bf u}_4^2)=0,
&&\text{from }C_{\bf 1234}^{2|2|2|2},\\
&\mu({\bf u}_5^3)=\mu({\bf u}_6^3)=0,
&&\text{from }C_{\bf 1256}^{3|2|3|3}.
\end{align}
\end{subequations}

\noindent Likewise, the context $C_{\bf3}^{0123}$ together with $\mu({\bf u}_3^0)=\mu({\bf u}_3^2)=0$ forces exactly one of ${\bf u}_3^1$ and ${\bf u}_3^3$ to receive value $1$. By symmetry, choose $\mu({\bf u}_3^1)=1$. Consequently,
\begin{subequations}\label{eq_AC_3}
\begin{align}
&\mu({\bf u}_3^3)=0,
&&\text{from }C_{\bf 3}^{0123},\\
&\mu({\bf u}_4^1)=0,
&&\text{from }C_{\bf 1234}^{1|1|1|1},\\
&\mu({\bf u}_5^2)=0,
&&\text{from }C_{\bf 3456}^{1|0|2|3}.
\end{align}
\end{subequations}

\noindent Since $\mu({\bf u}_4^0)=\mu({\bf u}_4^1)=\mu({\bf u}_4^2)=0$, the completeness condition on $C_{\bf4}^{0123}$ forces $\mu({\bf u}_4^3)=1$, which in turn implies
\begin{align}\label{eq_AC_4}
\mu({\bf u}_5^1)=0,
\qquad\text{from }C_{\bf 3456}^{2|3|1|0}.
\end{align}

\noindent Finally,
\begin{align}
\mu({\bf u}_5^0)=
\mu({\bf u}_5^1)=
\mu({\bf u}_5^2)=
\mu({\bf u}_5^3)=0,
\end{align}
where the first, second, third, and fourth equalities follow from
Eqs.~\eqref{eq_AC_1}, \eqref{eq_AC_4}, \eqref{eq_AC_3}, and \eqref{eq_AC_2}, respectively. This contradicts the completeness condition for the Hamilton context $C_{\bf5}^{0123}$, which requires exactly one projector to be assigned value $1$. Therefore no KS coloring of $\mathrm{H}\!\left[\mathcal{V}'_6\right]$ exists. Since all remaining branches are related to the one above by symmetry, the contradiction is unavoidable. Hence $\mathrm{H}\!\left[\mathcal{V}'_6\right]$ is KS-uncolorable.
\end{proof}

\noindent {\bf Structural differences between $\mathrm{H}\left[\mathcal{V}'_6\right]$ and the Peres-24 construction:} Despite having the same numbers of vectors and measurement contexts, $\mathrm{H}\left[\mathcal{V}'_6\right]$ and $\mathtt{P_{24}}$ (i.e., $\mathrm{H}\left[\mathcal{V}_6\right]$ of Theorem~\ref{theo2}) possess markedly different structural properties. 
\begin{itemize}
\item[(1)] While both constructions contain 24 measurement contexts, their structural decompositions are different. Specifically, $\mathtt{P_{24}}$ consists of 6 Hamilton contexts and 18 Type (2-2) contexts, whereas $\mathrm{H}\left[\mathcal{V}'_6\right]$ consists of 6 Hamilton contexts, 6 Type (2-2) contexts, and 12 Type (1-1-1-1) contexts.

\item[(2)] $\mathtt{P}_{24}$ contains 18-vector subsets (the Cabello sets) that are themselves logically KS contextual. Moreover, the logical contradiction for this subset follows directly from a parity argument. In contrast, we find no subset of $\mathrm{H}\left[\mathcal{V}'_6\right]$ that exhibits logical KS contextuality through a parity-based proof. Nonetheless, $\mathrm{H}\left[\mathcal{V}'_6\right]$ contains 21-vector subsets that exhibit logical KS contextuality. For instance, the subset $\mathrm{H}\left[\mathcal{V}'_6\right]\setminus\left\{{\bf u}^0_1,{\bf u}^0_3,{\bf u}^0_5\right\}$ depicts logical KS contextuality.  
\end{itemize}

\section{Appendix C: Proof of Theorem~\ref{theo3}}  

\noindent Before presenting the proof, here we first note down an elementary but essential fact. As already pointed out, KS value assignment for a set of vectors $\mathcal{V}\subset\mathbb{C}^d$ has an graph theoretic formulation in terms of constrained vertex coloring of the associated orthogonality graph $\mathtt{G}_{\mathcal{V}}=(\mathtt{V},\mathtt{E})$. 

\begin{fact}\label{fact1}
Let $\mathtt{G}_{\mathcal{V}}=(\mathtt{V},\mathtt{E})$ be the orthogonality graph of a set of vectors $\mathcal{V}\subset\mathbb{C}^d$, and let $\mathtt{G}'=(\mathtt{V},\mathtt{E}')$ be a graph on the same vertex set with $\mathtt{E}\subseteq\mathtt{E}'$. Then every KS coloring of $\mathtt{G}'$ is also a KS coloring of $\mathtt{G}_{\mathcal{V}}$. Hence, if $\mathtt{G}'$ is KS colorable, so is $\mathtt{G}_{\mathcal{V}}$, and therefore $\mathcal{V}$ admits a $\{0,1\}$ KS value assignment.
\end{fact}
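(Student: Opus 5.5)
The plan is to check that every constraint imposed on a $\{0,1\}$ assignment by $\mathtt{G}_{\mathcal{V}}$ is also imposed by $\mathtt{G}'$. Then any function $\mu:\mathtt{V}\to\{0,1\}$ that meets all of the constraints of $\mathtt{G}'$ meets those of $\mathtt{G}_{\mathcal{V}}$ as well. First I will fix the graph-theoretic meaning of a KS coloring for a graph on $\mathtt{V}$ in dimension $d$. Exclusivity requires $\mu(v)+\mu(w)\le 1$ for every edge $\{v,w\}$. Completeness requires $\sum_{v\in K}\mu(v)=1$ for every clique $K$ of size $d$. For $\mathtt{G}_{\mathcal{V}}$ this agrees with Definition~\ref{def1}. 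Two vectors are adjacent exactly when they are orthogonal. Moreover, $d$ pairwise orthogonal nonzero vectors in $\mathbb{C}^d$ form an orthogonal basis, so the $d$-cliques of $\mathtt{G}_{\mathcal{V}}$ are precisely the measurement contexts $\mathcal{C}(\mathcal{V})$.

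The argument then has two steps.
\begin{enumerate}
\item \textbf{Exclusivity.} Let $\{v,w\}\in\mathtt{E}$. Since $\mathtt{E}\subseteq\mathtt{E}'$, the pair $\{v,w\}$ is also an edge of $\mathtt{G}'$. A KS coloring of $\mathtt{G}'$ therefore already gives $\mu(v)+\mu(w)\le1$.
\item \textbf{Completeness.} Let $K=\{v_1,\dots,v_d\}$ be a context of $\mathcal{V}$. All of its pairs lie in $\mathtt{E}\subseteq\mathtt{E}'$, so $K$ is a $d$-clique of $\mathtt{G}'$. Completeness in $\mathtt{G}'$ then yields $\sum_i\mu(v_i)=1$.
\end{enumerate}
Hence $\mu$ is a KS coloring of $\mathtt{G}_{\mathcal{V}}$. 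Composing $\mu$ with the bijection between vertices and rays, ${\bf v}\mapsto\mu(v)$, gives a $\{0,1\}$ assignment on $\mathrm{P}_{\mathcal{V}}$ that satisfies both conditions of Definition~\ref{def1}. The second claim of the Fact is the contrapositive-free restatement of the same implication.

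The main obstacle is conceptual rather than computational. $\mathtt{G}'$ need not be an orthogonality graph, so one must state explicitly which completeness constraints are attached to it. If only some distinguished family of $d$-cliques of $\mathtt{G}'$ were constrained, the inclusion argument in the completeness step could fail. I will therefore define KS colorability of an arbitrary graph by requiring completeness on all of its $d$-cliques. With this definition the monotonicity is immediate.

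It will also be worth noting how this is used later. In the proof of Theorem~\ref{theo3} one adds edges to $\mathtt{G}_{\mathrm{H}[\mathcal{V}_5]}$, for instance to realize the maximal emergent-context patterns allowed by Lemmas~\ref{t3l3}–\ref{t3l10}, and exhibits a coloring of the enlarged graph. In that application, the cliques of $\mathtt{G}'$ for which the coloring is verified must include all cliques of $\mathtt{G}_{\mathcal{V}}$, which holds automatically by the completeness step.
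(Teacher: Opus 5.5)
Your proof is correct and is the constraint-inclusion argument the paper relies on; the paper states the Fact as elementary and gives no written proof. Every edge and every $d$-clique of $\mathtt{G}_{\mathcal{V}}$ survives in $\mathtt{G}'$, so a coloring satisfying the constraints of $\mathtt{G}'$ satisfies those of $\mathtt{G}_{\mathcal{V}}$. Your convention of imposing completeness on all $d$-cliques of $\mathtt{G}'$, rather than on maximal cliques or on a chosen list of contexts, is what makes the statement well-posed; accordingly, when the Fact is invoked in the proof of Theorem~\ref{theo3}, the exhibited coloring must be checked on all edges and all $d$-cliques of the enlarged graph, not only on the listed contexts.
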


\noindent\underline{\bf Proof of Theorem~3}

\begin{proof}
Let $\mathcal{V}_{5}=\{{\bf A},{\bf B},{\bf C},{\bf D},{\bf E}\}\subset\mathbb{RP}^{3}$, 
with all projectors mutually Hamilton-distinct. Consequently, $\bigl|\mathtt{H}\left[\mathcal{V}_{5}\right]\bigr|=20$, where, for each ${\bf X}\in\{{\bf A},{\bf B},{\bf C},{\bf D},{\bf E}\}$, the Hamilton context is $C_{\bf X}^{0123}:=\mathtt{H}[{\bf X}]
=\left\{{\bf X}^{0},{\bf X}^{1},{\bf X}^{2},{\bf X}^{3}\right\}$. Consider a KS coloring $\mu:\mathtt{H}[\mathcal{V}_{5}]\rightarrow\{0,1\}$. For brevity, we use the following notations
\begin{align*}
\left.\begin{aligned}
C_{\bf X}^{01\ko{2}3}:=\left\{{\bf X}^{0},{\bf X}^{1},\ko{{\bf X}}^{\color{blue}{2}},{\bf X}^{3}\right\};~&\text{meaning that }\mu\left({\bf X}^{2}\right)=1\text{ and } \mu\left({\bf X}^{0}\right)=\mu\left({\bf X}^{1}\right)=\mu\left({\bf X}^{3}\right)=0\\[-2mm]
&\text{in the Hamilton context } C_{\bf X}^{0123},\\
C_{\bf UV}^{\ko{\alpha}\beta|\gamma\delta}
:=\left\{\ko{{\bf U}}^{\color{blue}{\alpha}},{\bf U}^{\beta},
{\bf V}^{\gamma},{\bf V}^{\delta}\right\};~&\text{meaning that }\mu\left({\bf U}^{\alpha}\right)=1 \text{ and all other projectors in type }\\[-2mm]
&\text{(2-2) emergent context }C_{\bf UV}^{\alpha\beta|\gamma\delta}\text{ are assigned }0,\text{ and}\\
C_{\bf PQRS}^{\alpha|\beta|\gamma|\ko{\delta}}
:=\left\{{\bf P}^\alpha,{\bf Q}^{\beta},{\bf R}^{\gamma},\ko{\bf S}^{\color{blue}{\delta}}\right\};~&\text{meaning that }\mu\left({\bf S}^{\delta}\right)=1 \text{ and all other projectors in type }\\[-2mm]
&\text{(1-1-1-1) emergent context }C_{\bf PQRS}^{\alpha|\beta|\gamma|\delta}\text{ are assigned }0.
\end{aligned}\right\}.
\end{align*}
We classify all possible emergent contexts that can occur in $\mathtt{H}[\mathcal{V}_5]$ and show that every such configuration admits a KS coloring. The proof proceeds by considering the following two cases: (I) the emergent contexts are exclusively of type (2-2); and (II) both type (1-1-1-1) and type (2-2) emergent contexts are present. 

\medskip
\noindent {\bf (I)} {\it Emergent contexts are exclusively of type (2-2):} Since the Hamilton extension is invariant under relabeling of the projectors, we may, without loss of generality, choose ${\bf A}$ as a reference projector. We therefore consider the type (2-2) emergent contexts involving ${\bf A}$ and ${\bf X}\in\{{\bf B},{\bf C},{\bf D},{\bf E}\}$ to be of the form $C_{\bf AX}^{\mu\nu|\alpha\beta}$. Furthermore, since the labels within each Hamilton context may be permuted arbitrarily, we may assume without loss of generality that $\alpha,\beta=0,1$. The possible choices of $(\mu,\nu)$ lead, up to relabeling, to four inequivalent configurations:
\begin{itemize}[leftmargin=2.5cm]
\item[{\bf Case-1.}] $\left\{C_{\bf AB}^{01|01},~C_{\bf AC}^{01|01},~C_{\bf AD}^{01|01},~C_{\bf AE}^{01|01}\right\}$;\qquad {\bf Case-2.} $\left\{C_{\bf AB}^{01|01},~C_{\bf AC}^{01|01},~C_{\bf AD}^{01|01},~C_{\bf AE}^{02|01}\right\}$;
\item[{\bf Case-3.}] $\left\{C_{\bf AB}^{01|01},~C_{\bf AC}^{01|01},~C_{\bf AD}^{02|01},~C_{\bf AE}^{02|01}\right\}$;\qquad {\bf Case-4.} $\left\{C_{\bf AB}^{01|01},~C_{\bf AC}^{01|01},~C_{\bf AD}^{02|01},~C_{\bf AE}^{03|01}\right\}$.
\end{itemize}

\noindent {\bf Case-1.} In this case, by Lemma~\ref{t3l5}, two type (2-2) emergent contexts arise between every pair of Hamilton contexts associated with ${\bf B}$, ${\bf C}$, ${\bf D}$, and ${\bf E}$. Observation~\ref{t3o1} uniquely fixes the form of these contexts. The $20$-projector configuration thus consists of $20$ type (2-2) emergent contexts and five Hamilton contexts, and admits a valid KS coloring
\begin{align*}
\left\{\begin{aligned}
&C_{\bf A}^{\ko{0}123},C_{\bf B}^{01\ko{2}3},C_{\bf C}^{01\ko{2}3},C_{\bf D}^{01\ko{2}3},C_{\bf E}^{01\ko{2}3},C_{\bf AB}^{\ko{0}1|01},C_{\bf AB}^{23|\ko{2}3},C_{\bf AC}^{\ko{0}1|01},C_{\bf AC}^{23|\ko{2}3},C_{\bf AD}^{\ko{0}1|01},C_{\bf AD}^{23|\ko{2}3},C_{\bf AE}^{\ko{0}1|01},C_{\bf AE}^{23|\ko{2}3},\\
&~~~C_{\bf BC}^{01|\ko{2}3},C_{\bf BC}^{\ko{2}3|01},C_{\bf BD}^{01|\ko{2}3},C_{\bf BD}^{\ko{2}3|01},C_{\bf BE}^{01|\ko{2}3},C_{\bf BE}^{\ko{2}3|01},C_{\bf CD}^{01|\ko{2}3},C_{\bf CD}^{\ko{2}3|01},C_{\bf CE}^{01|\ko{2}3},C_{\bf CE}^{\ko{2}3|01},C_{\bf DE}^{01|\ko{2}3},C_{\bf DE}^{\ko{2}3|01}
\end{aligned}\right\}.
\end{align*}

\noindent{\bf Case-2.} In this case, Observation~\ref{t3o1} uniquely determines the two type (2-2) emergent contexts between each of the pairs $({\bf B},{\bf C})$, $({\bf B},{\bf D})$, and $({\bf C},{\bf D})$. Then we can parameterize the above orthogonality relation as follows:

\begin{align*}
\left.\begin{aligned}
\mathrm{H}[{\bf A}]&\equiv\big\{[(1,0,0,0)],~[(0,1,0,0)],~[(0,0,1,0)],~[(0,0,0,1)]\big\},\\
\mathrm{H}[{\bf B}]&\equiv\big\{[(0,0,b_2,b_3)],~[(0,0,-b_3,b_2)],~[(-b_2,b_3,0,0)],~[(b_3,b_2,0,0)]\big\},\\
\mathrm{H}[{\bf C}]&\equiv\big\{[(0,0,c_2,c_3)],~[(0,0,c_3,-c_2)],~[(-c_2,c_3,0,0)],~[(c_3,c_2,0,0)]\big\},\\
\mathrm{H}[{\bf D}]&\equiv\big\{[(0,0,d_2,d_3)],~[(0,0,d_3,-d_2)],~[(-d_2,d_3,0,0)],~[(d_3,d_2,0,0)]\big\},\\
\mathrm{H}[{\bf E}]&\equiv\big\{[(0,e_1,0,e_3)],~[(e_1,0,e_3,0)],~[(0,e_3,0,-e_1)],~[(-e_3,0,e_1,0)]\big\}
\end{aligned}\right\}.
\end{align*}

\noindent This parameterization ensures that there cannot be a type (2-2) emergent context between $\{\bf B, C, D\}$ and $\bf E$.
Together with the five Hamilton contexts, they define a $20$-projector configuration that is KS colorable. An explicit KS coloring is provided below:
\begin{align*}
\left\{\begin{aligned}
&C^{\ko{0}123}_{\bf A},C^{01\ko{2}3}_{\bf B},C^{012\ko{3}}_{\bf C},C^{01\ko{2}3}_{\bf D},C^{012\ko{3}}_{\bf E},C_{\bf AB}^{\ko{0}1|01},C_{\bf AB}^{23|\ko{2}3},C_{\bf AC}^{\ko{0}1|01},C_{\bf AC}^{23|2\ko{3}},C_{\bf AD}^{\ko{0}1|01},\\
&~~C_{\bf AD}^{23|\ko{2}3},C_{\bf AE}^{\ko{0}2|01},C_{\bf AE}^{13|2\ko{3}},
C_{\bf BC}^{01|2\ko{3}},C_{\bf BC}^{\ko{2}3|01},C_{\bf BD}^{01|\ko{2}3},C_{\bf BD}^{\ko{2}3|01},C_{\bf CD}^{01|\ko{2}3},C_{\bf CD}^{2\ko{3}|01}
\end{aligned}\right\}.
\end{align*}

\noindent{\bf Case-3.} In this case, Observation~\ref{t3o1} uniquely determines the two type (2-2) emergent contexts corresponding to the pairs $({\bf B},{\bf C})$ and $({\bf D},{\bf E})$.Then we can parameterize the above orthogonality relation as follows:

\begin{align*}
\left.\begin{aligned}
\mathrm{H}[{\bf A}]&\equiv\big\{[(1,0,0,0)],~[(0,1,0,0)],~[(0,0,1,0)],~[(0,0,0,1)]\big\},\\
\mathrm{H}[{\bf B}]&\equiv\big\{[(0,0,b_2,b_3)],~[(0,0,-b_3,b_2)],~[(-b_2,b_3,0,0)],~[(b_3,b_2,0,0)]\big\},\\
\mathrm{H}[{\bf C}]&\equiv\big\{[(0,0,c_2,c_3)],~[(0,0,c_3,-c_2)],~[(-c_2,c_3,0,0)],~[(c_3,c_2,0,0)]\big\},\\
\mathrm{H}[{\bf D}]&\equiv\big\{[(0,d_1,0,d_3)],~[(d_1,0,d_3,0)],~[(0,d_3,0,-d_1)],~[(-d_3,0,d_1,0)]\big\},\\
\mathrm{H}[{\bf E}]&\equiv\big\{[(0,e_1,0,e_3)],~[(e_1,0,e_3,0)],~[(0,e_3,0,-e_1)],~[(-e_3,0,e_1,0)]\big\}
\end{aligned}\right\}.
\end{align*}

\noindent This parameterization ensures that there cannot be a type (2-2) emergent context between $\{\bf B, C\}$ and $\{\bf D,E\}$.
Together with the five Hamilton contexts, they define a $20$-projector configuration that is KS colorable. An explicit KS coloring is provided below:
\begin{align*}
\left\{\begin{aligned}
&C^{\ko{0}123}_{\bf A},C^{01\ko{2}3}_{\bf B},C^{01\ko{2}3}_{\bf C},C^{012\ko{3}}_{\bf D},C^{012\ko{3}}_{\bf E},C_{\bf AB}^{\ko{0}1|01},C_{\bf AB}^{23|\ko{2}3},C_{\bf AC}^{\ko{0}1|01},C_{\bf AC}^{23|\ko{2}3},\\
&~~C_{\bf AD}^{\ko{0}2|01},C_{\bf AD}^{23|2\ko{3}},C_{\bf AE}^{\ko{0}2|01},C_{\bf AE}^{13|2\ko{3}},C_{\bf BC}^{01|\ko{2}3},C_{\bf BC}^{\ko{2}3|01},C_{\bf DE}^{01|2\ko{3}},C_{\bf DE}^{2\ko{3}|01}
\end{aligned}\right\}.
\end{align*}

\noindent{\bf Case-4.}
In this case, $\mathbb{R}^4$ representation of $\mathrm{H}[{\bf U}]\equiv\left\{{\bf U}^0=[\vec u^0],{\bf U}^1=[\vec u^1],{\bf U}^2=[\vec u^2],{\bf U}^3=[\vec u^3]\right\}$ for  ${\bf U}\in\left\{\mathbf{A},\mathbf{B},\mathbf{C},\mathbf{D},\mathbf{E}\right\}$ enforces their forms to be 
\begin{align*}
\left.\begin{aligned}
\mathrm{H}[{\bf A}]&\equiv\big\{[(1,0,0,0)],~[(0,1,0,0)],~[(0,0,1,0)],~[(0,0,0,1)]\big\},\\
\mathrm{H}[{\bf B}]&\equiv\big\{[(0,0,b_2,b_3)],~[(0,0,-b_3,b_2)],~[(-b_2,b_3,0,0)],~[(b_3,b_2,0,0)]\big\},\\
\mathrm{H}[{\bf C}]&\equiv\big\{[(0,0,c_2,c_3)],~[(0,0,c_3,-c_2)],~[(-c_2,c_3,0,0)],~[(c_3,c_2,0,0)]\big\},\\
\mathrm{H}[{\bf D}]&\equiv\big\{[(0,d_1,0,d_3)],~[(d_1,0,d_3,0)],~[(0,d_3,0,-d_1)],~[(-d_3,0,d_1,0)]\big\},\\
\mathrm{H}[{\bf E}]&\equiv\big\{[(0,e_1,e_2,0)],~[(-e_1,0,0,e_2)],~[(e_2,0,0,e_1)],~[(0,e_2,-e_1,0)]\big\}
\end{aligned}\right\}.
\end{align*}
This parametrization satisfy all orthogonality relations and uniquely determined (2-2) context between $(\mathbf{B},\mathbf{C})$. However, no (2-2) context can be formed between any of the pairs $(\mathbf{B},\mathbf{D})$, $(\mathbf{B},\mathbf{E})$, $(\mathbf{C},\mathbf{D})$, $(\mathbf{C},\mathbf{E})$, or $(\mathbf{D},\mathbf{E})$. Together with the five Hamilton contexts, they define a $20$-projector configuration that is KS colorable. An explicit KS coloring is provided below:
\begin{align*}
\left\{\begin{aligned}
&C^{\ko{0}123}_{\bf A},C^{01\ko{2}3}_{\bf B},C^{01\ko{2}3}_{\bf C},C^{012\ko{3}}_{\bf D},C^{012\ko{3}}_{\bf E},C_{\bf AB}^{\ko{0}1|01},C_{\bf AB}^{23|\ko{2}3},C_{\bf AC}^{\ko{0}1|01},C_{\bf AC}^{23|\ko{2}3},\\
&~~\hspace{1cm}C_{\bf AD}^{\ko{0}2|01},C_{\bf AD}^{23|2\ko{3}},C_{\bf AE}^{\ko{0}3|01},C_{\bf AE}^{12|2\ko{3}},C_{\bf BC}^{01|\ko{2}3},C_{\bf BC}^{\ko{2}3|01}
\end{aligned}\right\}.
\end{align*}

\medskip
\noindent {\bf (II)} {\it Both type (1-1-1-1) and type (2-2) emergent contexts are present:} Without loss of generality, assume that the type (1-1-1-1) context $C^{0|0|0|0}_{\bf ABCD}$ is there. We distinguish the following two cases. Any other configuration is equivalent to one of these under relabeling and/or corresponds to a subgraph obtained by deleting orthogonality relations.

\medskip
\noindent{\bf Case-1:} The set of contexts is given by 
\begin{align*}
\left\{\begin{aligned}
&C^{\ko{0}123}_{\bf A},C^{01\ko{2}3}_{\bf B},C^{0\ko{1}23}_{\bf C},C^{012\ko{3}}_{\bf D},C^{01\ko{2}3}_{\bf E},C^{\ko{0}|0|0|0}_{\bf ABCD},C^{1|1|\ko{1}|1}_{\bf ABCD},C^{2|\ko{2}|2|2}_{\bf ABCD},C^{3|3|3|\ko{3}}_{\bf ABCD},\\
&\hspace{1cm}C^{\ko{0}1|01}_{\bf AB},C^{23|\ko{2}3}_{\bf AB},C^{0\ko{1}|01}_{\bf CD},C^{23|2\ko{3}}_{\bf CD},C^{\ko{0}1|01}_{\bf AE},C^{23|\ko{2}3}_{\bf AE},C^{01|\ko{2}3}_{\bf BE},C^{\ko{2}3|01}_{\bf BE}
\end{aligned}\right\}.
\end{align*}
Lemma~\ref{t3l8} guarantees that no further orthogonality relations can arise among the corresponding projectors. A valid KS-coloring of this vector set is shown. Consequently, any configuration obtained from it by relabeling vertices or by removing orthogonality relations is also KS-colorable (due to Fact.\ref{fact1}).

\medskip
\noindent{\bf Case 2:} In this case, assume that $\Vec y^\mu \perp \Vec w^\mu,~~
\mu\in\{0,1,2,3\}$. The resulting set of contexts is
\begin{align*}
\left\{\begin{aligned}
&C^{\ko{0}123}_{\bf A},C^{01\ko{2}3}_{\bf B},C^{0\ko{1}23}_{\bf C},C^{012\ko{3}}_{\bf D},C^{01\ko{2}3}_{\bf E},C^{\ko{0}|0|0|0}_{\bf ABCD},C^{1|1|\ko{1}|1}_{\bf ABCD},C^{2|\ko{2}|2|2}_{\bf ABCD},C^{3|3|3|\ko{3}}_{\bf ABCD},\\
&\hspace{3.5cm}C^{\ko{0}1|01}_{\bf AB},C^{23|\ko{2}3}_{\bf AB},C^{0\ko{1}|01}_{\bf CD},C^{23|2\ko{3}}_{\bf CD}
\end{aligned}\right\}.
\end{align*}
Lemma~\ref{t3l10} guarantees that no additional orthogonality relations can occur among the corresponding projectors. As a valid KS-coloring is already shown, therefore every graph obtained from this configuration by relabeling vertices and/or deleting orthogonality relations is likewise KS-colorable.

\medskip
\noindent Together, these cases exhaust all possible contextual configurations generated by five Hamilton-distinct projectors. As each resulting structure is KS-colorable, the claim follows.   
\end{proof}

\end{document}